\newtheorem{theorem}{Theorem}
\newtheorem{proposition}[theorem]{Proposition}
\newtheorem{lemma}[theorem]{Lemma}
\newtheorem{corollary}[theorem]{Corollary}
\newtheorem{example}[theorem]{Example}
\numberwithin{equation}{section}
\numberwithin{theorem}{section}
\DeclareMathOperator{\D}{D}
\DeclareMathOperator{\res}{res}
\begin{document}

\title{Lagrangian multiforms for Kadomtsev-Petviashvili (KP) and the Gelfand-Dickey hierarchy}

\author{Duncan Sleigh, Frank Nijhoff and Vincent Caudrelier\\ 
School of Mathematics, University of Leeds}

\maketitle

\abstract{We present, for the first time, a Lagrangian multiform for the complete Kadomtsev-Petviashvili (KP) hierarchy -- a single variational object that generates the whole hierarchy and encapsulates its integrability.  By performing a reduction on this Lagrangian multiform, we also obtain Lagrangian multiforms for the Gelfand-Dickey hierarchy of hierarchies, comprising, amongst others, the Korteweg-de Vries and Boussinesq hierarchies.}

\section{Introduction}

A feature of integrable systems is the existence of hierarchies of mutually compatible equations.  A significant limitation of using traditional Lagrangians for such hierarchies is that they do not capture this compatibility. This limitation was overcome by the Lagrangian multiform \cite{Lobb2009}, which allows compatible Lagrangians (i.e., Lagrangians of compatible equations) to be combined into a single variational object.  In recent years, numerous examples of Lagrangian multiforms for continuous one and two dimensional integrable hierarchies have been found (e.g. Calogero-Moser \cite{YooKong2011}, Toda \cite{Suris2017}, potential KdV \cite{Suris2016} and AKNS \cite{Sleigh2019,Vermeeren2020,ThisPaper,Caud2020}).  It is natural to expect that there should exist a Lagrangian multiform for the most well known three dimensional integrable hierarchy, the Kadomtsev-Petviashvili (KP) hierarchy \cite{KP1970,Sato1981}.  A Lagrangian multiform for the discrete KP hierarchy (the first example of a Lagrangian 3-form) was given in \cite{Quispel2009}, whilst a Lagrangian multiform for the first two flows of the continuous KP hierarchy was presented in \cite{ThisPaper}.  This continuous KP Lagrangian multiform was limited in the sense that extending it to contain higher flows of the hierarchy would result in non-local terms in the multiform, and also there was no algorithmic method to perform such an extension.\\

In \cite{Dickey1987}, Dickey gives a family of Lagrangians in terms of pseudodifferential operators for the individual equations of the KP hierarchy.  In this paper we assemble Dickey's KP Lagrangians, along with a new set of Lagrangians to create Lagrangian multiform for the full KP hierarchy.  This is the first ever example of a continuous Lagrangian 3-form for a complete integrable hierarchy.  Then, based on the reduction of KP to the Gelfand-Dickey hierarchy, we perform a reduction on the KP Lagrangian multiform to obtain Lagrangian multiforms for each of the integrable hierarchies that comprise the Gelfand-Dickey hierarchy. \\

We begin by giving a brief introduction to Lagrangian multiforms in Section \ref{LMintro} and then summarise key results relating to pseudodifferential operators in Section \ref{PDOintro}.  In Section \ref{KPGD} we introduce the KP hierarchy in terms of pseudodifferential operators, and also its reduction to the Gelfand-Dickey hierarchy.  In Section \ref{KPLagSect} we introduce Dickey's KP Lagrangian.  Our main result, a Lagrangian multiform for the KP hierarchy is given in Section \ref{KPMFSect}, followed by its reduction to Gelfand-Dickey in Section \ref{GDmfred}.

\subsection{Lagrangian multiforms}\label{LMintro}

Lagrangian multiforms were first conceived of in \cite{Lobb2009} to allow a variational description of compatible systems of equations, and have subsequently generated considerable research interest. The traditional variational approach involves a Lagrangian that is a volume form, i.e.,

\begin{equation}
\mathscr{L}(x,u^{(n)})\textsf{d}x_1\wedge\ldots\wedge \textsf{d}x_k,
\end{equation}
on a $k$-dimensional base manifold.  We use the notation $u^{(n)}$ to represent $u$ and its derivatives with respect to the independent variables $x_i$, up to the $n^{th}$ order.  This can only give as many equations of motion as there are components of $u$.  A \textit{Lagrangian multiform} 

\begin{equation}\label{form}
\textsf{M}=\sum_{1\leq i_1<\ldots <i_k\leq N}\mathscr{L}_{(i_1\ldots i_k)}(x,u^{(n)})\ \textsf{d}x_{i_1}\wedge\ldots\wedge \textsf{d}x_{i_k}.
\end{equation}
is a $k$-form in an $N$ dimensional base manifold with $k<N$, subject to the following variational principle.  We require that any $u$ that is a critical point of the action

\begin{equation}
S[u;\sigma]=\int_{\sigma}\textsf{M}(x,u^{(n)})
\end{equation}
must be a critical point for all $k$-dimensional surfaces of integration $\sigma$.  This results in the \textit{multiform Euler-Lagrange equations}, given by $\delta \textsf{dM}=0$.  Furthermore, we require that, on the equations given by $\delta \textsf{dM}=0$, any interior deformation of the surface $\sigma$ must leave the critical action $S$ unchanged (i.e., that on the equations defined by $\delta \textsf{dM}=0$, we require that $ \textsf{dM}=0$).  
The multiform Euler-Lagrange equations can also be presented as a set of equations in terms of variational derivatives of the $\mathscr{L}_{(i_1\ldots i_k)}$ that includes the usual Euler-Lagrange equations of each $\mathscr{L}_{(i_1\ldots i_k)}$.  In \cite{Vermeeren2018} and \cite{ThisPaper}, proofs are given that show the equivalence of these two presentations of the multiform Euler-Lagrange equations.  In Appendix \ref{MFEL} we go further and show explicitly the link between these two presentations of the multiform Euler-Lagrange equations.

We shall use the convention that Lagrangians $\mathscr{L}_{(i\ldots j)}$ are anti-symmetric when permuting the sub-indices so, for example, $\mathscr{L}_{(123)}=\mathscr{L}_{(312)}=-\mathscr{L}_{(132)}$.

\subsection{Pseudodifferential operators}\label{PDOintro}

The main results in this paper require the use of pseudodifferential operators.  Here we give a brief summary based on \cite[Chapter~1]{DickeyBook} and the references therein.  We introduce the differential algebra $\mathcal{A}$ with generators $u_1,u_2,u_3,\ldots$ and derivation  $\D_x$, the total derivative with respect to x, such that $\D_x u_\alpha^{(i)} =(u_\alpha^{(i)})_x=u_\alpha^{(i+1)}$, where $u_\alpha^{(0)}=u_\alpha$.  Also, $\D_x$ obeys the Leibnitz rule $\D_xu_\alpha^{(i)}u_\beta^{(j)}=u_\alpha^{(i+1)}u_\beta^{(j)}+u_\alpha^{(i)}u_\beta^{(j+1)}$.  Elements of $\mathcal{A}$ are polynomials with real or complex coefficients in the generators $u_\alpha$ and their derivatives of arbitrary order.  The operator $\partial$ is defined such that for $f\in\mathcal{A}$,

\begin{equation}\label{commrule}
\partial^kf=f\partial^k+{k\choose 1}f'\partial^{k-1}+{k\choose 2}f''\partial^{k-2}+\ldots
\end{equation}
where $f\in\mathcal{A}$, $f'=\D_x f$ and

\begin{equation}
{k\choose i}=\frac{k(k-1)\ldots(k-i+1)}{i!}.
\end{equation}
When $k>0$ this sum naturally truncates, whereas when $k<0$ the sum is infinite.  Using these definitions for $\D_x$ and $\partial$, we note that for $f\in\mathcal{A}$, $\D_x f$ is also in $\mathcal{A}$, whereas $\partial f$ is not, since $\partial f=\D_x f+f\partial$ which is an operator.\\

The ring of pseudodifferential operators $\mathcal{R}$ consists of elements
\begin{equation}
X=\sum_{i=-\infty}^mX_i\partial^i,\quad X_i \in \mathcal{A}.
\end{equation}
Elements of $\mathcal{R}$ can be added (in the natural way) and multiplied term by term, moving all $\partial$s to the right hand side according to the commutation rule given in \eqref{commrule}.  Using the commutation rule \eqref{commrule}, elements of $\mathcal{R}$ can also be written in the equivalent ``left'' form

\begin{equation}\label{leftformdef}
X=\sum_{i=-\infty}^m\partial^i\tilde X_i,\quad \tilde X_i \in \mathcal{A}.
\end{equation}

If the leading coefficient of $X$, $X_m$, is 1, then there exists a unique inverse $X^{-1}$ also with leading coefficient 1, such that $XX^{-1}=X^{-1}X=1$.  There also exists a unique $m^{th}$ root of $X$, $X^{1/m}$ starting with $\partial$.  Then $X^{p/m}=(X^{1/m})^p$ and $(X^{1/m})^m=X$.  We define $\mathcal{R}_+$ to be the set of all elements
\begin{equation}
X_+=\sum_{i=0}^mX_i\partial^i
\end{equation}
and $\mathcal{R}_-$ to be the set of all elements
\begin{equation}
 X_-=\sum_{i=-\infty}^{-1}X_i\partial^i
\end{equation}
The residue of a pseudodifferential operator, $\res \{X\}=X_{-1}$, the coefficient of $\partial^{-1}$ in $X$.  We shall make use of two important properties relating to residues.  Firstly,

\begin{equation}\label{plusminusprop}
\res \big\{X_+Y\big\}=\res \big\{X_+Y_-\big\}=\res \big\{XY_-\big\}.
\end{equation}
The second property we shall use is given on the following lemma.

\begin{lemma}\label{rescomdiv}
The residue of a commutator of two pseudodifferential operators $X$ and $Y$,

\begin{equation}\label{rescomderiv}
\res\{ [X,Y]\}=\D_x h
\end{equation}
for some $h\in \mathcal{A}$, so is a total $x$ derivative.
\end{lemma}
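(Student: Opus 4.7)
The plan is to reduce to the case of monomial pseudodifferential operators by bilinearity, perform the computation using the commutation rule \eqref{commrule}, and then identify a total derivative by combining a standard integration by parts with a combinatorial identity on generalized binomial coefficients.

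Writing $X = \sum_{i \leq M} X_i \partial^i$ and $Y = \sum_{j \leq P} Y_j \partial^j$, bilinearity of the commutator and linearity of the residue give
\[
\res\{[X,Y]\} = \sum_{i,j} \res\{[X_i \partial^i, Y_j \partial^j]\}.
\]
Inspection of \eqref{commrule} shows that $X_i \partial^i \cdot Y_j \partial^j$ contributes to the coefficient of $\partial^{-1}$ only when $i + j + 1 \geq 0$, so the double sum truncates to finitely many terms, and it suffices to treat pure monomials.

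For $X = f \partial^m$ and $Y = g \partial^n$ with $f, g \in \mathcal{A}$, applying \eqref{commrule} and reading off the coefficient of $\partial^{-1}$ on each side of the commutator yields
\[
\res\{[f\partial^m, g\partial^n]\} = \binom{m}{N} f g^{(N)} - \binom{n}{N} g f^{(N)}, \qquad N = m+n+1,
\]
with the convention that the expression vanishes when $N < 0$. The first ingredient I would use is the combinatorial identity $\binom{n}{N} = (-1)^N \binom{m}{N}$, obtained by reversing the order of the $N$ factors in the numerator of $\binom{n}{N}$ and observing that $n-N+1 = -m$, so each factor simply picks up a sign. The second ingredient is the telescoping integration by parts
\[
f g^{(N)} - (-1)^N f^{(N)} g = \D_x \bigg( \sum_{i=0}^{N-1} (-1)^i f^{(i)} g^{(N-1-i)} \bigg),
\]
valid for every $N \geq 0$. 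Together these immediately show that the monomial residue equals $\binom{m}{N} \D_x(\,\cdot\,)$, which lies in $\D_x \mathcal{A}$, as required.

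The only subtle point is the reduction step from general formal series to the monomial case, which is legitimate because the constraint $i+j \geq -1$ leaves only finitely many pairs contributing to $\res\{[X,Y]\}$ even though $X$ and $Y$ extend infinitely in negative powers of $\partial$. Beyond this bookkeeping the calculation is routine, so I do not anticipate any significant technical obstacle.
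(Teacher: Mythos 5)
Your proof is correct and follows essentially the same route as the paper's: reduce to monomials $f\partial^m$, $g\partial^n$, use the sign identity $\binom{n}{m+n+1}=(-1)^{m+n+1}\binom{m}{m+n+1}$, and exhibit the resulting combination $fg^{(N)}-(-1)^Nf^{(N)}g$ as a telescoping total $x$-derivative. Your statement of the monomial residue, $\binom{m}{N}fg^{(N)}-\binom{n}{N}gf^{(N)}$, is in fact written more carefully than the corresponding display in the paper (which has a typo in its second term), and your explicit remark on the finiteness of the contributing pairs is a worthwhile addition.
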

This lemma is given in \cite[Chapter~1]{DickeyBook} but the proof contains errors that are corrected here.

\begin{proof}
We verify this for single term pseudodifferential operators $S=s\,\partial^m$ and $T=t\,\partial^n$.  We shall use the notation $s^{(k)}=\D_x^k s$ and similarly for $t$.  We first note that $\res\{[S,T]\}$ is only non-zero if one of $m$ and $n$ is greater than or equal to zero whilst the other is negative.  Without loss of generality, we shall assume that $m\geq 0$ and $n<0$.  The product

\begin{equation}\label{XYprod}
ST=\sum_{k=0}^\infty {m\choose k}st^{(k)}\partial^{m+n-k}.
\end{equation}
so
\begin{equation}
\res\{ST\}={m\choose m+n+1}st^{(m+n+1)}
\end{equation}
when $m+n+1\geq 0$.  Otherwise $\res\{ST\}=0$ since $k\geq 0$ in \eqref{XYprod}.  It follows that
\begin{equation}
\res\{[S,T]\}={m\choose m+n+1}st^{(m+n+1)}-{n\choose m+n+1}st^{(m+n+1)}.
\end{equation}
We notice that

\begin{equation}
{m\choose m+n+1}=\frac{m(m-1)\ldots(-n)}{(m+n+1)!}\quad \text{and}\quad{n\choose m+n+1}=\frac{n(n-1)\ldots(-m)}{(m+n+1)!}
\end{equation}
so

\begin{equation}
{n\choose m+n+1}=(-1)^{m+n+1}{m\choose m+n+1}.
\end{equation}
Then
\begin{equation}
\begin{split}
\res\{[S,T]\}=&{m\choose m+n+1}(st^{(m+n+1)}+(-1)^{m+n}st^{(m+n+1)})\\
=&{m\choose m+n+1}(st^{(m+n+1)}+s^{(1)}t^{(m+n)}-s^{(1)}t^{(m+n)}-s^{(2)}t^{(m+n-1)}+s^{(2)}t^{(m+n-1)}+\ldots\\&
\ldots-(-1)^{m+n}t^{(1)}s^{(m+n)}+(-1)^{m+n}t^{(1)}s^{(m+n)}+(-1)^{m+n}ts^{(m+n+1)})
\end{split}
\end{equation}
where, to get the expression on the second line we have added and subtracted $\sum_{\alpha=1}^{m+n} s^{(\alpha)}t^{(m+n+1-\alpha)}$.  We  recognise this as a total $x$ derivative, so
\begin{equation}\label{hproc}
\res\{[S,T]\}={m\choose m+n+1}\D_x\sum_{\alpha=0}^{m+n}(-1)^\alpha s^{(\alpha)}t^{(m+n-\alpha)}.
\end{equation}
It follows that, for general pseudodifferential operators $X$ and $Y$, their residue, $\res\{[X,Y]\}$ can be expressed as the sum of total derivatives of the form given in \eqref{hproc} for pairs $X_i$ and $Y_j$, so is a total $x$ derivative.
\end{proof}

\section{The KP hierarchy and its reduction to Gelfand-Dickey}\label{KPGD}

\subsection{The KP hierarchy}

Here we give a brief summary of Sato's scheme \cite{Sato1981} for the KP hierarchy \cite{KP1970}.  We let

\begin{equation}\label{KPL}
L=\partial+u_1\partial^{-1}+u_2\partial^{-2}+\ldots=\partial+\sum_{\alpha=1}^\infty u_\alpha\partial^{-\alpha}.
\end{equation}
Using the notation $L^i_+$ to represent $(L^i)_+$, for $i>0$

\begin{equation}\label{KPsingle}
L_{x_i}=[L^i_+,L]
\end{equation}
gives us the KP hierarchy.  For each $i$, this produces an infinite set of PDEs containing derivatives with respect to $x_i$ and $x$.  From the case where $i=1$, we see that $L_{x_1}=\D_x L$, allowing us to identify $x_1$ with $x$.  A consequence of \eqref{KPsingle} is that

\begin{equation}\label{Ltothen}
(L^n)_{x_i}=[L^i_+,L^n]
\end{equation}
for all $n\geq1$.  This can be proved by induction on $n$.  It follows that

\begin{equation}
\begin{split}
(L_+^j)_{x_i}-(L_+^i)_{x_j}&=[L^i_+,L^j]_+-[L^j_+,L^i]_+\\
&=[L^i_+-L^i,L^j]_++[L^i,L^j_+]_+\\
&=[-L^i_-,L^j]_++[L^i,L^j_+]_+\\
&=[-L^i_-,L^j_+]_++[L^i,L^j_+]_+\\
&=[L^i_+,L^j_+].
\end{split}
\end{equation}
This gives us the ``zero-curvature'' equations for KP,

\begin{equation}\label{KPdouble}
(L^j_+)_{x_i}-(L^i_+)_{x_j}=[L^i_+,L^j_+].
\end{equation}
For each $i,j>0$, this produces a finite set of PDEs containing derivatives with respect to $x_i$, $x_j$ and $x$.  In the case where $i=2$ and $j=3$, \eqref{KPdouble} gives us

\begin{equation}\label{KPsimult}
\begin{split}
&3(u_1)_{x_2}=3u_1^{(2)}+6u_2^{(1)}\\
&3(u_1^{(1)})_{x_2}+3(u_2)_{x_2}-2(u_1)_{x_3}=u_1^{(3)}+3u_2^{(2)}-6u_1u_1^{(1)}.
\end{split}
\end{equation}
Letting $2u_1=u$ and eliminating $u_2$, this gives us

\begin{equation}\label{KPequation}
3u_{x_2x_2}=(4u_{x_3}-u^{(3)}-6uu^{(1)})_x,
\end{equation}
the KP equation that gives its name to the hierarchy.\\

For a fixed choice of $i$ and $j$, the PDEs given by \eqref{KPsingle} for $i$ and $j$ are not equivalent to the PDEs given by \eqref{KPdouble} for the same $i$ and $j$, since \eqref{KPsingle} gives an infinite set of PDEs whilst \eqref{KPdouble} gives a finite one. However the set of PDEs given by \eqref{KPsingle} for all $i>0$ is equivalent to the set of PDEs given by \eqref{KPdouble} for all $i,j>0$.  We have already shown that we can obtain \eqref{KPdouble} from \eqref{KPsingle}.  The following lemma relates to the converse.

\begin{lemma}\label{lemmazerotolax}
The set of equations given by

\begin{equation}\label{lemmaKP2}
(L^j_+)_{x_i}-(L^i_+)_{x_j}=[L^i_+,L^j_+].
\end{equation}
for all $1\leq i<j$ is equivalent to the set of equations given by

\begin{equation}\label{lemmaKP1}
L_{x_i}=[L^i_+,L]
\end{equation}
for all $i\geq 1$.

\end{lemma}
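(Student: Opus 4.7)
The forward direction $\eqref{lemmaKP1}\Rightarrow\eqref{lemmaKP2}$ is already in hand from the calculation preceding the lemma (which used the inductive identity $(L^n)_{x_i}=[L^i_+,L^n]$ derived from \eqref{lemmaKP1}). The plan for the converse is: assuming \eqref{lemmaKP2} for all $1\leq i<j$, deduce \eqref{lemmaKP1} for every $i\geq 1$.

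First I would note that the $j=1$ case of \eqref{lemmaKP2} is trivial (since $L^1_+=\partial$), so the substantive content lies in the pairs with $j\geq 2$. The key observation is that the positive parts $L^j_+$, as $j$ varies, collectively carry all the data of $L$: an explicit expansion shows that the coefficient of $\partial^{j-\alpha-1}$ in $L^j_+$ takes the form $j\,u_\alpha+\mathcal{P}_{j,\alpha}(u_1,\ldots,u_{\alpha-1})$ for some differential polynomial $\mathcal{P}_{j,\alpha}$ (as illustrated by $L^2_+=\partial^2+2u_1$ and $L^3_+=\partial^3+3u_1\partial+3(u_1^{(1)}+u_2)$). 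Equating the coefficients of $\partial^{j-\alpha-1}$ on both sides of \eqref{lemmaKP2} and dividing by $j$ then gives, inductively in $\alpha$, an expression for $(u_\alpha)_{x_i}$ in terms of differential polynomials in the $u_\beta$ together with quantities involving $(u_\gamma)_{x_j}$ arising from $(L^i_+)_{x_j}$.

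To close the argument I would compare with the Lax side. By the already-established forward direction, any $L$ satisfying \eqref{lemmaKP1} automatically satisfies \eqref{lemmaKP2}, so the Lax-prescribed values $(u_\alpha)_{x_i}=\res\{[L^i_+,L]\,\partial^{\alpha-1}\}$ must themselves solve the recursive ZC system extracted above. Since that system uniquely determines $(u_\alpha)_{x_i}$ in $\mathcal{A}$, any $L$ satisfying \eqref{lemmaKP2} must also satisfy \eqref{lemmaKP1}.

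The hard part will be the uniqueness claim: a priori, different choices of $j\geq\alpha+2$ yield different equations for the same $(u_\alpha)_{x_i}$, and one has to verify these are all compatible. This compatibility ultimately rests on the algebraic identity $[L^i,L^j]=0$ together with the splitting $\mathcal{R}=\mathcal{R}_+\oplus\mathcal{R}_-$; a conceptually cleaner alternative would be to invoke the Sato dressing operator $W=1+w_1\partial^{-1}+w_2\partial^{-2}+\ldots$, which simultaneously reduces both \eqref{lemmaKP1} and \eqref{lemmaKP2} to the single equation $W_{x_i}=-(L^i)_-W$ and thereby makes the equivalence manifest.
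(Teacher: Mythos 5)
Your overall strategy --- read the zero-curvature equations off coefficient by coefficient, observe that the resulting system is triangular in $\alpha$, and then identify its unique solution with the Lax-prescribed values via the already-established forward implication --- is a legitimate alternative to the paper's argument, and your expansion of the coefficient of $\partial^{j-\alpha-1}$ in $L^j_+$ as $j\,u_\alpha+\mathcal{P}_{j,\alpha}(u_1,\ldots,u_{\alpha-1})$ is correct. However, the proof does not close as written, and the gap sits exactly where you flag it. As you set things up, the equation obtained from the coefficient of $\partial^{j-\alpha-1}$ still contains the unknowns $(u_\gamma)_{x_j}$ coming from $(L^i_+)_{x_j}$, so the ``recursive ZC system'' does not by itself determine $(u_\alpha)_{x_i}$ as an element of $\mathcal{A}$, and the uniqueness claim on which your final step rests is unsupported. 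Moreover, the difficulty you then name is not the real one: compatibility of the equations for different $j$ is automatic, since all of them hold by hypothesis and so any two determinations of the same quantity agree. Neither of your two escape routes works as stated --- the appeal to $[L^i,L^j]=0$ and the splitting $\mathcal{R}=\mathcal{R}_+\oplus\mathcal{R}_-$ is a gesture rather than an argument, and reducing \eqref{lemmaKP2} to the dressing equation $W_{x_i}=-(L^i)_-W$ is essentially the content of the lemma itself (Lemma \ref{laxtophi} only establishes the equivalence of \eqref{lemmaKP1} with the dressing equation, and Corollary \ref{KPeqcoroll} \emph{uses} the present lemma), so that route is circular.

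The single observation that repairs your argument is that $L^i_+$ has order $i$ with constant coefficients in front of $\partial^i$ and $\partial^{i-1}$, so $(L^i_+)_{x_j}$ has order at most $i-2$; likewise $[L^i_+,L^j_-]$ has order at most $i-2$. Hence for $\alpha\leq j-i$ the coefficient of $\partial^{j-\alpha-1}$ in \eqref{lemmaKP2} contains no $x_j$-derivatives at all and coincides with the corresponding coefficient of $(L^j)_{x_i}=[L^i_+,L^j]$. Choosing, for each $\alpha$, any $j\geq \alpha+i$, you obtain a genuinely triangular system for the $(u_\alpha)_{x_i}$ alone (with invertible leading coefficient $j$), and your uniqueness-plus-forward-direction argument then goes through. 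For comparison, the paper's proof avoids the coefficient bookkeeping at the final stage: it likewise notes that the first $j-i$ coefficients of \eqref{lemmaKP2} agree with those of $L^j_{x_i}=[L^i_+,L^j]$, but then recovers $L_{x_i}=[L^i_+,L]$ directly from two consecutive such relations via the telescoping identity $L^{-n}\bigl(L^{n+1}_{x_i}-L^n_{x_i}L\bigr)=L_{x_i}$, with $n$ arbitrary supplying arbitrarily many terms; this is an operator-level argument where yours works scalar coefficient by scalar coefficient.
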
 

\begin{proof}
We have already shown that \eqref{lemmaKP1} for $i$ and $j$ implies \eqref{lemmaKP2} for the same $i$ and $j$.  To show that \eqref{lemmaKP2} for all $1\leq i,j$ implies \eqref{lemmaKP1} for all $i\geq 1$, we consider \eqref{KPdouble} in the form

\begin{equation}
(L_+^j)_{x_i}-(L_+^i)_{x_j}=[L^i_+,L^j]_+-[L^j_+,L^i]_+,
\end{equation}
and without loss of generality assume that $j>i$.  The first $j-i$ terms of this (i.e. the coefficients of $\partial^{k}$ for $k$ from $i-1$ to $j-2$) are identical to the first $j-i$ terms of

\begin{equation}\label{Firstterms}
L^j_{x_i}=[L^i_+,L^j].
\end{equation}
We now let $j=n+1$ in \eqref{Firstterms} and multiply from the left by $L^{-n}$, and from this we subtract \eqref{Firstterms} with $j=n$, multiplied on the left by $L^{-n}$, and on the right by $L$ to obtain

\begin{equation}
\begin{split}
L^{-n}(L^{n+1}_{x_i}-L^n_{x_i}L)=L^{-n}([L^i_+,L^{n+1}]-[L^i_+,L^n]L).
\end{split}
\end{equation}
The left hand side of this is just $L_{x_i}$, whilst the right hand side simplifies to $[L^i_+,L]$.  Therefore two copies of \eqref{KPdouble} with $j=n$ and $j=n+1$ gives us the first $n-i$ terms of

\begin{equation}
L_{x_i}=[L^i_+,L].
\end{equation}
Since $n$ is arbitrary, we are able to obtain all terms of \eqref{KPsingle}.

\end{proof}

In \cite{ThisPaper}, a Lagrangian multiform incorporating a re-scaled version of \eqref{KPequation} and the corresponding equation arising from \eqref{KPdouble} with $i=2$ and $j=4$ was presented  with the following Lagrangian coefficients:

\begin{subequations}
\begin{equation}
\mathscr{L}_{(123)}=\frac{1}{2}v_{x_1x_1}v_{x_1x_3}-\frac{1}{2}v_{3x_1}^2-\frac{1}{2}v_{x_1x_2}^2+v_{x_1x_1}^3
\end{equation}

\begin{equation}
\mathscr{L}_{(412)}=\frac{1}{2}v_{x_1x_1}v_{x_1x_4}-2v_{3x_1}v_{x_1x_1x_2}-\frac{2}{3}v_{x_1x_2}v_{x_2x_2}+4v_{x_1x_1}^2v_{x_1x_2}
\end{equation}

\begin{equation}
\begin{split}
\mathscr{L}_{(234)}=&-\frac{1}{2}v_{x_1x_3}v_{x_1x_4}-4v_{x_1x_3}v_{3x_1x_2}+2v_{x_1x_1x_3}v_{x_1x_1x_2}-\frac{2}{3}v_{x_2x_2}v_{x_2x_3}+v_{x_2x_2}v_{x_1x_4}\\
&+4v_{x_2x_2}v_{3x_1x_2}-\frac{8}{3}v_{x_1x_2x_2}v_{x_1x_1x_2}-v_{3x_1}v_{x_1x_1x_4}+\frac{4}{3}v_{3x_1}v_{3x_2}-4v_{3x_1}^2v_{x_1x_2}\\
&+8v_{x_1x_1}v_{3x_1}v_{x_1x_1x_2}+8v_{x_1x_1}v_{x_1x_2}v_{x_2x_2}+\frac{4}{3}v_{x_1x_2}^3-8v_{x_1x_1}v_{x_1x_2}v_{x_1x_3}-8v_{x_1x_1}^3v_{x_1x_2}
\end{split}
\end{equation}

\begin{equation}
\begin{split}
\mathscr{L}_{(341)}=&\frac{2}{3}v_{x_2x_2}^2+2v_{4x_1}^2-2v_{3x_1}v_{x_1x_1x_3}-\frac{4}{3}v_{x_2x_2}v_{x_1x_3}-\frac{2}{3}v_{x_1x_2}v_{x_2x_3}+v_{x_1x_2}v_{x_1x_4}\\
&-\frac{4}{3}v_{x_1x_1x_2}^2+\frac{4}{3}v_{3x_1}v_{x_1x_2x_2}+12v_{x_1x_1}^2v_{4x_1}+4v_{3x_1}^2v_{x_1x_1}-4v_{x_1x_1}^2v_{x_2x_2}\\
&+4v_{x_1x_1}v_{x_1x_2}^2+4v_{x_1x_1}^2v_{x_1x_3}+10v_{x_1x_1}^4.
\end{split}
\end{equation}
\end{subequations}
where the dependent variable $v_{x_1x_1} =u$ has been used to eliminate non-local terms.  These Lagrangians were found using the variational symmetries method outlined in the same paper.  Although it is possible to extend this Lagrangian multiform to incorporate more flows of the hierarchy, the resultant Lagrangians become increasingly unwieldy.  Also, as we progress up the hierarchy, an ever increasing number of non-local terms appear in the Lagrangians, and the Lagrangians grow very large very quickly.  Expanding this multiform to include the $x_5$ flow results in Lagrangians that are many pages long.  Also, this approach does not yield an explicit formula for all of the constituent Lagrangians of the multiform for the complete hierarchy, so in order to obtain a multiform for the entire hierarchy, a different approach is needed.

\subsection{The Gelfand-Dickey hierarchy as a reduction of KP}

The $n^{th}$ Gelfand-Dickey hierarchy \cite{Gelfand1976} can be formulated as follows.  We let

\begin{equation}
L_{GD}=\partial^n+v_{n-2}\partial^{n-2}+v_{n-3}\partial^{n-3}+\ldots+v_0
\end{equation}
and let

\begin{equation}
P_m=(L_{GD}^{m/n})_+.
\end{equation}
We note that whilst $L_{GD}$ is not a pseudodifferential operator, in general a fractional power of $L_{GD}$ will be.  The $n^{th}$ Gelfand-Dickey hierarchy is then given by

\begin{equation}\label{GDeqn}
(L_{GD})_{x_m}=[P_m,L_{GD}].
\end{equation}
In the case where $n=2$, this gives the KdV hierarchy, whilst for $n=3$ we get the Boussinesq hierarchy.  We now consider the KP equation \eqref{Ltothen}

\begin{equation}\label{KPforGD}
L^n_{x_m}=[L^m_+,L^n].
\end{equation}
In order to reduce the KP hierarchy to the $n^{th}$ Gelfand-Dickey hierarchy we impose the constraint that $L^n_-=0$.  We note that 

\begin{equation}
L^n_-=0\implies L^n=L^n_+,
\end{equation}
 an $n^{th}$ order differential operator that we equate with $L_{GD}$.  It follows that $L_{GD}^{1/n}=L$, so $P_m$ is given by $L^m_+$, making \eqref{GDeqn} and the right hand expression in \eqref{KPforGD} equivalent.  We also note that $L^n_-=0\implies L^{kn}_-=0$ for all $k \in \mathbb{Z}_+$, so \eqref{KPforGD} gives $L^n_{x_m}=0$ whenever $n$ divides $m$. This is as expected since, by \eqref{GDeqn}, $(L_{GD})_{x_m}=0$ whenever $P_m$ is an integer power of $L_{GD}$, which happens when $n$ divides $m$.

\section{A Lagrangian for the KP hierarchy}\label{KPLagSect}

In this section, we present a Lagrangian for the KP hierarchy that was originally given in \cite{Dickey1987}.  We define $\mathcal{A}_\varphi$ to be the differential algebra analogous to $\mathcal{A}$ with generators $\varphi_0,\varphi_1,\varphi_2,\ldots$ (i.e. where elements of $\mathcal{A}_\varphi$ are differential polynomials in the generators $\varphi_\beta$), and we define $\mathcal{R}_\varphi$ to be the ring of pseudodifferential operators with coefficients in $\mathcal{A}_\varphi$.  We define $\mathcal{R}_{\varphi+}$ and $\mathcal{R}_{\varphi-}$ analogously to $\mathcal{R}_{+}$ and $\mathcal{R}_{-}$.  We make the dressing substitution

\begin{equation}\label{Lphidef}
L=\phi\partial\phi^{-1}
\end{equation}
where

\begin{equation}\label{phidef}
\phi=1+\sum_{\beta=0}^\infty\varphi_\beta\partial^{-\beta-1},
\end{equation}
noting that because of the leading 1, a unique $\phi^{-1}$ exists.  Expanding \eqref{Lphidef} we find that

\begin{equation}\label{varphiexp}
L=\partial -\varphi_0'\partial^{-1}+(\varphi_0\varphi_0'-\varphi_1')\partial^{-2}+(\varphi_1\varphi_0'+\varphi_0\varphi_1'-(\varphi_0')^2-\varphi_0^2\varphi_0'-\varphi_2')\partial^{-3}+\ldots,
\end{equation}
where $\varphi_\beta'$ denotes the $x$ derivative of $\varphi_\beta$.  Equating coefficients with \eqref{KPL}, we see that  $u_1=-\varphi_0'$, $u_2=\varphi_0\varphi_0'-\varphi_1'$, $u_3=\varphi_1\varphi_0'+\varphi_0\varphi_1'-(\varphi_0')^2-\varphi_0^2\varphi_0'-\varphi_2'$ etc., giving an injective map from $\mathcal{A}$ to $\mathcal{A}_\varphi$.\\  

In order to determine the resulting KP equation in terms of $\phi$, we invoke the idea of homogeneity in the sense of all terms of an expression carying equal weight.  Let us consider this in the case of the KP equation
\begin{equation}
3u_{x_2x_2}=(4u_{x_3}-u^{(3)}-6uu^{(1)})_x.
\end{equation}
We begin by assigning a weight of 1 to the derivative with respect to $x$.  On the left hand side of the equation, we see a $u_{x_2x_2}$ term, which we compare to the $u^{(4)}$ term on the right hand side.  In order for these terms to have equal weight, an $x_2$ derivative must have weight 2.  Similarly, by comparing the $u_{x_3}^{(1)}$ and $u^{(4)}$ terms, it follows that an $x_3$ derivative has weight 3.  Finally by comparing $u^{(3)}$ and $uu^{(1)}$ we see that $u$ carries weight 2.  Whenever it is possible to assign weights in this manner such that all terms of an expression carry equal weight, we say that the expression is \textit{homogeneous}. \\

Homogeneity can also be introduced directly on the level of the pseudodifferential operators.  Applying this to the KP operator

\begin{equation}
L=\partial+u_1\partial^{-1}+u_2\partial^{-2}+\ldots,
\end{equation}
we again assign a weight of 1 to the derivative with respect to $x$, so the leading $\partial$ carries weight 1.  In order for all terms to carry equal weight, it follows that $u_1$ has weight 2, $u_2$ has weight 3, and in general $u_\alpha$ has weight $\alpha+1$.  Similarly, the leading 1 of the operator

\begin{equation}
\phi=1+\varphi_0\partial^{-1}+\varphi_1\partial^{-2}+\ldots
\end{equation}
tells us that $\phi$ has weight 0, so $\varphi_0$ has weight 1, $\varphi_1$ has weight 2, and $\varphi_\beta$ has weight $\beta+1$ in order that each term has weight 0.  In this paper we only deal with homogeneous equations.  With this in mind, we have the following lemma.

\begin{lemma}\label{laxtophi}
We let $L=\phi\partial\phi^{-1}\in \mathcal{R}_\varphi$.  Then

\begin{equation}
L_{x_i}=[L^i_+,L]  \iff  \phi_{x_i}=-L^i_-\phi .  
\end{equation}

\end{lemma}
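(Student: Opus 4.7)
The plan is to reduce both sides of the equivalence to the single commutator identity $L_{x_i}=[\phi_{x_i}\phi^{-1},L]$, which is the real content of the lemma.

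First I would establish this key identity. Differentiating $L=\phi\partial\phi^{-1}$ with respect to $x_i$ using the Leibniz rule gives $L_{x_i}=\phi_{x_i}\partial\phi^{-1}+\phi\partial(\phi^{-1})_{x_i}$, while differentiating $\phi\phi^{-1}=1$ yields $(\phi^{-1})_{x_i}=-\phi^{-1}\phi_{x_i}\phi^{-1}$. Substituting and inserting $\phi\partial\phi^{-1}=L$ on both flanks produces
\begin{equation}
L_{x_i}=(\phi_{x_i}\phi^{-1})L-L(\phi_{x_i}\phi^{-1})=[\phi_{x_i}\phi^{-1},L].
\end{equation}
All further work is essentially algebraic manipulation of this identity in $\mathcal{R}_\varphi$.

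For the forward direction ($\Leftarrow$), if $\phi_{x_i}=-L^i_-\phi$, then $\phi_{x_i}\phi^{-1}=-L^i_-$ and the identity above gives $L_{x_i}=[-L^i_-,L]$. Since $[L^i,L]=0$ and $L^i=L^i_++L^i_-$, the negative-part commutator equals minus the positive-part commutator, so $L_{x_i}=[L^i_+,L]$. This is a short calculation and causes no trouble.

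For the reverse direction ($\Rightarrow$), combining the Lax equation with the identity gives $[\phi_{x_i}\phi^{-1}+L^i_-,L]=0$. Set $C:=\phi_{x_i}\phi^{-1}+L^i_-$. By the forms \eqref{phidef} and \eqref{KPL}, both summands are purely negative (since $\phi=1+O(\partial^{-1})$ forces $\phi_{x_i}$ to be purely negative and $\phi^{-1}$ to start with $1$), so $\mathrm{ord}(C)\le -1$. The aim is to conclude $C=0$. I would pass to the \emph{undressed picture}: since $L=\phi\partial\phi^{-1}$, setting $\tilde C:=\phi^{-1}C\phi$ turns $[C,L]=0$ into $[\tilde C,\partial]=0$. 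The latter equation reads, coefficient by coefficient, $\D_x(\tilde C_k)=0$ in $\mathcal{A}_\varphi$, which forces each $\tilde C_k$ to be a pure constant. The main obstacle is ruling out nonzero constant coefficients in $\tilde C$: abstractly, constant multiples of $L^{-k}$ do commute with $L$, so some structural input is needed. I would argue this by comparing leading orders inductively, or more cleanly by invoking the homogeneity introduced just before the lemma — each coefficient of $\phi_{x_i}\phi^{-1}+L^i_-$ is a differential polynomial in the $\varphi_\beta$ of strictly positive weight, whereas the constant-coefficient residue would have weight zero, and so must vanish. This pins down $C=0$, i.e.\ $\phi_{x_i}=-L^i_-\phi$.
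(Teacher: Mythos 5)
Your proof is correct and follows essentially the same route as the paper: both hinge on the identity $L_{x_i}=[\phi_{x_i}\phi^{-1},L]$, then undress by $\phi$ to identify the centraliser of $\partial$ with constant-coefficient operators, and finally use the weight/homogeneity argument to kill those constants. Your bookkeeping is marginally cleaner --- by working directly with the purely negative operator $\phi_{x_i}\phi^{-1}+L^i_-$ you avoid the paper's extra step of fixing the coefficient $\gamma_i=1$ of $L^i$ in the centraliser element --- but the underlying ideas are identical.
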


\begin{proof}
Using that $L=\phi\partial\phi^{-1}$, the equation
\begin{equation}\label{weightKP}
L_{x_i}=[L^i_+,L]
\end{equation}
becomes

\begin{equation}
[\phi_{x_i}\phi^{-1}-L_+^i,L]=0,
\end{equation}
This is equivalent to the statement that 

\begin{equation}\label{weightphi}
\phi_{x_i}\phi^{-1}-L_+^i+f_i=0
\end{equation}
for some $f_i$ in $\mathcal{R}_\varphi$ such that $[L,f_i]=0$.  Letting $\tilde f_i=\phi^{-1}f_i\phi$, the requirement that $[L,f_i]=0$ is equivalent to the requirement that $[\partial,\tilde f_i]=\D_x\tilde f_i=0$.  Therefore $\tilde f_i$ is a constant in $\mathcal{R}_\varphi$, so 

\begin{equation}
\tilde f_i=\sum_{j=-\infty}^m\gamma_j\partial^j
\end{equation}
for some $m$, where each $\gamma_j$ is a constant in $\mathcal{A}_\varphi$ (i.e. a real or complex number), and consequently

\begin{equation}
f_i=\sum_{j=-\infty}^m\gamma_jL^j
\end{equation}
for the same constants $\gamma_j$.  In \eqref{weightphi} we see that both $\phi_{x_i}\phi^{-1}$ and $L_+^i$ are of weight $i$, so we require that $f_i$ is also of weight $i$.  Therefore, $\gamma_j=0$ whenever $j\neq i$, so $f_i$ is of the form $\gamma_i L^i$.  When $f_i$ takes this form, the coefficient of $\partial^i$ in \eqref{weightphi} is $\gamma_i-1$, and setting this equal to zero gives us that $\gamma_i=1$.  Then \eqref{weightphi} becomes

\begin{equation}
\phi_{x_i}\phi^{-1}+L^i_-=0,
\end{equation}
so the resulting equation for $\phi_{x_i}$ is

\begin{equation}\label{phiequation}
\phi_{x_i}=-L^i_-\phi.
\end{equation}
I.e., 
\begin{equation}
L_{x_i}=[L^i_+,L]  \implies \phi_{x_i}=-L^i_-\phi.  
\end{equation}
Conversely, we see that if \eqref{phiequation} holds then
\begin{equation}
\begin{split}
L_{x_i}&=(\phi\partial\phi^{-1})_{x_i}\\
&=\phi_{x_i}\partial\phi^{-1}-\phi\partial\phi^{-1}\phi_{x_i}\phi^{-1}\\
&=-\L^i_-\phi\partial\phi^{-1}+\phi\partial\phi^{-1}\L^i_-\\
&=[-L^i_-,L]\\
&=[L^i_+,L]
\end{split}
\end{equation}
so \eqref{phiequation} implies \eqref{weightKP}.
\end{proof}

\begin{corollary}\label{KPeqcoroll}
Lemmas  \ref{lemmazerotolax} and \ref{laxtophi} together tell us that the set of equations given by

\begin{equation}
(L^j_+)_{x_i}-(L^i_+)_{x_j}=[L^i_+,L^j_+]  
\end{equation}
in $\mathcal{R}$ for all $1\leq i,j$ is equivalent to the set of equations given by

\begin{equation}
\phi_{x_i}\phi^{-1}+L^i_-=0
\end{equation}
in $\mathcal{R}_\varphi$ for all $i\geq 1$.

\end{corollary}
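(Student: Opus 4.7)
The plan is to chain together the two previous lemmas, since the corollary is essentially a direct consequence once one matches the equations up. First I would rewrite the target equation $\phi_{x_i}\phi^{-1}+L^i_-=0$ by right-multiplying by $\phi$ to obtain the equivalent form $\phi_{x_i}=-L^i_-\phi$, which is precisely the right-hand side of the equivalence stated in Lemma \ref{laxtophi}. (This step is trivial because $\phi$ has leading coefficient $1$ and therefore possesses a unique inverse in $\mathcal{R}_\varphi$.)

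Next I would apply Lemma \ref{lemmazerotolax}, which gives that the zero-curvature equations $(L^j_+)_{x_i}-(L^i_+)_{x_j}=[L^i_+,L^j_+]$ for all $1\leq i,j$ are equivalent to the Lax equations $L_{x_i}=[L^i_+,L]$ for all $i\geq 1$. Note that the statement of Lemma \ref{lemmazerotolax} is formulated in $\mathcal{R}$, but it applies equally well in $\mathcal{R}_\varphi$, since the proof manipulates pseudodifferential operators formally and does not depend on whether the coefficients lie in $\mathcal{A}$ or $\mathcal{A}_\varphi$.

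Finally I would invoke Lemma \ref{laxtophi}, which converts the Lax form $L_{x_i}=[L^i_+,L]$ into the dressing form $\phi_{x_i}=-L^i_-\phi$ for each $i\geq 1$. Composing these equivalences gives the full chain
\begin{equation}
(L^j_+)_{x_i}-(L^i_+)_{x_j}=[L^i_+,L^j_+]\ \forall\, i,j \iff L_{x_i}=[L^i_+,L]\ \forall\, i \iff \phi_{x_i}\phi^{-1}+L^i_-=0\ \forall\, i,
\end{equation}
which is the claim. There is no real obstacle here; the only thing to be careful about is that both equivalences must be read as equivalences of the \emph{entire families} of equations (not equation-by-equation), since Lemma \ref{lemmazerotolax} uses pairs $(i,j)$ with $j=n$ and $j=n+1$ to recover a single Lax equation, while Lemma \ref{laxtophi} is genuinely index-by-index. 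Making this quantifier structure explicit in the write-up is the only subtlety.
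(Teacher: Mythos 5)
Your proposal is correct and matches the paper's (implicit) argument exactly: the paper offers no separate proof, treating the corollary as the immediate composition of Lemma \ref{lemmazerotolax} with Lemma \ref{laxtophi}, plus the trivial right-multiplication by $\phi$ that you note. Your added remarks on the quantifier structure and on transporting Lemma \ref{lemmazerotolax} to $\mathcal{R}_\varphi$ are sensible clarifications but do not change the route.
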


We now consider a Lagrangian $\mathscr{L}_{(1ij)}\textsf{d}x_1\wedge \textsf{d}x_i\wedge \textsf{d}x_j$ with $\mathscr{L}_{(1ij)}\in \mathcal{A}_\varphi$.  For such a Lagrangian, we can take variational derivatives $\dfrac{\delta \mathscr{L}_{(1ij)}}{\delta \varphi_\beta}$ (i.e., the Euler operator with respect to $\varphi_\beta$ acting on $\mathscr{L}_{(1ij)}$) to obtain expressions in $\mathcal{A}_\varphi$.  However it is convenient to define the variational derivative with respect to the pseudodifferential operator $\phi$,

\begin{equation}\label{varderiv}
\frac{\delta \mathscr{L}_{(1ij)}}{\delta \phi}=\sum_{\beta=0}^\infty\partial^\beta\frac{\delta \mathscr{L}_{(1ij)}}{\delta \varphi_\beta}.
\end{equation}
According to this definition, $\dfrac{\delta \mathscr{L}_{(1ij)}}{\delta \phi}$ is a pseudodifferential operator in $\mathcal{R}_{\varphi+}$ that can be put in the usual form with all $\partial$s on the right using \eqref{commrule}.  The motivation for this definition is made clear by the following lemma.

\begin{lemma}
If there exist $h_1$, $h_2$ and $h_3$ such that

\begin{equation}\label{deltaLdef}
\delta \mathscr{L}_{(1ij)} = \res\{X\,\delta\phi\}+\D_xh_1+\D_{x_i}h_2+\D_{x_j}h_3
\end{equation}
for some $X\in\mathcal{R}_\varphi$, then the variational derivative of $\mathscr{L}_{(1ij)}$ with respect to $\phi$,

\begin{equation}
\frac{\delta \mathscr{L}_{(1ij)}}{\delta \phi}=X_+
\end{equation}

\end{lemma}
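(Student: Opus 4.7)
The plan is to compute both sides of $\dfrac{\delta \mathscr{L}_{(1ij)}}{\delta\phi}=X_+$ as formal series in $\partial$ with coefficients in $\mathcal{A}_\varphi$ and match them term by term. Since $\delta\phi=\sum_{\beta\geq 0}\delta\varphi_\beta\,\partial^{-\beta-1}$ lies in $\mathcal{R}_{\varphi-}$, property \eqref{plusminusprop} immediately gives $\res\{X\,\delta\phi\}=\res\{X_+\,\delta\phi\}$, so only the positive part of $X$ can contribute. Writing $X_+=\sum_{i=0}^m X_i\partial^i$ and using the commutation rule \eqref{commrule} to push each $\partial^i$ past $\delta\varphi_\beta$, the coefficient of $\partial^{-1}$ comes from the slice $k=i-\beta$ and yields
$$\res\{X_+\,\delta\phi\}=\sum_{\beta\geq 0}\sum_{i\geq\beta}\binom{i}{i-\beta}\, X_i\,(\delta\varphi_\beta)^{(i-\beta)}.$$

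Next I would integrate by parts in $x$ to strip the derivatives off $\delta\varphi_\beta$; this produces, modulo $\D_x$-exact terms,
$$\res\{X\,\delta\phi\}\equiv \sum_{\beta\geq 0}\Big(\sum_{i\geq\beta}(-1)^{i-\beta}\binom{i}{i-\beta}X_i^{(i-\beta)}\Big)\,\delta\varphi_\beta.$$
The extra $\D_x h_1+\D_{x_i}h_2+\D_{x_j}h_3$ appearing in \eqref{deltaLdef} is annihilated by the multi-time Euler operator $\delta/\delta\varphi_\beta$, and the standard uniqueness of the Euler-Lagrange decomposition (the $\delta\varphi_\beta$ and their successive derivatives are independent modulo total divergences) lets me read off
$$\frac{\delta \mathscr{L}_{(1ij)}}{\delta\varphi_\beta}=\sum_{i\geq\beta}(-1)^{i-\beta}\binom{i}{i-\beta}X_i^{(i-\beta)}.$$

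Substituting this into the definition \eqref{varderiv}, swapping the $\beta$ and $i$ summations (justified because the inner sum over $i$ is finite once $X_+$ has a fixed highest order $m$), and setting $k=i-\beta$ gives
$$\frac{\delta \mathscr{L}_{(1ij)}}{\delta\phi}=\sum_{i=0}^{m}\sum_{k=0}^{i}(-1)^k\binom{i}{k}\partial^{i-k}\,X_i^{(k)}.$$
On the other hand, applying \eqref{commrule} in reverse to rewrite each $X_i\partial^i$ in the left form \eqref{leftformdef} gives exactly $X_i\partial^i=\sum_{k=0}^i(-1)^k\binom{i}{k}\partial^{i-k}X_i^{(k)}$ (easily checked by induction on $i$), so the double sum collapses to $\sum_{i=0}^{m}X_i\partial^i=X_+$, as required. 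The main obstacle is purely bookkeeping: making sure the total $x_i$- and $x_j$-divergences in the hypothesis do not spoil the identification of Euler-Lagrange coefficients, and aligning the binomial expansion produced by taking the residue with the left-form expansion of $X_+$. Once the two binomial sums are recognised as the same object read in opposite orders, the identity falls out.
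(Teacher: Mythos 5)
Your proof is correct. It follows the same skeleton as the paper's proof -- reduce to $X_+$ via \eqref{plusminusprop}, expand the product with $\delta\phi$, extract the residue, identify the Euler--Lagrange coefficients, and reassemble via \eqref{varderiv} -- but the two arguments diverge in which normal form of $X_+$ they use, and this changes the mechanics noticeably. The paper writes $X_+=\sum_k\partial^k\tilde X_k$ in the \emph{left} form \eqref{leftformdef} from the outset; then in the product $\partial^n\tilde X_n\,\delta\varphi_m\partial^{-m-1}$ the only term that is not manifestly a total $x$-derivative is $\tilde X_n\,\delta\varphi_m\,\partial^{n-m-1}$, whose residue forces $n=m$, so one reads off $\delta\mathscr{L}_{(1ij)}=\sum_k\tilde X_k\,\delta\varphi_k+\D_x(\cdots)$ with no integration by parts and no binomial identity; the definition \eqref{varderiv} then returns $\sum_k\partial^k\tilde X_k=X_+$ immediately. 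You instead keep $X_+=\sum_i X_i\partial^i$ in the right form, which costs you an explicit integration by parts to obtain $\frac{\delta\mathscr{L}_{(1ij)}}{\delta\varphi_\beta}=\sum_{i\geq\beta}(-1)^{i-\beta}\binom{i}{i-\beta}X_i^{(i-\beta)}$, and then the final resummation
\begin{equation*}
\sum_{k=0}^i(-1)^k\binom{i}{k}\partial^{i-k}X_i^{(k)}=X_i\partial^i,
\end{equation*}
which is nothing but the right-to-left form conversion carried out by hand (and which does hold, by the Vandermonde-type identity $\binom{i}{k}\binom{i-k}{l-k}=\binom{i}{l}\binom{l}{k}$ and the alternating sum of $\binom{l}{k}$). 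So the two proofs buy the same result; the paper's choice of left form front-loads the bookkeeping into a single representation change, whereas yours makes the action of the Euler operator fully explicit at the price of the closing binomial computation. Your handling of the $\D_xh_1+\D_{x_i}h_2+\D_{x_j}h_3$ terms (annihilated by the Euler operators) matches what the paper does implicitly.
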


\begin{proof}
Since $\delta\phi=\delta\varphi_0\partial^{-1}+\delta\varphi_1\partial^{-2}+\ldots$ has only negative powers of $\partial$, \eqref{deltaLdef} is equivalent to

\begin{equation}
\delta \mathscr{L}_{(1ij)}=\res\{X_+\,\delta\phi\}+\D_xh_1+\D_{x_i}h_2+\D_{x_j}h_3.
\end{equation}
We write $X_+$ in the ``left'' form described in equation \eqref{leftformdef}, so

\begin{equation}
X_+=\sum_{k=0}^m\partial^k\tilde X_k,\quad \tilde X_k \in \mathcal{A}_\varphi,
\end{equation}
and consider the product of an arbitrary term in $X_+$ with an arbitrary term in $\delta \phi$.  This will be of the form

\begin{equation}
\partial^n\tilde X_n \,\delta\varphi_m\partial^{-m-1}=\tilde X_n \,\delta\varphi_m\partial^{n-m-1}+\sum_{i=1}^n{n\choose i}\D_x^i(\tilde X_n \,\delta\varphi_m)\partial^{n-m-i-1}
\end{equation}
and the only term on the right hand side that is not a total derivative is $\tilde X_n \,\delta\varphi_m\partial^{n-m-1}$.  Therefore,

\begin{equation}
\delta \mathscr{L}_{(1ij)}=\res\{X_+\,\delta\phi\}+\D_xh_1+\D_{x_i}h_2+\D_{x_j}h_3=\sum_{k=0}^m\tilde X_k \,\delta\varphi_k+\D_x\tilde h_1+\D_{x_i}h_2+\D_{x_j}h_3
\end{equation}
for some $\tilde h_1$, so the variational derivative

\begin{equation}
\frac{\delta\mathscr{L}_{(1ij)}}{\delta \varphi_k}=\tilde X_k
\end{equation}
for $0\leq k\leq m$ and is zero for $k>m$.  It follows that 

\begin{equation}
\frac{\delta \mathscr{L}_{(1ij)}}{\delta \phi}=\sum_{k=0}^\infty\partial^k\frac{\delta\mathscr{L}_{(1ij)}}{\delta \varphi_k}=\sum_{k=0}^m\partial^k\tilde X_k=X_+
\end{equation}

\end{proof}
Following the formulation in \cite{Dickey1987}, we introduce

\begin{equation}
\phi_p=1+p\sum_{\beta=0}^\infty\varphi_\beta\partial^{-\beta-1}.
\end{equation}
where $p\in \mathbb{R}$.

\begin{proposition}\label{DickekLagProp}
The Lagrangian density

\begin{equation}\label{KPLag}
\mathscr{L}_{(1ij)}=\res\bigg\{-\int_0^1p^{-1}[(\phi_p\partial^i\phi_p^{-1})_+,(\phi_p\partial^j\phi_p^{-1})_+]\phi_p^{-1}\textsf{d} p+\partial^j\phi^{-1}\phi_{x_i}-\partial^i\phi^{-1}\phi_{x_j}\bigg\}
\end{equation}
gives Euler-Lagrange equations that are equivalent to the KP equation 

\begin{equation}
(L^i_+)_{x_j}-(L_+^j)_{x_i}+[L^i_+,L^j_+]=0.
\end{equation}

\end{proposition}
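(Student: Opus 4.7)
The plan is to compute the variation $\delta\mathscr{L}_{(1ij)}$ and bring it to the form $\res\{X\,\delta\phi\}$ modulo total derivatives in $x$, $x_i$, $x_j$, so that the preceding lemma identifies $\delta\mathscr{L}_{(1ij)}/\delta\phi = X_+$. The target is to show that $X_+$ is proportional (through conjugation by the invertible operator $\phi$) to the zero-curvature operator $(L^j_+)_{x_i} - (L^i_+)_{x_j} + [L^j_+, L^i_+]$, so that its vanishing is equivalent to the KP equation.

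First I would vary the kinetic terms $\res\{\partial^j\phi^{-1}\phi_{x_i} - \partial^i\phi^{-1}\phi_{x_j}\}$ using $\delta\phi^{-1} = -\phi^{-1}\delta\phi\,\phi^{-1}$, integrating by parts in $x_i$ and $x_j$ to strip derivatives off $\delta\phi$, and using cyclicity of the residue (Lemma \ref{rescomdiv}) to bring $\delta\phi$ to the rightmost position. This produces a contribution of the form $\res\{Y\,\delta\phi\}$ modulo total derivatives, for some explicit $Y$ built from $\phi$, $\phi^{-1}$, $\phi_{x_i}$, $\phi_{x_j}$ and powers of $\partial$.

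The harder part is varying the Wess--Zumino--type $p$-integral. The key ingredients are: (i) the identity $\delta L_p^n = p[A, L_p^n]$ with $A := \delta\phi\,\phi_p^{-1}$, proved by a short telescoping argument starting from $\delta L_p = p[A, L_p]$; (ii) $\delta\phi_p^{-1} = -p\phi_p^{-1}\delta\phi\,\phi_p^{-1}$; and (iii) the fact that the factors of $p$ from these identities precisely cancel the $p^{-1}$ in the integrand. After applying the plus/minus property \eqref{plusminusprop} and cyclicity of $\res$, one recognises the $p$-integrand as a total $p$-derivative (modulo total $x$, $x_i$, $x_j$ derivatives) of a primitive built from $(L_p^i)_+$, $(L_p^j)_+$, $\phi_p^{-1}$. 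The $p$-integration then collapses to a boundary term: the contribution at $p = 0$ vanishes because $L_0 = \partial$ commutes with every power of $\partial$, so only the $p = 1$ evaluation survives, expressible in $L$, $L^i_+$, $L^j_+$, $\phi^{-1}$.

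Combining the two contributions, the non-KP pieces from the kinetic variation and the WZ boundary should cancel by cyclicity and the plus/minus identity, leaving $\delta\mathscr{L}_{(1ij)} = \res\{X\,\delta\phi\}$ plus total derivatives, with $X_+$ proportional to $\phi^{-1}\bigl[(L^j_+)_{x_i} - (L^i_+)_{x_j} + [L^j_+, L^i_+]\bigr]\phi$. Invertibility of $\phi$ and the fact that the zero-curvature operator already lies in $\mathcal{R}_+$ then give the desired equivalence $X_+ = 0 \Leftrightarrow (L^i_+)_{x_j} - (L^j_+)_{x_i} + [L^i_+, L^j_+] = 0$.

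The main obstacle will be the bookkeeping in the WZ integral: identifying the correct $p$-primitive and the precise sequence of cyclic and plus/minus manipulations that cancels all non-KP contributions against the kinetic variation. The $p$-interpolation is what avoids the non-locality of a direct Lagrangian, but it significantly complicates the variation. A secondary subtlety is verifying that the identification $X_+ = 0 \Leftrightarrow$ KP is exact rather than merely one implication, which rests on the invertibility of $\phi$ together with the fact that the zero-curvature operator is purely positive.
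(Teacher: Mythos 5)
Your proposal follows essentially the same route as the paper: the kinetic terms $\res\{\partial^j\phi^{-1}\phi_{x_i}-\partial^i\phi^{-1}\phi_{x_j}\}$ are varied directly using $\delta\phi^{-1}=-\phi^{-1}\delta\phi\,\phi^{-1}$ and cyclicity of the residue, while the $p$-integral is treated exactly as in Lemma \ref{varprop} (differentiate in $p$ using $p\,\partial_p\phi_p=\phi_p-1$, reduce everything to residues of commutators via the Jacobi identity, and collapse to the $p=1$ boundary, the $p=0$ contribution vanishing since $\phi_0=1$). The only small correction is that the resulting variational derivative is $\{\phi^{-1}\bigl((L^i_+)_{x_j}-(L^j_+)_{x_i}+[L^i_+,L^j_+]\bigr)\}_+$, a one-sided product rather than a conjugation by $\phi$; the equivalence of its vanishing with the KP equation follows because $\phi^{-1}=1+O(\partial^{-1})$ acts triangularly on the differential operator $(L^i_+)_{x_j}-(L^j_+)_{x_i}+[L^i_+,L^j_+]\in\mathcal{R}_+$.
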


It is important to note that where $\partial$ appears in this Lagrangian, it signifies an operator that acts on everything to its right, rather than the $x$ derivative of whatever is immediately to its right.  Also, even though $\phi$ consists of an infinite number of components, because this Lagrangian is a residue, only a finite number of these components actually feature.  A proof that \eqref{KPLag} gives the KP equation as its Euler-Lagrange equations is given in \cite{Dickey1987} and repeated here.  We shall require the following lemma:
\begin{lemma}\label{varprop}
The following formula holds:
\begin{equation}\label{int}
\delta\res\bigg\{\int_0^p\tilde p^{-1}[(\phi_{\tilde p}\partial^i\phi_{\tilde p}^{-1})_+,(\phi_{\tilde p}\partial^j\phi_{\tilde p}^{-1})_+]\phi_{\tilde p}^{-1}\textsf{d} \tilde p\bigg\}=-\res\big\{[(\phi_p\partial^i\phi_p^{-1})_+,(\phi_p\partial^j\phi_p^{-1})_+]\delta\phi_p \ \phi_p^{-1}\big\}+\D_x h_1
\end{equation}
with 
\begin{equation}\label{h1formula}
\begin{split}
h_1=\ &  \int\int_0^p\tilde p^{-1}\res\big\{[T[V,S],U]+[[T,U]_+S,V]+[U[V,S]_+,T]+[UT,[V,S]_+]+[T[S,U],V]\\&+[U,[T,V]_+S]+[V[S,U]_+,T]+[VT,[S,U]_+]+[[U,V],TS]+[T,[U,V]S]\big\}\textsf{d} \tilde p\,\textsf{d}x.
\end{split}
\end{equation}
where $S=\phi_{\tilde p}^{-1}$, $T=\delta\phi_{\tilde p} \ \phi_{\tilde p}^{-1}$, $U=(\phi_{\tilde p}\partial^i\phi_{\tilde p}^{-1})_+$ and $V=(\phi_{\tilde p}\partial^j\phi_{\tilde p}^{-1})_+$. This $h_1$ is local.
\end{lemma}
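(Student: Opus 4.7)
The plan is to reduce the integrated identity \eqref{int} to a pointwise-in-$p$ identity by the fundamental theorem of calculus, and then verify the latter by a direct Leibniz-rule expansion, using Lemma~\ref{rescomdiv} and the projection identity \eqref{plusminusprop}. At $p=0$ one has $\phi_0=1$, hence $U_0=\partial^i$, $V_0=\partial^j$, $S_0=1$, $T_0=0$, so both sides of \eqref{int} vanish. By Lemma~\ref{rescomdiv}, each of the ten bracketed residues in \eqref{h1formula} is of the form $\D_x g_k$ for some local $g_k\in\mathcal{A}_\varphi$, so the outer $\int\,\textsf{d}x$ in the definition of $h_1$ acts as a genuine local antiderivative, making $h_1$ itself local. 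After differentiating in $p$ and multiplying by $p$, the claim reduces to the $p$-local identity
\begin{equation}\label{locid}
\delta\res\{[U_p,V_p]S_p\} \;+\; p\,\partial_p\res\{[U_p,V_p]T_p\} \;=\; \res\bigl\{\text{the bracketed sum in \eqref{h1formula}}\bigr\}.
\end{equation}

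The key structural observation is that the variation $\delta$ and the scaled derivation $p\,\partial_p$ act in parallel on the building blocks. Setting $W_i=\phi_p\partial^i\phi_p^{-1}$ and $W_j=\phi_p\partial^j\phi_p^{-1}$, so that $U=(W_i)_+$ and $V=(W_j)_+$, one has $\delta W_i=[T,W_i]$ and $\delta S=-ST$; since $p\,\partial_p\phi_p=\phi_p-1$ gives $p\,\partial_p\phi_p\cdot\phi_p^{-1}=1-S$, one likewise finds $p\,\partial_p W_i=-[S,W_i]$ and $p\,\partial_p S=S^2-S$. Expanding both sides of \eqref{locid} by Leibniz produces sums of residues of triple products built from $\{U,V,W_i,W_j,S,T\}$. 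Repeatedly applying cyclicity modulo $\D_x$ (Lemma~\ref{rescomdiv}) and the identities $\res\{X_+Y\}=\res\{X_+Y_-\}=\res\{XY_-\}$ from \eqref{plusminusprop} to trade the $(\cdot)_+$ projections back and forth, the residual terms that obstruct termwise cancellation assemble into exactly the ten commutators in \eqref{h1formula}. Integrating back in $\tilde p$ against $\tilde p^{-1}\,\textsf{d}\tilde p$ recovers \eqref{int}; the apparent singularity at $\tilde p=0$ is harmless because $T_{\tilde p}$ and $[U_{\tilde p},V_{\tilde p}]$ each vanish to first order in $\tilde p$ (since $\phi_{\tilde p}-1=O(\tilde p)$), so every term in both the integrand of \eqref{int} and of $h_1$ is regular at $\tilde p=0$.

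The main obstacle I foresee is combinatorial: the two Leibniz expansions produce many triple-product residues, and matching their difference precisely to the specific sum of ten commutators in \eqref{h1formula} requires careful bookkeeping and repeated use of \eqref{plusminusprop} and Lemma~\ref{rescomdiv} to move operators past one another inside residues and to convert $(\cdot)_+$ projections into $(\cdot)_-$ projections. Once this algebraic matching is in place, locality of $h_1$ is automatic, since every remainder lies in $\res\{[\cdot,\cdot]\}\subset\D_x\mathcal{A}_\varphi$ by Lemma~\ref{rescomdiv}.
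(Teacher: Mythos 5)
Your setup is exactly the paper's: differentiate \eqref{int} in $p$, multiply by $p$, use $p\,\partial_p\phi_p=\phi_p-1$ so that $p\,\partial_p$ and $\delta$ act in parallel on the building blocks (your formulas $p\,\partial_p W=-[S,W]$, $\delta W=[T,W]$ match the paper's $p\,\partial_p(\phi_p\partial^i\phi_p^{-1})_+=-[\phi_p^{-1},(\phi_p\partial^i\phi_p^{-1})_+]_+$ and $\delta(\phi_p\partial^i\phi_p^{-1})_+=[\delta\phi_p\,\phi_p^{-1},(\phi_p\partial^i\phi_p^{-1})_+]_+$), check the $p=0$ boundary, and deduce locality of $h_1$ from Lemma~\ref{rescomdiv}. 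Your observation that every term of the integrand carries a factor of $T$ or of $[U,V]$, both $O(\tilde p)$, so that the $\tilde p^{-1}$ causes no singularity, is a worthwhile point that the paper leaves implicit.

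The problem is that the entire content of the lemma is the explicit identification of the remainder with the specific ten commutators in \eqref{h1formula}, and this is precisely the step you defer as ``careful bookkeeping.'' Moreover, the plan as you describe it --- cyclicity modulo $\D_x$ plus the projection identities \eqref{plusminusprop} --- does not close on its own. After the Leibniz expansion one arrives at the six-term expression
\[
\res\big\{[[T,U]_+,V]S+[U,[T,V]_+]S+[U,V]TS-[U,V]ST-[[S,U]_+,V]T-[U,[S,V]_+]T\big\},
\]
and when this is reorganised into residues of commutators there remains a leftover
\[
\res\big\{T([U,[V,S]]+[V,[S,U]]+[S,[U,V]])\big\},
\]
which is \emph{not} itself a residue of a commutator and cannot be removed by shuffling $(\cdot)_+$ projections; it vanishes only by the Jacobi identity. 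So your assertion that ``the residual terms that obstruct termwise cancellation assemble into exactly the ten commutators'' is not quite true as stated: they assemble into the ten commutators \emph{plus} this Jacobi obstruction. Since the extra cancellation mechanism is never identified, the argument has a genuine missing ingredient rather than merely an omitted computation.
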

The first part of this result is essentially the same as the one given by Dickey in \cite{Dickey1987}.  However, Dickey does not give an explicit expression for $h_1$, since when considering a single Lagrangian, it is only necessary to show that it is a total $x$ derivative.  In the Lagrangian multiform case, we will require an expression for $h_1$, so it is included here. 

\begin{proof}{of Lemma \ref{varprop}.}
We proceed by taking the $p$ derivative of
\begin{equation}\label{varintprop}
\delta\res\bigg\{\int_0^p\tilde p^{-1}[(\phi_{\tilde p}\partial^i\phi_{\tilde p}^{-1})_+,(\phi_{\tilde p}\partial^j\phi_{\tilde p}^{-1})_+]\phi_{\tilde p}^{-1}\textsf{d} \tilde p\bigg\}+\res\big\{[(\phi_p\partial^i\phi_p^{-1})_+,(\phi_p\partial^j\phi_p^{-1})_+]\delta\phi_p \ \phi_p^{-1}\big\},\end{equation}
multiplying by $p$, and using that $p\dfrac{\partial\phi_p}{\partial p}=\phi_p-1$. This gives us
\begin{equation}\label{deltaint}
\begin{split}
&\delta\res\big\{[(\phi_{p}\partial^i\phi_p^{-1})_+,(\phi_p\partial^j\phi_p^{-1})_+]\phi_p^{-1}\big\}+\res\big\{[(\phi_p\partial^i\phi_p^{-1})_+,(\phi_p\partial^j\phi_p^{-1})_+]\delta\phi_p \ \phi_p^{-2}\big\}\\&+\res\big\{\big( p\frac{\partial}{\partial p}[(\phi_p\partial^i\phi_p^{-1})_+,(\phi_p\partial^j\phi_p^{-1})_+]\big)\delta\phi_p \ \phi_p^{-1}\big\}.
\end{split}
\end{equation}
Again using $p\dfrac{\partial\phi_p}{\partial p}=\phi_p-1$ we find that

\begin{equation}
p\dfrac{\partial}{\partial p}(\phi_p\partial^i\phi_p^{-1})_+=-[\phi_p^{-1},(\phi_p\partial^i\phi_p^{-1})_+]_+.
\end{equation}
We shall also use that

\begin{equation}
\delta(\phi_p\partial^i\phi_p^{-1})_+=[\delta\phi_p \ \phi_p^{-1},(\phi_p\partial^i\phi_p^{-1})_+]_+.
\end{equation}
Letting $S=\phi_p^{-1}$, $T=\delta\phi_p \ \phi_p^{-1}$, $U=(\phi_p\partial^i\phi_p^{-1})_+$ and $V=(\phi_p\partial^j\phi_p^{-1})_+$, \eqref{deltaint} is equivalent to

\begin{equation}\label{jacobi}
\res\big\{[[T,U]_+,V]S+[U,[T,V]_+]S+[U,V]TS-[U,V]ST-[[S,U]_+,V]T-[U,[S,V]_+]T\big\}
\end{equation}
In order to show that this is a total $x$ derivative, we make use of \eqref{rescomderiv}, the property that the residue of a commutator is a total $x$ derivative.  We consider \eqref{jacobi} two terms at a time.  Firstly,

\begin{equation}\label{first}
\begin{split}
&\res\{[[T,U]_+,V]S-[U,[S,V]_+]T\}\\
=&\res\{[T,U]_+[V,S]+[[T,U]_+S,V]+[T,U][V,S]_++[U[V,S]_+,T]+[UT,[V,S]_+]\}\\
=&\res\{[T,U][V,S]+[[T,U]_+S,V]+[U[V,S]_+,T]+[UT,[V,S]_+]\}\\
=&\res\{T[U,[V,S]]+[T[V,S],U]+[[T,U]_+S,V]+[U[V,S]_+,T]+[UT,[V,S]_+]\}.
\end{split}
\end{equation}
Then
\begin{equation}\label{second}
\begin{split}
&\res\{[U,[T,V]_+]S-[[S,U]_+,V]T\}\\
=&\res\{[T,V]_+[S,U]+[U,[T,V]_+S]+[T,V][S,U]_++[V[S,U]_+,T]+[VT,[S,U]_+]\}\\
=&\res\{[T,V][S,U]+[U,[T,V]_+S]+[V[S,U]_+,T]+[VT,[S,U]_+]\}\\
=&\res\{T[V,[S,U]]+[T[S,U],V]+[U,[T,V]_+S]+[V[S,U]_+,T]+[VT,[S,U]_+]\}.
\end{split}
\end{equation}
Finally,

\begin{equation}\label{third}
\begin{split}
&\res\{[U,V]TS-[U,V]ST\}\\
=&\res\{[U,V][T,S]\}\\
=&\res\{T[S,[U,V]]+[[U,V],TS]+[T,[U,V]S]\}.
\end{split}
\end{equation}
Adding \eqref{first}, \eqref{second} and \eqref{third} together, we notice that

\begin{equation}
\res\{T([U,[V,S]]+[V,[S,U]]+[S,[U,V]])\}=0
\end{equation}
by the Jacobi identity, so \eqref{jacobi} is equal to

\begin{equation}\label{totalxderiv}
\begin{split}
&\res\{[T[V,S],U]+[[T,U]_+S,V]+[U[V,S]_+,T]+[UT,[V,S]_+]+[T[S,U],V]+[U,[T,V]_+S]\\
&+[V[S,U]_+,T]+[VT,[S,U]_+]+[[U,V],TS]+[T,[U,V]S]\}.
\end{split}
\end{equation}
Since every term is the residue of a commutator, this is a total $x$ derivative.  We set $h_1$ equal to the local expression obtained by letting $p\to \tilde p$ in \eqref{totalxderiv}, integrating with respect to $\tilde p$ from $0$ to $p$, integrating with respect to $x$ and setting the constant of integration equal to zero (i.e., the expression given in \eqref{h1formula}). It follows that, for this choice of $h_1$, \eqref{int} holds.
\end{proof}

\begin{proof}{of Proposition \ref{DickekLagProp}.}  We use Lemma \ref{varprop} with $p=1$ to obtain

\begin{equation}\label{1stpart}
\delta\res\bigg\{\int_0^1p^{-1}[(\phi_p\partial^i\phi_p^{-1})_+,(\phi_p\partial^j\phi_p^{-1})_+]\phi_p^{-1}\textsf{d} p\bigg\}=-\res\big\{[(\phi\partial^i\phi^{-1})_+,(\phi\partial^j\phi^{-1})_+]\delta\phi \ \phi^{-1}\big\}+\D_x (h_1|_{p=1}).
\end{equation}
Variation of the rest of the Lagrangian \eqref{KPLag} gives us

\begin{equation}\label{2ndpart}
\begin{split}
&\delta\res\{\partial^j\phi^{-1}\phi_{x_i}-\partial^i\phi^{-1}\phi_{x_j}\}\\
=&\D_{x_i}\res\{\partial^j\phi^{-1}\delta\phi\}-\D_{ x_j}\res\{\partial^i\phi^{-1}\delta\phi\}\\
&+\res\{\phi\partial^j\phi^{-1}\phi_{x_i}\phi^{-1}\delta\phi\,\phi^{-1}\}-\res\{\phi\partial^i\phi^{-1}\phi_{x_j}\phi^{-1}\delta\phi\,\phi^{-1}\}\\
&-\res\{\phi_{x_i}\partial^j\phi^{-1}\delta\phi\,\phi^{-1}\}+\res\{\phi_{x_j}\partial^i\phi^{-1}\delta\phi\,\phi^{-1}\}+\partial h_2\\
=&\D_{ x_i}\res\{\partial^j\phi^{-1}\delta\phi\}-\D_{ x_j}\res\{\partial^i\phi^{-1}\delta\phi\}\\
&+\res\{((L^i_+)_{x_j}-(L^j_+)_{ x_i})\delta\phi\,\phi^{-1}\}+\D_x h_2,
\end{split}
\end{equation}
where we have made use of \eqref{plusminusprop} and the fact that $\delta\phi\,\phi^{-1}\in\mathcal{R}_-$ to obtain the the final expression.  Combining \eqref{1stpart} and \eqref{2ndpart} we get

\begin{equation}
\begin{split}
\delta \mathscr{L}_{(1ij)}=&\res\{\big((L^i_+)_{x_j}-(L^j_+)_{ x_i}+[L^i_+,L^j_+]\big)\delta\phi\,\phi^{-1}\}\\
=&\res\{\phi^{-1}\big((L^i_+)_{x_j}-(L^j_+)_{ x_i}+[L^i_+,L^j_+]\big)\delta\phi\}+\D_x h_3,
\end{split}
\end{equation}
so

\begin{equation}\label{standardEL}
\frac{\delta \mathscr{L}_{(1ij)}}{\delta \phi}=\{\phi^{-1}\big((L^i_+)_{x_j}-(L^j_+)_{ x_i}+[L^i_+,L^j_+]\big)\}_+,
\end{equation}
and when set equal to zero, this is equivalent to \eqref{KPdouble}.
\end{proof}

\begin{example}The explicit form of $\mathscr{L}_{(123)}$ given by \eqref{KPLag} is

\begin{equation}
\begin{split}
\mathscr{L}_{{(123)}} =&-U_{{{ xxx_3}}}+X_{{x_2}}-VU_{{{ xx_2}}}-WU_{{x_2}}-VV_{{x_2}}-{U}^{2}U_{{x_3}}+VU_{{x_3}}+UU_{{{ xx_3}}}
+{U}^{2}U_{{{ xx_2}}}\\&+UV_{{x_3}}+{U}^{2}V_{{x_2}}-UU_{{{ xxx_2}}}-{U}^{3}U_{{x_2}}-UW_{{x_2}}-2UV_{{{ xx_2}}}-3V_{{x}}U_{{x_2}}-3U_{{{ xx}}}U_{{x_2}}
+2U_{{x}}U_{{x_3}}\\&-3U_{{x}}V_{{x_2}}-3U_{{x}}U_{{{ xx_2}}}-W_{{x_3}}+U_{{{ xxxx_2}}}-\frac{3}{2}UV_{{{ xxx}}}
-\frac{3}{2}U_{{{ xxx}}}V-3V_{{{ xx}}}V-\frac{3}{2}{U_{{x}}}^{2}{U}^{2}\\&+2U_{{{ xxx}}}{U}^{2}+2V_{{{ xx}}}{U}^{2}+2{U_{{x}}}^{2}V-\frac{1}{2}UU_{{{ xxxx}}}
-\frac{3}{2}U_{{x}}U_{{{ xxx}}}-3U_{{x}}V_{{{ xx}}}-\frac{3}{2}U_{{{ xx}}}{U}^{3}+2{U_{{x}}}^{3}\\&+3W_{{{ xx_2}}}-2V_{{{ xx_3}}}+3V_{{{ xxx_2}}}+5UU_{{x}}U_{{x_2}}
+2UVU_{{x_2}}+3U_{{{ xx}}}U_{{x}}U+2U_{{{ xx}}}VU,
\end{split}
\end{equation}
where $U=\varphi_0$, $V=\varphi_1$, $W=\varphi_2$ and $X=\varphi_3$.  This was calculated using Maple and PSEUDO \cite{Brunelli}.  Note that although $X$ and $Y$ appear in this Lagrangian, their presence is trivial in that they do not contribute to or feature in the resulting Euler-Lagrange equations.  We can simplify $\mathscr{L}_{(123)}$ considerably by subtracting total derivatives to obtain the equivalent Lagrangian 

\begin{equation}
\tilde{\mathscr{L}}_{(123)}=3U_{x}^2U^2 - \frac{3}{2}U_{x x_{2}}U^2 + 3V_{x x}U^2  +\frac{5}{2}U_{x}^3  + U_xU_{x_{3}} + U_{xx}^2 - 3U_xV_{ x_{2}} - 3U_xV_{ x x}   +3V_{x}^2
\end{equation}
that gives identical Euler-Lagrange equations.  The variational derivatives with respect to $U$ and $V$ are

\begin{equation}
\begin{split}
\frac{\delta \mathscr{L}_{{(123)}}}{\delta U}=&-6U^2U_{x x} - 6UU_{x}^2 - 6UU_{x x_2} + 6UV_{x x} - 3U_{x}U_{2} - 15U_{x}U_{x x} - 2U_{x x_3} + 2U_{x x x x}\\& + 3V_{x x_2} + 3V_{x x x}\\
\frac{\delta \mathscr{L}_{{(123)}}}{\delta V}=&\,6UU_{xx} + 6U_{x}^2 - 3U_{xxx} + 3U_{xx_2} - 6V_{xx},
\end{split}
\end{equation}
giving us that

\begin{equation}
\begin{split}
\frac{\delta \mathscr{L}_{(123)}}{\delta \phi}=&\partial \frac{\delta \mathscr{L}_{{(123)}}}{\delta V}+\frac{\delta \mathscr{L}_{{(123)}}}{\delta U}\\
=&\frac{\delta \mathscr{L}_{{(123)}}}{\delta V}\partial+ \D_x\frac{\delta \mathscr{L}_{{(123)}}}{\delta V}+\frac{\delta \mathscr{L}_{{(123)}}}{\delta U} \\
=&(6UU_{xx} + 6U_{x}^2 - 3U_{xxx} + 3U_{xx_2} - 6V_{xx})\partial-U_{xxxx} + 6UU_{xxx} + 3U_{xxx_2} - 3V_{xxx}\\& + -6U^2U_{xx} + 3U_{x}U_{xx} - 6UU_{x}^2 - 6UU_{xx_2} + 6UV_{xx} - 3U_{x_2}U_{x} - 2U_{xx_3} + 3V_{xx_2}
\end{split}
\end{equation}

Since the Euler Lagrange equations \eqref{standardEL} have a pre-factor of $\phi^{-1}$, we calculate

\begin{equation}
\begin{split}
\bigg(\phi\frac{\delta \mathscr{L}_{{(123)}}}{\delta \phi}\bigg)_+=&\,(6 U U_{xx}+6 U_{x}^2-3 U_{xxx}+3 U_{xx_2}-6 V_{xx}) \partial-3 U_{x_2} U_{x}-3 U U_{xx_2}\\&+3 U_{xxx_2}+3 V_{xx_2}-2 U_{xx_3}+3 U U_{xxx}+3 U_{x} U_{xx}-U_{xxxx}-3 V_{xxx}.
\end{split}
\end{equation}
Making the substitution $u_1=-U_x$, $u_2=UU_x-V_x$ (based on the expansion \eqref{varphiexp}), this becomes

\begin{equation}
(3u_1^{(2)}-3(u_1)_{x_2}+6u_2^{(1)})\partial + 2(u_1)_{x_3}-3(u_1^{(1)})_{x_2}-3(u_2)_{x_2}-6u_1u_1^{(1)}+u_1^{(3)}+3u_2^{(2)}.
\end{equation}
Setting this equal to zero gives us equations that are equivalent to \eqref{KPsimult}.
\end{example}

\section{Lagrangian multiforms for the KP hierarchy}\label{KPMFSect}

In this section we present two closely related Lagrangian multiform structures for the KP hierarchy.  Let
\begin{equation}\label{multiformM}
\textsf{M}=\sum_{1\leq i<j<k}\mathscr{L}_{(ijk)}\textsf{d}x_i\wedge\textsf{d}x_j\wedge\textsf{d}x_k.
\end{equation}
be a differential 3-form.  We shall define the coefficients $\mathscr{L}_{(ijk)}$ such that the PDEs defined by $\delta\textsf{dM}=0$ are the full set of equations of the KP hierarchy, and we shall show that on these equations $\textsf{dM}=0$.  We define $P_{(ijkl)}$ such that

\begin{equation}\label{1stPdef}
\textsf{dM}=\sum_{1\leq i<j<k<l}{P}_{(ijkl)}\textsf{d}x_i\wedge\textsf{d}x_j\wedge\textsf{d}x_k\wedge\textsf{d}x_l.
\end{equation}
and will show that each $P_{(1ijk)}$ has a double zero on the equations of the KP hierarchy, so the coefficients $P_{(1ijk)}$ will be of the form
 
 \begin{equation}
 \sum_{\gamma=1}^nA_\gamma B_\gamma
 \end{equation}
 where each $A_\gamma$ and $B_\gamma$ is zero on the equations of the KP hierarchy.  More specifically, the $A_\gamma$ will be of the form

\begin{equation}
(L^i_+)_{x_j}-(L^j_+)_{ x_i}+[L^i_+,L^j_+]
\end{equation}
whilst the $B_\gamma$ will be of the form

\begin{equation}
\phi_{x_i}\phi^{-1}+L_-^i,
\end{equation}
giving us the required double zero.   Then
 
 \begin{equation}
\delta P_{(1ijk)}=\sum_{\gamma=1}^n \delta A_\gamma B_\gamma+A_\gamma \delta B_\gamma
\end{equation}
so the equations given by $\delta P_{(1ijk)}=0$ will be a subset of the equations of the KP hierarchy.  In order for the equations given by $\delta P_{(1ijk)}=0$ for all $1<i,j,k$ to be the full set of equations of the KP hierarchy, we require that the factors $A_\gamma$ and $B_\gamma$ span the set of equations of the KP hierarchy, and also that the $A_\gamma$ and $B_\gamma$ are non-degenerate.  Rather than show this directly, we will instead show the equivalent result that the full set of equations of the KP hierarchy arise from the Euler-Lagrange equations of the $\mathscr{L}_{(1ij)}$ Lagrangians.  Then, for the ${P}_{(ijkl)}$ where $1<i,j,k,l$ we will show that $\delta {P}_{(ijkl)}=0$ on the equations of the KP hierarchy.  Together, these results will show that the multiform Euler-Lagrange equations given by $\delta\textsf{dM}=0$ are a subset of the equations of the KP hierarchy, and include the entire KP hierarchy.  It follows that the multiform Euler-Lagrange equations are precisely the equations of the KP hierarchy.\\

The factorised form of $P_{(1ijk)}$ in terms of the $A_\gamma$ and $B_\gamma$ would suggest that as well as giving us equations in the form
\begin{equation}\label{newKPdouble}
(L^i_+)_{x_j}-(L^j_+)_{ x_i}+[L^i_+,L^j_+]=0,
\end{equation}
the multiform Euler-Lagrange equations should also include KP equations of the type
\begin{equation}\label{type3}
\phi_{x_i}\phi^{-1}+L_-^i=0.
\end{equation}
However, Corollary \ref{KPeqcoroll} tells us that the set of equations of the form of \eqref{newKPdouble} for all $i,j>0$ is equivalent to the set of equations of the form of \eqref{type3} for all $i>0$, so we are free to view either of these equivalent sets of equations as the complete set of multiform Euler-Lagrange equations for $\textsf{M}$.\\

\subsection{A Lagrangian multiform for KP based on Dickey's Lagrangian}

We define
\begin{equation}\label{Adef}
\begin{split}
\Gamma_{ijk}:=&\frac{1}{2}([\phi\partial^k\phi^{-1}\phi_{x_i}\phi^{-1}\phi_{x_j},\phi^{-1}]
+[\phi\partial^j\phi^{-1}\phi_{x_k}\phi^{-1}\phi_{x_i},\phi^{-1}]
+[\phi\partial^i\phi^{-1}\phi_{x_j}\phi^{-1}\phi_{x_k},\phi^{-1}]\\
&-[\phi\partial^k\phi^{-1}\phi_{x_j}\phi^{-1}\phi_{x_i},\phi^{-1}]
-[\phi\partial^j\phi^{-1}\phi_{x_i}\phi^{-1}\phi_{x_k},\phi^{-1}]
-[\phi\partial^i\phi^{-1}\phi_{x_k}\phi^{-1}\phi_{x_j},\phi^{-1}]\\
&+[\phi_{x_j},\partial^k\phi^{-1}\phi_{x_i}\phi^{-1}]
+[\phi_{x_i},\partial^j\phi^{-1}\phi_{x_k}\phi^{-1}]
+[\phi_{x_k},\partial^i\phi^{-1}\phi_{x_j}\phi^{-1}]\\
&-[\phi_{x_i},\partial^k\phi^{-1}\phi_{x_j}\phi^{-1}]
-[\phi_{x_k},\partial^j\phi^{-1}\phi_{x_i}\phi^{-1}]
-[\phi_{x_j},\partial^i\phi^{-1}\phi_{x_k}\phi^{-1}]),
\end{split}
\end{equation}

\begin{equation}\label{Bdef}
\begin{split}
\Delta_{ij,k}:=-\int_0^1p^{-1}\big([&T[V,S],U]+[[T,U]_+S,V]+[U[V,S]_+,T]+[UT,[V,S]_+]+[T[S,U],V]\\
&+[U,[T,V]_+S]+[V[S,U]_+,T]+[VT,[S,U]_+]+[[U,V],TS]+[T,[U,V]S]\big)\textsf{d}  p
\end{split}
\end{equation}
where $S=\phi_{ p}^{-1}$, $T=(\phi_{ p})_{x_k}  \phi_{ p}^{-1}$, $U=(\phi_{ p}\partial^i\phi_{ p}^{-1})_+$ and $V=(\phi_{ p}\partial^j\phi_{ p}^{-1})_+$,

\begin{equation}\label{Cdef}
\Theta_{ij,k}:=\frac{1}{2}\big([\phi_{x_k}\phi^{-1},L^i_+L^j_-]+[L^j_-,L^i_+\phi_{x_k}\phi^{-1}]+[L^j_+\phi_{x_k}\phi^{-1},L^i_-]+[L^j_+L^i_-,\phi_{x_k}\phi^{-1}]\big)
\end{equation}
and

\begin{equation}\label{Edef}
\begin{split}
\Lambda_{ijk}:=\frac{1}{2}\big([L^i_+L^j_--L^j_+L^i_-,L^k]+[L^k_+L^i_-,L^j_+]+[L^i_+,L^k_+L^j_-]+[L^i_-,L^{j+k}]+[L^{i+k},L^j_-]\big).
\end{split}
\end{equation}
In these definitions, $L$ is used as an abbreviation of $\phi\partial\phi^{-1}$, so all of the above are pseudodifferential operators whose coefficients are in terms of $\varphi_\beta$ and their derivatives.

\begin{theorem}\label{bigtheorem}
The 3-form

\begin{equation}
\textsf{M}=\sum_{1\leq i<j<k}\mathscr{L}_{(ijk)}\textsf{d}x_i\wedge\textsf{d}x_j\wedge\textsf{d}x_k
\end{equation}
with coefficients

\begin{equation}
\mathscr{L}_{(1jk)}=\res\bigg\{-\int_0^1p^{-1}[(\phi_p\partial^j\phi_p^{-1})_+,(\phi_p\partial^k\phi_p^{-1})_+]\phi_p^{-1}\textsf{d} p+\partial^k\phi^{-1}\phi_{x_j}-\partial^j\phi^{-1}\phi_{x_k}\bigg\}
\end{equation}
and
\begin{equation}\label{commLagrangian}
\mathscr{L}_{(ijk)}=\int\res\big\{\Gamma_{ijk}+\Delta_{ij,k}+\Delta_{jk,i}+\Delta_{ki,j}+\Theta_{ij,k}+\Theta_{jk,i}+\Theta_{ki,j}+\Lambda_{ijk}\big\}\textsf{d}x
\end{equation}
(with the constant of integration set to zero) when $i>1$ is a Lagrangian multiform for the KP hierarchy.  Each $\mathscr{L}_{(ijk)}$ is a local expression in the fields $\varphi_\beta$ and their derivatives.  The multiform Euler-Lagrange equations given by $\delta\textsf{dM}=0$ are the full set of equations of the KP hierarchy and consequences thereof.  On the equations of the KP hierarchy, $\textsf{dM}=0$.
\end{theorem}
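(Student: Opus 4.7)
The plan is to implement the roadmap sketched just before the theorem. I split the argument into three parts: (a) locality of the $\mathscr{L}_{(ijk)}$ for $i>1$; (b) identification of the coefficients $P_{(ijkl)}$ of $\textsf{dM}$ and their ``double zero'' structure on KP; and (c) equivalence of the multiform Euler–Lagrange system with the KP hierarchy, plus the closure $\textsf{dM}=0$ on KP.

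For (a), every summand in the definitions \eqref{Adef}--\eqref{Edef} is a commutator of pseudodifferential operators in $\mathcal{R}_\varphi$. By Lemma \ref{rescomdiv} the residue of each such commutator is a total $x$-derivative, so with the constant of integration set to zero, $\int \res\{\cdot\}\,\textsf{d}x$ defines a local element of $\mathcal{A}_\varphi$; for $\Delta_{ij,k}$ the additional integration in the scalar parameter $p$ preserves locality.

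For (b), compute the exterior derivative directly,
\begin{equation*}
P_{(ijkl)}=\D_{x_i}\mathscr{L}_{(jkl)}-\D_{x_j}\mathscr{L}_{(ikl)}+\D_{x_k}\mathscr{L}_{(ijl)}-\D_{x_l}\mathscr{L}_{(ijk)}.
\end{equation*}
For the ``mixed'' coefficient $P_{(1jkl)}$, three of the four Lagrangians are Dickey Lagrangians. I will differentiate $\mathscr{L}_{(1jk)}$, $\mathscr{L}_{(1jl)}$, $\mathscr{L}_{(1kl)}$ along the remaining flows using Lemma \ref{varprop}, whose explicit total-$x$-derivative correction $h_1$ from \eqref{h1formula} is the whole point of the integrated kernel defining $\Delta_{ij,k}$ in \eqref{Bdef}. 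The antisymmetrised commutators $\Gamma_{ijk}$, the boundary term $\Theta_{ij,k}$, and the pure Lax term $\Lambda_{ijk}$ are engineered so that, after applying $\D_{x_1}$ to $\mathscr{L}_{(jkl)}$ and reorganising all terms modulo total $x$-derivatives via Lemma \ref{rescomdiv} and the Jacobi identity, $P_{(1jkl)}$ collapses to a sum
\begin{equation*}
P_{(1jkl)}=\res\Bigl\{\sum_\gamma A_\gamma B_\gamma\Bigr\}+\D_x(\text{local}),\qquad A_\gamma\in\{(L^a_+)_{x_b}-(L^b_+)_{x_a}+[L^a_+,L^b_+]\},\ \ B_\gamma\in\{\phi_{x_c}\phi^{-1}+L^c_-\},
\end{equation*}
i.e.\ a sum of products of the two equivalent forms of the KP hierarchy equations from Corollary \ref{KPeqcoroll}. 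This is the advertised double zero: both $A_\gamma$ and $B_\gamma$ vanish on KP, so $P_{(1jkl)}=0$ on KP and the variation $\delta P_{(1jkl)}=\sum_\gamma(\delta A_\gamma\, B_\gamma+A_\gamma\,\delta B_\gamma)$ still contains one factor vanishing on KP. For $P_{(ijkl)}$ with $1<i<j<k<l$, all four Lagrangians are of the form \eqref{commLagrangian}; a parallel expansion, using Leibniz for the $\D_{x_a}$'s on the commutators inside $\Gamma,\Delta,\Theta,\Lambda$, together with the Jacobi identity inside $\res\{\cdot\}$ and the identity \eqref{plusminusprop}, produces an expression again built entirely from factors $A_\gamma$ and $B_\gamma$, whence $\delta P_{(ijkl)}=0$ on KP.

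For (c), step (b) already shows that the multiform Euler–Lagrange equations $\delta\textsf{dM}=0$ are contained in the KP hierarchy. Conversely, restricting attention to the $\mathscr{L}_{(1ij)}$ pieces, Proposition \ref{DickekLagProp} says that the ordinary Euler–Lagrange equation associated with each $\mathscr{L}_{(1ij)}$ is precisely the zero-curvature equation \eqref{KPdouble} for the pair $(i,j)$, and ranging $1<i<j$ exhausts the entire KP hierarchy by Lemma \ref{lemmazerotolax}. Hence $\delta\textsf{dM}=0$ coincides with the KP hierarchy, and the double-zero factorisation in (b) gives $\textsf{dM}=0$ on these equations. The main obstacle is the algebraic assembly in step (b): showing that $\D_{x_1}$ acting on the bulk Lagrangian \eqref{commLagrangian} lines up term-by-term with the derivatives of the Dickey Lagrangians along the other flows, so that cancellations produce the double-zero form. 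This is not a deep single identity but a substantial combinatorial exercise based on Jacobi, the plus/minus split, the residue identity \eqref{plusminusprop}, and the explicit $h_1$ from \eqref{h1formula}; the auxiliary objects $\Gamma,\Delta,\Theta,\Lambda$ are tailored exactly to make this bookkeeping close.
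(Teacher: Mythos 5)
Your parts (a) and (c), and your treatment of the mixed coefficients $P_{(1jkl)}$, follow the paper's route: locality via Lemma \ref{rescomdiv}, the double-zero factorisation of $P_{(1ijk)}$ assembled from Lemmas \ref{Alemma}--\ref{Elemma}, and the identification of the full hierarchy through the Euler--Lagrange equations of the Dickey Lagrangians together with Lemma \ref{lemmazerotolax} and Corollary \ref{KPeqcoroll}. The genuine gap is your one-sentence treatment of $P_{(ijkl)}$ with $1<i<j<k<l$. You assert that ``a parallel expansion \ldots produces an expression again built entirely from factors $A_\gamma$ and $B_\gamma$,'' i.e.\ that these coefficients also admit a manifest double-zero factorisation. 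That is precisely the claim the paper does \emph{not} make and does \emph{not} prove by direct computation: the text preceding the theorem only promises a double zero for the $P_{(1ijk)}$, and for the remaining coefficients promises only that $\delta P_{(ijkl)}=0$ and $P_{(ijkl)}=0$ on shell. Since all four Lagrangians entering $P_{(ijkl)}$ are of the bulk form \eqref{commLagrangian}, which is itself defined only up to an $x$-antiderivative of a residue, there is no reason to expect the Leibniz/Jacobi bookkeeping to close into a clean sum of products $A_\gamma B_\gamma$, and you give no evidence that it does. As written, this step is an unproven assertion on which the closure relation $\textsf{dM}=0$ and the containment $\delta\textsf{dM}=0\subseteq$ KP both depend.

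The paper circumvents this computation entirely with an argument you are missing. Because $\textsf{d}^2\textsf{M}=0$ identically, one has
\begin{equation}
\D_{x}P_{(ijkl)}=\D_{x_i}P_{(1jkl)}-\D_{x_j}P_{(1ikl)}+\D_{x_k}P_{(1ijl)}-\D_{x_l}P_{(1ijk)},
\end{equation}
and the right-hand side inherits a double zero on KP from the already-established factorisations of the $P_{(1\cdot\cdot\cdot)}$. Hence every partial derivative of $\D_xP_{(ijkl)}$ with respect to the jet variables $(\varphi_\beta^{(n)})_I$ vanishes on shell, and the commutation identity \eqref{usefulid} then allows a downward induction on the order $n$ of $x$-differentiation to conclude that every partial derivative of $P_{(ijkl)}$ itself vanishes on shell, i.e.\ $\delta P_{(ijkl)}=0$. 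Finally, autonomy plus the vanishing of all first partials forces $P_{(ijkl)}$ to be constant along every flow, and evaluating on the zero solution (for which a polynomial with no constant term vanishes) gives $P_{(ijkl)}=0$. You should either supply this descent argument or actually exhibit the factorisation you claim; without one of the two, the proof of both remaining assertions of the theorem is incomplete.
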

We have constructed $\mathscr{L}_{(ijk)}$ in this way so that 

\begin{equation}
\textsf{dM}=\sum_{1\leq i<j<k<l}{P}_{(ijkl)}\textsf{d}x_i\wedge\textsf{d}x_j\wedge\textsf{d}x_k\wedge\textsf{d}x_l.
\end{equation}
 has a double zero on the equations of the KP hierarchy.  In particular, this $\mathscr{L}_{(ijk)}$ is such that

\begin{equation}\label{KPdl1st}
\begin{split}
P_{(1ijk)}=&-\D_{x_k}\mathscr{L}_{(1ij)}-\D_{x_i}\mathscr{L}_{(1jk)}+\D_{x_j}\mathscr{L}_{(1ik)}+\D_{x_1}\mathscr{L}_{(ijk)}\\
=&-\res\big\{\frac{1}{2}((L^i_+)_{x_j}-(L^j_+)_{ x_i}+[L^i_+,L^j_+])(\phi_{x_k}\phi^{-1}+L_-^k)\\
&\quad\quad+\frac{1}{2}((L^j_+)_{x_k}-(L^k_+)_{ x_j}+[L^j_+,L^k_+])(\phi_{x_i}\phi^{-1}+L_-^i)\\
&\quad\quad+\frac{1}{2}((L^k_+)_{x_i}-(L^i_+)_{ x_k}+[L^k_+,L^i_+])(\phi_{x_j}\phi^{-1}+L_-^j)\big\}.
\end{split}
\end{equation}
Before we can show this to be the case, we shall require a number of lemmas.  Lemmas \ref{Alemma} and \ref{Blemma} are closely related to Dickey's computations to obtain the Euler-Lagrange equations of his KP Lagrangian that we reproduced in Section \ref{KPLagSect}.  Lemma \ref{Clemma} then re-arranges some of the resulting terms to get us closer to \eqref{KPdl1st}, whilst Lemma \ref{Elemma} gives us the terms in \eqref{KPdl1st} that do not contain any $x_i$, $x_j$ or $x_k$ derivatives.  Also, it is important to note that each of $\Gamma_{ijk}$, $\Delta_{ij,k}$, $\Theta_{ij,k}$ and $\Lambda_{ijk}$ are expressed in terms of the residue of commutators.  Therefore they are all total $x$ derivatives so can be integrated with respect to $x$ to obtain a local expression for $\mathscr{L}_{(ijk)}$.

\begin{lemma}\label{Alemma}
The $\Gamma_{ijk}$ defined in \eqref{Adef} is such that

\begin{equation}
\begin{split}
&\D_{x_i}(\partial^k\phi^{-1}\phi_{x_j}-\partial^j\phi^{-1}\phi_{x_k})+\D_{x_j}(\partial^i\phi^{-1}\phi_{x_k}-\partial^k\phi^{-1}\phi_{x_i})+\D_{x_k}(\partial^j\phi^{-1}\phi_{x_i}-\partial^i\phi^{-1}\phi_{x_j})\\
&=\frac{1}{2}(-(L^k)_{x_j}\phi_{x_i}+(L^j)_{x_k}\phi_{x_i}-(L^i)_{x_k}\phi_{x_j}+(L^k)_{x_i}\phi_{x_j}-(L^j)_{x_i}\phi_{x_k}+(L^i)_{x_j}\phi_{x_k})\phi^{-1}+\Gamma_{ijk}.
\end{split}
\end{equation}

\end{lemma}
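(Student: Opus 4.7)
The plan is to expand both sides as polynomial identities in the ring $\mathcal{R}_\varphi$ and match them term by term. First I would apply the Leibniz rule to each summand on the left, using the identity $(\phi^{-1})_{x_a}=-\phi^{-1}\phi_{x_a}\phi^{-1}$ (which follows from differentiating $\phi\phi^{-1}=1$) and the fact that the constant operator $\partial^b$ commutes trivially with $\D_{x_a}$. This produces terms of the form $\partial^b\phi^{-1}\phi_{x_a x_c}$ together with terms of the form $\pm\,\partial^b\phi^{-1}\phi_{x_a}\phi^{-1}\phi_{x_c}$. Because the left-hand side is a cyclic antisymmetrised sum in $(i,j,k)$, the mixed second-order derivatives cancel pairwise, leaving only six terms of the latter type.

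Next I would unpack each $(L^b)_{x_a}$ on the right using $L^b=\phi\partial^b\phi^{-1}$, which gives the clean expression
\begin{equation*}
(L^b)_{x_a}=\phi_{x_a}\partial^b\phi^{-1}-L^b\phi_{x_a}\phi^{-1}.
\end{equation*}
Consequently each of the six $(L^b)_{x_a}\phi_{x_c}\phi^{-1}$ contributions on the right splits into a term of the form $\phi_{x_a}\partial^b\phi^{-1}\phi_{x_c}\phi^{-1}$ and a term of the form $L^b\phi_{x_a}\phi^{-1}\phi_{x_c}\phi^{-1}$. The role of $\Gamma_{ijk}$ is then to absorb the discrepancy between the cyclic orderings produced by differentiating $\phi^{-1}$ on the left and the two orderings produced on the right: expanding the commutator $[\phi\partial^k\phi^{-1}\phi_{x_i}\phi^{-1}\phi_{x_j},\phi^{-1}]$ yields precisely $L^k\phi_{x_i}\phi^{-1}\phi_{x_j}\phi^{-1}-\partial^k\phi^{-1}\phi_{x_i}\phi^{-1}\phi_{x_j}$, which converts an "$L^b$-form" term from the right into a "$\partial^b\phi^{-1}$-form" term matching the left; similarly $[\phi_{x_j},\partial^k\phi^{-1}\phi_{x_i}\phi^{-1}]$ converts a "$\phi_{x_a}$-outside" term on the right to the same left-hand form. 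The two groups of six commutators in \eqref{Adef} are exactly one of each kind for each oriented triple, matching the six antisymmetrised combinations of $(a,b,c)\in\{i,j,k\}$.

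The factor $\tfrac{1}{2}$ on the right is the signature that the identity works as a symmetrisation: each of the six LHS terms is recovered by averaging its "$L^b$-form" and "$\phi_{x_a}$-outside" presentations on the right, with the two corresponding commutator groups in $\Gamma_{ijk}$ supplying the necessary correction in each case. With this conceptual picture in place, the proof reduces to verifying that when one collects coefficients of each monomial $\partial^b\phi^{-1}\phi_{x_a}\phi^{-1}\phi_{x_c}$ (and its reorderings $\phi_{x_a}\partial^b\phi^{-1}\phi_{x_c}\phi^{-1}$ and $L^b\phi_{x_a}\phi^{-1}\phi_{x_c}\phi^{-1}$), the signs and multiplicities balance exactly.

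The main obstacle is purely bookkeeping: one must track the six signed permutations on the LHS, the twelve half-weighted contributions that arise after expanding $(L^b)_{x_a}$ on the RHS, and the twelve commutators of $\Gamma_{ijk}$, and verify the combinatorial identity sign-by-sign. No deeper analytic tool beyond the Leibniz rule and the dressing relation is required, but the calculation is lengthy and sensitive to the conventions on antisymmetry of the subscripted Lagrangians stated at the end of Section~\ref{LMintro}. A careful tabulation organised by the unordered pair $\{a,c\}$ and the distinguished index $b$ is the cleanest way to complete the verification.
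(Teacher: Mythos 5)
Your proposal follows essentially the same route as the paper's proof: expand the left-hand side with the Leibniz rule and $(\phi^{-1})_{x_a}=-\phi^{-1}\phi_{x_a}\phi^{-1}$, note that the mixed second derivatives cancel in the antisymmetrised sum, and then recover each surviving term $-\partial^b\phi^{-1}\phi_{x_a}\phi^{-1}\phi_{x_c}$ as the average of its $L^b$-conjugated and $\phi_{x_c}$-commuted presentations, with the two families of commutators in \eqref{Adef} supplying exactly the corrections you describe. The remaining sign-by-sign tabulation you defer is precisely the computation displayed in \eqref{kineticclosure}, so the argument is correct and complete in outline.
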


\begin{proof}{of Lemma \ref{Alemma}}
\begin{equation}
\begin{split}
&\D_{x_i}(\partial^k\phi^{-1}\phi_{x_j}-\partial^j\phi^{-1}\phi_{x_k})+\D_{x_j}(\partial^i\phi^{-1}\phi_{x_k}-\partial^k\phi^{-1}\phi_{x_i})+\D_{x_k}(\partial^j\phi^{-1}\phi_{x_i}-\partial^i\phi^{-1}\phi_{x_j})\\
=&\,\partial^k\phi^{-1}\phi_{x_j}\phi^{-1}\phi_{x_i}
+\partial^i\phi^{-1}\phi_{x_k}\phi^{-1}\phi_{x_j}
+\partial^j\phi^{-1}\phi_{x_i}\phi^{-1}\phi_{x_k}\\&
-\partial^k\phi^{-1}\phi_{x_i}\phi^{-1}\phi_{x_j}-\partial^i\phi^{-1}\phi_{x_j}\phi^{-1}\phi_{x_k}-\partial^j\phi^{-1}\phi_{x_k}\phi^{-1}\phi_{x_i}.
\end{split}
\end{equation}
We now use commutators to get this in the form $(L^i)_{x_j}\phi_{x_k}\phi^{-1}$:
\begin{equation}\label{kineticclosure}
\begin{split}
=&\frac{1}{2}(-\phi\partial^k\phi^{-1}\phi_{x_i}\phi^{-1}\phi_{x_j}\phi^{-1}+\phi\partial^j\phi^{-1}\phi_{x_i}\phi^{-1}\phi_{x_k}\phi^{-1}-\phi\partial^i\phi^{-1}\phi_{x_j}\phi^{-1}\phi_{x_k}\phi^{-1}\\
&+\phi\partial^k\phi^{-1}\phi_{x_j}\phi^{-1}\phi_{x_i}\phi^{-1}-\phi\partial^j\phi^{-1}\phi_{x_k}\phi^{-1}\phi_{x_i}\phi^{-1}+\phi\partial^i\phi^{-1}\phi_{x_k}\phi^{-1}\phi_{x_j}\phi^{-1})\\
&+\frac{1}{2}(-\phi_{x_j}\partial^k\phi^{-1}\phi_{x_i}\phi^{-1}+\phi_{x_k}\partial^j\phi^{-1}\phi_{x_i}\phi^{-1}-\phi_{x_k}\partial^i\phi^{-1}\phi_{x_j}\phi^{-1}\\
&+\phi_{x_i}\partial^k\phi^{-1}\phi_{x_j}\phi^{-1}-\phi_{x_i}\partial^j\phi^{-1}\phi_{x_k}\phi^{-1}+\phi_{x_j}\partial^i\phi^{-1}\phi_{x_k}\phi^{-1})\\
&+\frac{1}{2}([\phi\partial^k\phi^{-1}\phi_{x_i}\phi^{-1}\phi_{x_j},\phi^{-1}]
+[\phi\partial^j\phi^{-1}\phi_{x_k}\phi^{-1}\phi_{x_i},\phi^{-1}]
+[\phi\partial^i\phi^{-1}\phi_{x_j}\phi^{-1}\phi_{x_k},\phi^{-1}]\\
&-[\phi\partial^k\phi^{-1}\phi_{x_j}\phi^{-1}\phi_{x_i},\phi^{-1}]
-[\phi\partial^j\phi^{-1}\phi_{x_i}\phi^{-1}\phi_{x_k},\phi^{-1}]
-[\phi\partial^i\phi^{-1}\phi_{x_k}\phi^{-1}\phi_{x_j},\phi^{-1}]\\
&+[\phi_{x_j},\partial^k\phi^{-1}\phi_{x_i}\phi^{-1}]
+[\phi_{x_i},\partial^j\phi^{-1}\phi_{x_k}\phi^{-1}]
+[\phi_{x_k},\partial^i\phi^{-1}\phi_{x_j}\phi^{-1}]\\
&-[\phi_{x_i},\partial^k\phi^{-1}\phi_{x_j}\phi^{-1}]
-[\phi_{x_k},\partial^j\phi^{-1}\phi_{x_i}\phi^{-1}]
-[\phi_{x_j},\partial^i\phi^{-1}\phi_{x_k}\phi^{-1}])\\
=&\frac{1}{2}(-(L^k)_{x_j}\phi_{x_i}+(L^j)_{x_k}\phi_{x_i}-(L^i)_{x_k}\phi_{x_j}+(L^k)_{x_i}\phi_{x_j}-(L^j)_{x_i}\phi_{x_k}+(L^i)_{x_j}\phi_{x_k})\phi^{-1}+\Gamma_{ijk}.
\end{split}
\end{equation}

\end{proof}

\begin{lemma}\label{Blemma}
The $\Delta_{ij,k}$ defined in \eqref{Bdef} is such that
\begin{equation}\label{PEpart}
\begin{split}
&\D_{x_k}\res\bigg\{-\int_0^1p^{-1}[(\phi_p\partial^i\phi_p^{-1})_+,(\phi_p\partial^j\phi_p^{-1})_+]\phi_p^{-1}\textsf{d} p\bigg\}\\
=&\res\big\{[(\phi\partial^i\phi^{-1})_+,(\phi\partial^j\phi^{-1})_+]\phi_{x_k} \ \phi^{-1}\big\}+\res\{ \Delta_{ij,k}\}
\end{split}
\end{equation}

\end{lemma}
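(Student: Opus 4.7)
The plan is to recycle the proof of Lemma \ref{varprop} essentially verbatim, replacing the variational derivation $\delta$ by the total derivative $\D_{x_k}$. Both $\delta$ and $\D_{x_k}$ are derivations on $\mathcal{A}_\varphi$ that extend to $\mathcal{R}_\varphi$, commute with $\D_x$, with $\partial/\partial p$, and with integration over $p$, and satisfy the Leibniz rule. None of the algebraic manipulations in the proof of Lemma \ref{varprop} use any property of $\delta$ beyond these, so the same computation transports to $\D_{x_k}$ with $T=(\phi_p)_{x_k}\phi_p^{-1}$ in place of $T=\delta\phi_p\,\phi_p^{-1}$.

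Concretely, I would introduce the auxiliary quantity
\begin{equation*}
G(p):=\D_{x_k}\res\bigg\{\int_0^p \tilde p^{-1}[(\phi_{\tilde p}\partial^i\phi_{\tilde p}^{-1})_+,(\phi_{\tilde p}\partial^j\phi_{\tilde p}^{-1})_+]\phi_{\tilde p}^{-1}\textsf{d}\tilde p\bigg\}+\res\big\{[(\phi_p\partial^i\phi_p^{-1})_+,(\phi_p\partial^j\phi_p^{-1})_+](\phi_p)_{x_k}\phi_p^{-1}\big\}
\end{equation*}
and compute $p\,\partial G/\partial p$ by mimicking Lemma \ref{varprop}, using $p\,\partial\phi_p/\partial p=\phi_p-1$ together with the derivation identity $\D_{x_k}(\phi_p\partial^i\phi_p^{-1})_+=[(\phi_p)_{x_k}\phi_p^{-1},(\phi_p\partial^i\phi_p^{-1})_+]_+$. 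This produces the analogue of expression \eqref{jacobi} with $S=\phi_p^{-1}$, $T=(\phi_p)_{x_k}\phi_p^{-1}$, $U=(\phi_p\partial^i\phi_p^{-1})_+$ and $V=(\phi_p\partial^j\phi_p^{-1})_+$. Applying the Jacobi identity exactly as in the rearrangements \eqref{first}--\eqref{third} collapses the result to \eqref{totalxderiv} with this same $T$, which is precisely $-p$ times the integrand of $\res\{\Delta_{ij,k}\}$ as defined in \eqref{Bdef}.

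Next I would integrate $\partial G/\partial p$ in $p$ from $0$ to $1$ and check the boundary. At $p=0$ the operator $\phi_0=1$, so $(\phi_0)_{x_k}=0$ and $[(\partial^i)_+,(\partial^j)_+]=0$, making both summands of $G(0)$ vanish; moreover $\phi_p\partial^i\phi_p^{-1}-\partial^i$ is $O(p)$, so the $\tilde p^{-1}$ factor is harmless at $\tilde p=0$. The fundamental theorem of calculus in $p$ therefore gives $G(1)=-\res\{\Delta_{ij,k}\}$, and isolating $\D_{x_k}\res\{\int_0^1\ldots\}$ and absorbing the overall minus sign into the LHS yields \eqref{PEpart}.

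The main ``obstacle'' is not really an obstacle but careful bookkeeping: verifying that every manipulation in the proof of Lemma \ref{varprop} treats $\delta$ only as a derivation and nothing more, checking that $\D_{x_k}$ legitimately commutes with $\int_0^p\cdot\,\textsf{d}\tilde p$ and with $\res$, and tracking signs so that the explicit formula \eqref{Bdef} for $\Delta_{ij,k}$ is reproduced exactly. No genuinely new calculation is required beyond the Jacobi-based rearrangement already carried out in Lemma \ref{varprop}.
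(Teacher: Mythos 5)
Your proposal is correct and follows essentially the same route as the paper: both rest on the observation that the computation of Lemma \ref{varprop} carries over verbatim under the substitution $\delta\phi_{\tilde p}\to(\phi_{\tilde p})_{x_k}$, so that $\res\{\Delta_{ij,k}\}$ is exactly $-\D_x h_1$ with that substitution, evaluated at $p=1$. The paper merely compresses your re-derivation into one line by noting that, since $\mathscr{L}_{(1ij)}$ is autonomous, $\D_{x_k}\mathscr{L}_{(1ij)}=\delta\mathscr{L}_{(1ij)}\big|_{\delta\phi=\phi_{x_k}}$, and then cites Lemma \ref{varprop} directly.
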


\begin{proof}{of Lemma \ref{Blemma}.}
Since each $\mathscr{L}_{(1ij)}$ is autonomous, we notice that $\D_{x_k}\mathscr{L}_{(1ij)}=\delta\mathscr{L}_{(1ij)}|_{\delta\phi=\phi_{x_k}}$.  It follows from Lemma \ref{varprop} that the left hand side of \eqref{PEpart} is equal to

\begin{equation}\label{trivint1}
\res\big\{[(\phi\partial^i\phi^{-1})_+,(\phi\partial^j\phi^{-1})_+]\phi_{x_k} \ \phi^{-1}\big\}-\D_x h_1|_{\delta\phi_{\tilde p}=(\phi_{\tilde p})_{x_k}}
\end{equation}
evaluated at $p=1$.  We note that $\res \{\Delta_{ij,k}\}$ as defined in \eqref{Bdef} is precisely $-\D_x h_1|_{\delta\phi_{\tilde p}=(\phi_{\tilde p})_{x_k}}$ evaluated at $p=1$.  I.e.,

\begin{equation}
\begin{split}
\Delta_{ij,k}:=-\int_0^1p^{-1}\big([&T[V,S],U]+[[T,U]_+S,V]+[U[V,S]_+,T]+[UT,[V,S]_+]+[T[S,U],V]\\
&+[U,[T,V]_+S]+[V[S,U]_+,T]+[VT,[S,U]_+]+[[U,V],TS]+[T,[U,V]S]\big)\textsf{d}  p
\end{split}
\end{equation}
with $S=\phi_{ p}^{-1}$, $T=(\phi_{ p})_{x_k}  \phi_{ p}^{-1}$, $U=(\phi_{ p}\partial^i\phi_{ p}^{-1})_+$ and $V=(\phi_{ p}\partial^j\phi_{ p}^{-1})_+$.
\end{proof}

\begin{lemma}\label{Clemma}
The $\Theta_{ij,k}$ defined in \eqref{Cdef} is such that

\begin{equation}
\begin{split}
\res\{[L^i_+,L^j_+]\phi_{x_k} \phi^{-1}\}=\frac{1}{2}\res\{[L^i_+,L^j_+]\phi_{x_k} \ \phi^{-1}+(L^j_+)_{x_k}L^i_--(L^i_+)_{x_k}L^j_-\}+\res\{\Theta_{ij,k}\}.
\end{split}
\end{equation}

\end{lemma}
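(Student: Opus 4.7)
The strategy is to reduce the identity to a purely algebraic statement about pseudodifferential operators, then exploit the structure of $\mathcal{R}_-$. Abbreviate $Q = \phi_{x_k}\phi^{-1}$, $A = L^i_+$, $B = L^j_+$, $a = L^i_-$, $b = L^j_-$, noting that $Q,a,b \in \mathcal{R}_-$. Rearranging the claimed identity and multiplying by $2$, it becomes equivalent to
\begin{equation*}
\res\{[A,B]Q + A_{x_k}b - B_{x_k}a\} = \res\{[Q,Ab] + [b,AQ] + [BQ,a] + [Ba,Q]\}.
\end{equation*}

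Next I would compute $A_{x_k}$ and $B_{x_k}$ via the variational identity used inside the proof of Lemma \ref{varprop}, namely $\delta(\phi_p\partial^i\phi_p^{-1})_+ = [\delta\phi_p\,\phi_p^{-1},(\phi_p\partial^i\phi_p^{-1})_+]_+$. Specialising $\delta = \partial_{x_k}$ at $p=1$ gives $A_{x_k}=[Q,A]_+$ and $B_{x_k}=[Q,B]_+$. Since $a,b\in\mathcal{R}_-$, property \eqref{plusminusprop} yields $\res\{A_{x_k}b\} = \res\{[Q,A]b\} = \res\{QAb - AQb\}$ and similarly $\res\{B_{x_k}a\} = \res\{QBa - BQa\}$. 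Substituting these into the left-hand side, expanding the four commutators on the right, and cancelling the pairs of matching terms (six cancellations: $\pm QAb$, $\pm AQb$, $\pm QBa$, $\pm BQa$), the identity collapses to
\begin{equation*}
\res\{([A,B] + [A,b] + [a,B])Q\} = 0.
\end{equation*}
Since $L^i L^j = L^{i+j} = L^j L^i$, expanding $[L^i,L^j]=0$ gives $[A,B]+[A,b]+[a,B]+[a,b]=0$, so the above is equivalent to $\res\{[a,b]Q\}=0$.

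The main obstacle is this last vanishing, which requires an order-bound observation rather than any further commutator gymnastics. Since $a = a_{-1}\partial^{-1} + \ldots$ and $b = b_{-1}\partial^{-1} + \ldots$ with leading coefficients in the commutative algebra $\mathcal{A}_\varphi$, the leading $\partial^{-2}$ coefficients of $ab$ and $ba$ both equal $a_{-1}b_{-1}$ and hence cancel in $[a,b]$. Thus $[a,b]$ has order at most $-3$, and multiplying by $Q\in\mathcal{R}_-$ (order at most $-1$) produces an operator of order at most $-4$, which has no $\partial^{-1}$ coefficient. Hence $\res\{[a,b]Q\}=0$, completing the proof.
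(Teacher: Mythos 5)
Your proof is correct, and it reaches the identity by a route that is organised differently from the paper's even though it rests on the same two ingredients: the formula $(L^i_+)_{x_k}=[\phi_{x_k}\phi^{-1},L^i_+]_+$ combined with property \eqref{plusminusprop}, and the vanishing of $[L^i,L^j]$. The paper works forwards: it splits $\res\{[L^i_+,L^j_+]\phi_{x_k}\phi^{-1}\}$ into two halves, rewrites one half using the $+$-projected identity \eqref{doublecomm} (in which the cross term $[L^i_-,L^j_-]_+$ is absent because a product of two elements of $\mathcal{R}_-$ lies in $\mathcal{R}_-$), and then massages the result until $\Theta_{ij,k}$ emerges as the collection of leftover commutators. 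You instead verify the stated identity by moving everything to one side, expanding the four commutators of $\Theta_{ij,k}$, and cancelling, which collapses the whole lemma to the single vanishing $\res\{[L^i_-,L^j_-]\phi_{x_k}\phi^{-1}\}=0$, disposed of by an order count; in fact an even cruder count suffices, since any product of three elements of $\mathcal{R}_-$ has order at most $-3$ and hence zero residue, so the cancellation of the leading $\partial^{-2}$ coefficient of $[L^i_-,L^j_-]$ is not needed. Your organisation makes explicit where the ``negative times negative'' contribution goes, which the paper absorbs silently into \eqref{doublecomm}; the paper's derivation has the advantage of showing how one would discover the expression for $\Theta_{ij,k}$ in the first place rather than merely checking it. The only blemish is cosmetic: you announce ``six cancellations'' but list four pairs, which is the correct count and does not affect the argument.
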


\begin{proof}{of Lemma \ref{Clemma}.}
Using the identity
\begin{equation}\label{doublecomm}
0=[L^i,L^j]_+=[L^i_+,L^j_+]+[L^i_+,L^j_-]_++[L^i_-,L^j_+]_+,
\end{equation}
we see that

\begin{equation}
\begin{split}
\res\{[L^i_+,L^j_+]\phi_{x_k} \phi^{-1}\}=&\frac{1}{2}\res\{[L^i_+,L^j_+]\phi_{x_k}  \phi^{-1}\}-\frac{1}{2}\res\{[L^i_+,L^j_-]\phi_{x_k}  \phi^{-1}+[L^i_-,L^j_+]\phi_{x_k}  \phi^{-1}\}\\
=&\frac{1}{2}\res\{[L^i_+,L^j_+]\phi_{x_k}  \phi^{-1}\}+\frac{1}{2}\res\{L^i_+\phi_{x_k}\phi^{-1}L^j_--\phi_{x_k}\phi^{-1}L^i_+L^j_-\\
&+\phi_{x_k}\phi^{-1}L^j_+L^i_- -L^j_+\phi_{x_k}\phi^{-1}L^i_-+[\phi_{x_k}\phi^{-1},L^i_+L^j_-]+[L^j_-,L^i_+\phi_{x_k}\phi^{-1}]\\
&+[L^j_+\phi_{x_k}\phi^{-1},L^i_-]+[L^j_+L^i_-,\phi_{x_k}\phi^{-1}]\}\\
=&\frac{1}{2}\res\{[L^i_+,L^j_+]\phi_{x_k} \ \phi^{-1}+(L^j_+)_{x_k}L^i_--(L^i_+)_{x_k}L^j_-\}\\&+\frac{1}{2}\res\{[\phi_{x_k}\phi^{-1},L^i_+L^j_-]
+[L^j_-,L^i_+\phi_{x_k}\phi^{-1}]+[L^j_+\phi_{x_k}\phi^{-1},L^i_-]\\&+[L^j_+L^i_-,\phi_{x_k}\phi^{-1}]\}\\
=&\frac{1}{2}\res\{[L^i_+,L^j_+]\phi_{x_k} \ \phi^{-1}+(L^j_+)_{x_k}L^i_--(L^i_+)_{x_k}L^j_-\} +\res\{\Theta_{ij,k}\},
\end{split}
\end{equation}
where 

\begin{equation}
\Theta_{ij,k}:=\frac{1}{2}\big([\phi_{x_k}\phi^{-1},L^i_+L^j_-]+[L^j_-,L^i_+\phi_{x_k}\phi^{-1}]+[L^j_+\phi_{x_k}\phi^{-1},L^i_-]+[L^j_+L^i_-,\phi_{x_k}\phi^{-1}]\big).
\end{equation}

\end{proof}

\begin{lemma}\label{Elemma}
The the identity
\begin{equation}
\res\{[L^i_+,L^j_+]L^k_-+[L^j_+,L^k_+]L^i_-+[L^k_+,L^i_+]L^j_-\}=-2\res\{\Lambda_{ijk}\},
\end{equation}
holds.
\end{lemma}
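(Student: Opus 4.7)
The plan is to show the identity as an exact equality in $\mathcal{A}_\varphi$ by expanding both sides into residues of products of $L^\alpha_\pm$ and matching term-by-term. The key tools are the Leibniz rule for commutators, the commutativity $[L^i, L^j] = 0$ (which, after splitting $L^\alpha = L^\alpha_+ + L^\alpha_-$, constrains combinations of commutators between $\pm$ parts), the factorizations $L^{j+k} = L^j L^k$ and $L^{i+k} = L^i L^k$, together with the residue identities \eqref{plusminusprop} and Lemma \ref{rescomdiv}, and the vanishing $\res\{XYZ\} = 0$ for $X,Y,Z \in \mathcal{R}_-$ (since the total degree is then $\leq -3$).

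On the left-hand side, I would first use the identity $[L^i_+, L^j_+] = -\bigl([L^i_+, L^j_-] + [L^i_-, L^j_+]\bigr)_+$, obtained by taking the positive part of $[L^i, L^j] = 0$ and noting that $[L^i_-, L^j_-]_+ = 0$. Combined with \eqref{plusminusprop} applied with $L^k_- \in \mathcal{R}_-$, this gives
\[
\res\{[L^i_+, L^j_+]\, L^k_-\} = -\res\{[L^i_+, L^j_-]\, L^k_-\} - \res\{[L^i_-, L^j_+]\, L^k_-\},
\]
and summing cyclically over $(i,j,k)$ rewrites the LHS as a sum of six residues of the form $\res\{[L^a_\epsilon, L^b_{\epsilon'}]\, L^c_-\}$ with exactly one factor drawn from each of $\{L^i_\pm\}$, $\{L^j_\pm\}$, $\{L^k_\pm\}$.

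On the right-hand side, I would expand $2\Lambda_{ijk}$ by applying the Leibniz rules $[X, YZ] = [X,Y]Z + Y[X,Z]$ and $[XY, Z] = [X,Z]Y + X[Y,Z]$ to each of the five defining commutators, using $L^{j+k} = L^j L^k$ and $L^{i+k} = L^i L^k$ to split the last two. Each resulting commutator of the form $[L^?_\pm, L^?]$ is simplified via $[L^i, L^j] = 0$, which gives $[L^\alpha_\pm, L^\beta] = -[L^\alpha_\mp, L^\beta]$. Taking residues, applying \eqref{plusminusprop}, and using the $\mathcal{R}_-$-degree vanishing eliminates the extraneous terms; the surviving terms then reassemble, thanks to the cyclic structure designed into $\Lambda_{ijk}$, into exactly the negative of the six-residue sum obtained from the LHS, which is precisely the desired identity.

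The main obstacle is the combinatorial bookkeeping. The identity is asserted to hold exactly in $\mathcal{A}_\varphi$, not merely modulo total $x$-derivatives, so every exchange $\res\{XY\} \leftrightarrow \res\{YX\}$ that one is tempted to perform incurs a nontrivial correction $\res\{[X,Y]\}$ by Lemma \ref{rescomdiv}, and these corrections must balance precisely across the roughly dozen terms produced by the expansion. The particular design of $\Lambda_{ijk}$ — combining the ``product-inside'' commutators $[L^i_+L^j_- - L^j_+L^i_-, L^k]$, $[L^k_+L^i_-, L^j_+]$, $[L^i_+, L^k_+L^j_-]$ with the ``sum-in-exponent'' commutators $[L^i_-, L^{j+k}]$ and $[L^{i+k}, L^j_-]$ — is precisely what orchestrates the simultaneous cancellation of all such corrections.
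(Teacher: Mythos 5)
Your strategy is workable and rests on the same core facts as the paper's proof, but it is organized quite differently, and the difference matters for how much bookkeeping is actually required. The paper starts from the single identically vanishing quantity $\res\{[L^i,L^j]L^k\}=0$ (equation \eqref{finalterm}), expands it into the six residues that survive the $\pm$-splitting, and regroups them into two blocks: \eqref{partu}, which is the lemma's left-hand side plus residues of commutators, and \eqref{partd}, which via \eqref{doublecomm} equals $-\tfrac{1}{2}$ times the left-hand side plus further residues of commutators; setting the sum to zero yields the identity, with $\Lambda_{ijk}$ simply naming the accumulated commutator corrections. Every step there is an exact operator identity inside one residue, so two independently expanded sides never have to be matched. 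Your route instead reduces the left-hand side using the positive projection of $[L^i,L^j]=0$ together with \eqref{plusminusprop}; the six ``one-plus, two-minus'' residues you land on are exactly the combination in the first line of the expression preceding \eqref{partd}, so to that point you are running the paper's final step in reverse. You then propose to expand $2\Lambda_{ijk}$ from scratch and match. That matching is where the entire content of the lemma lives, and in your write-up it is asserted (``the surviving terms then reassemble'') rather than carried out. Because the identity is exact and not merely modulo $\D_x$, the match must be made at the level of ordered triple products $L^a_{\epsilon}L^b_{\epsilon'}L^c_{\epsilon''}$, of which the expansion of $[L^i_+L^j_--L^j_+L^i_-,L^k]+[L^k_+L^i_-,L^j_+]+[L^i_+,L^k_+L^j_-]+[L^i_-,L^{j}L^{k}]+[L^{i}L^{k},L^j_-]$ produces several dozen; the tools you list (Leibniz rules, $[L^\alpha_+,L^\beta]=-[L^\alpha_-,L^\beta]$, \eqref{plusminusprop}, Lemma \ref{rescomdiv}, and the vanishing of residues of all-positive or degree $\le -2$ products) do suffice, so there is no error of principle. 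To turn the plan into a proof you must either perform that expansion explicitly or, more economically, adopt the paper's device of deriving everything from $\res\{[L^i,L^j]L^k\}=0$, which replaces the two-sided match by a handful of short, easily checked exact commutator identities.
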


\begin{proof}{of Lemma \ref{Elemma}}
we consider $\res\{[L^i,L^j]L^k\}$, (which is clearly zero) and express this in terms of the positive and negative parts of the powers of $L$:

\begin{equation}\label{finalterm}
\begin{split}
0=\res\{[L^i,L^j]L^k\}=\res\big\{&[L^i_+,L^j_+]L^k_-+[L^i_-,L^j_+]L^k_++[L^i_+,L^j_-]L^k_+\\
+&[L^i_-,L^j_-]L^k_++[L^i_+,L^j_-]L^k_-+[L^i_-,L^j_+]L^k_-\big\}
\end{split}
\end{equation}
The first three terms on the right hand side of \eqref{finalterm} can be written as

\begin{equation}\label{partu}
\begin{split}
\res\big\{&[L^i_+,L^j_+]L^k_-+[L^j_+,L^k_+]L^i_-+[L^k_+,L^i_+]L^j_-\\
+&[L^i_-,L^j_+L^k_+]+[L^k_+,L^j_+L^i_-]+[L^i_+L^j_-,L^k_+]+[L^i_+L^k_+,L^j_-]\big\}
\end{split}
\end{equation}
whilst the final three terms on the right hand side of \eqref{finalterm} can be written as

\begin{equation}
\begin{split}
\res\big\{&\frac{1}{2}([L^j_-,L^k_+]+[L^j_+,L^k_-])L^i_-+\frac{1}{2}([L^k_-,L^i_+]+[L^k_+,L^i_-])L^j_-+\frac{1}{2}([L^i_-,L^j_+]+[L^i_+,L^j_-])L^k_-\\
+&\frac{1}{2}([L^i_-,L^j_-L^k_+]+[L^k_+,L^j_-L^i_-]+[L^i_-L^j_-,L^k_+]+[L^i_-L^k_+,L^j_-]+[L^i_+L^j_-,L^k_-]+[L^i_+L^k_-,L^j_-]\\
+&[L^i_-,L^j_+L^k_-]+[L^k_-,L^j_+L^i_-]\big\}.
\end{split}
\end{equation}
By \eqref{doublecomm}, this is equal to

\begin{equation}\label{partd}
\begin{split}
\frac{1}{2}\res\big\{&-[L^j_+,L^k_+]L^i_--[L^k_+,L^i_+]L^j_--[L^i_+,L^j_+]L^k_-
+[L^i_-,L^j_-L^k_+]+[L^k_+,L^j_-L^i_-]\\
&+[L^i_-L^j_-,L^k_+]+[L^i_-L^k_+,L^j_-]+[L^i_+L^j_-,L^k_-]+[L^i_+L^k_-,L^j_-]
+[L^i_-,L^j_+L^k_-]+[L^k_-,L^j_+L^i_-]\big\}.
\end{split}
\end{equation}
Since \eqref{partu} and \eqref{partd} sum to zero, it follows that

\begin{equation}
\begin{split}
\res&\big\{[L^i_+,L^j_+]L^k_-+[L^j_+,L^k_+]L^i_-+[L^k_+,L^i_+]L^j_-\big\}\\
=&-\res\big\{2[L^i_-,L^j_+L^k_+]+2[L^k_+,L^j_+L^i_-]+2[L^i_+L^j_-,L^k_+]+2[L^i_+L^k_+,L^j_-]+[L^i_-,L^j_-L^k_+]\\&+[L^k_+,L^j_-L^i_-]
+[L^i_-L^j_-,L^k_+]+[L^i_-L^k_+,L^j_-]+[L^i_+L^j_-,L^k_-]+[L^i_+L^k_-,L^j_-]+[L^i_-,L^j_+L^k_-]\\&+[L^k_-,L^j_+L^i_-]\big\}
\end{split}
\end{equation}
which simplifies to

\begin{equation}
\begin{split}
-\res\big\{&[L^i_+L^j_--L^j_+L^i_-,L^k]+[L^k_+L^i_-,L^j_+]+[L^i_+,L^k_+L^j_-]+[L^i_-,L^{j+k}]+[L^{i+k},L^j_-]\big\}\\
=-2\res\{&\Lambda_{ijk}\}
\end{split}
\end{equation}
where

\begin{equation}
\begin{split}
\Lambda_{ijk}:=\frac{1}{2}\big([L^i_+L^j_--L^j_+L^i_-,L^k]+[L^k_+L^i_-,L^j_+]+[L^i_+,L^k_+L^j_-]+[L^i_-,L^{j+k}]+[L^{i+k},L^j_-]\big).
\end{split}
\end{equation}
\end{proof}


\begin{proof}\textit{of Theorem \ref{bigtheorem}.}
Since $\Gamma_{ijk}$, $\Delta_{ij,k}$, $\Theta_{ij,k}$ and $\Lambda_{ijk}$ are composed entirely of commutators, it follows from Lemma \ref{rescomdiv} that

\begin{equation}\label{Lijkdef}
\mathscr{L}_{(ijk)}=\int\res\big\{\Gamma_{ijk}+\Delta_{ij,k}+\Delta_{jk,i}+\Delta_{ki,j}+\Theta_{ij,k}+\Theta_{jk,i}+\Theta_{ki,j}+\Lambda_{ijk}\big\}\textsf{d}x
\end{equation}
is local.  Since the multiform Euler-Lagrange equations arising from $\delta \textsf{dM}=0$ include the Euler-Lagrange equations of the $\mathscr{L}_{1ij}$, we know that the set of equations given by $\delta \textsf{dM}=0$ includes all KP equations of the form

\begin{equation}\label{type1}
(L^i_+)_{x_j}-(L^j_+)_{ x_i}+[L^i_+,L^j_+]=0.
\end{equation}
By Corollary \ref{KPeqcoroll}, $\delta \textsf{dM}=0$ also gives us KP equations of the form

\begin{equation}
\phi_{x_i}+L^i_-\phi=0.
\end{equation}
In order to proceed, we again use the notation $P_{(ijkl)}$ such that

\begin{equation}
\textsf{dM}=\sum_{1\leq i<j<k<l}{P}_{(ijkl)}\textsf{d}x_i\wedge\textsf{d}x_j\wedge\textsf{d}x_k\wedge\textsf{d}x_l.
\end{equation}
Combining the results of Lemmas \ref{Alemma} to \ref{Elemma}, we see that

\begin{equation}\label{KPdL}
\begin{split}
P_{(1ijk)}=&-\D_{x_k}\mathscr{L}_{(1ij)}-\D_{x_i}\mathscr{L}_{(1jk)}+\D_{x_j}\mathscr{L}_{(1ik)}+\D_{x_1}\mathscr{L}_{(ijk)}\\
=&-\res\big\{\frac{1}{2}((L^i_+)_{x_j}-(L^j_+)_{ x_i}+[L^i_+,L^j_+])(\phi_{x_k}\phi^{-1}+L_-^k)\\
&\quad\quad+\frac{1}{2}((L^j_+)_{x_k}-(L^k_+)_{ x_j}+[L^j_+,L^k_+])(\phi_{x_i}\phi^{-1}+L_-^i)\\
&\quad\quad+\frac{1}{2}((L^k_+)_{x_i}-(L^i_+)_{ x_k}+[L^k_+,L^i_+])(\phi_{x_j}\phi^{-1}+L_-^j)\big\}.
\end{split}
\end{equation}
and since equations of the form $(L^i_+)_{x_j}-(L^j_+)_{ x_i}+[L^i_+,L^j_+]=0$ and $\phi_{x_i}\phi^{-1}+L_-^i=0$  are both equations of the KP hierarchy, $P_{1ijk}$ has a double zero on the hierarchy.\\

In order to complete the proof, we must show that for

\begin{equation}\label{Pijkldef}
P_{(ijkl)}=\D_{x_i}\mathscr{L}_{(jkl)}-\D_{x_j}\mathscr{L}_{(ikl)}+\D_{x_k}\mathscr{L}_{(ijl)}-\D_{x_l}\mathscr{L}_{(ijk)},
\end{equation}
$\delta P_{(ijkl)}=0$ and $ P_{(ijkl)}=0$ on the equations of the KP hierarchy.  We require that $\delta P_{(ijkl)}=0$ on the equations of the KP hierarchy in order to confirm that $\delta P_{(ijkl)}=0$ does not define any equations that are not part of the KP hierarchy, and we require that $ P_{(ijkl)}=0$ in order that $\textsf{dM}=0$ on the equations of the hierarchy. To show this, we first note that from its definition in terms of the $\mathscr{L}_{(ijk)}$, $P_{(ijkl)}$ is a polynomial with no constant term, in $(\varphi_\beta^{(n)})_I$ where $n$ gives the order of derivative with respect to $x$ and $I$ is a multi-index representing derivatives with respect to $x_i$ for $i>1$.  Also, since $\textsf{d}^2\textsf{M}$ is identically zero, 

\begin{equation}
D_{x}P_{(ijkl)}=\D_{x_i}P_{(1jkl)}-\D_{x_j}P_{(1ikl)}+\D_{x_k}P_{(1ijl)}-\D_{x_l}P_{(1ijk)}.
\end{equation}
This is an identity, so we do not require the $\varphi_\beta$ to satisfy the equations of the KP hierarchy for this to hold.  Since each of $P_{(1ijk)}$, $P_{(1ikl)}$, $P_{(1ijl)}$, and $P_{(1jkl)}$ has a double zero on the equations of the KP hierarchy, it follows that $\D_x P_{(ijkl)}$ also has a double zero on the equations of the KP hierarchy, and therefore that

\begin{equation}
\frac{\partial }{\partial (\varphi_\beta^{(n)})_{I}}\D_x P_{(ijkl)}=0
\end{equation}
for all $I$ and $n$. Using the identity

\begin{equation}\label{usefulid}
\frac{\partial }{\partial (\varphi_\beta^{(n+1)})_{I}}\D_x P_{(ijkl)}=\D_x\frac{\partial }{\partial (\varphi_\beta^{(n+1)})_{I}} P_{(ijkl)}+\frac{\partial }{\partial (\varphi_\beta^{(n)})_{I}} P_{(ijkl)}
\end{equation}
we see that for a fixed choice of $I$, if $n$ is the largest such that $(\varphi_\beta^{(n)})_{I}$ appears in $P_{(ijkl)}$, then 

\begin{equation}
\frac{\partial }{\partial (\varphi_\beta^{(n)})_{I}} P_{(ijkl)}=0
\end{equation}
on the equations of the KP hierarchy.  It also follows from \eqref{usefulid} that, on the equations of the KP hierarchy,  if
\begin{equation}
\frac{\partial }{\partial (\varphi_\beta^{(n)})_{I}} P_{(ijkl)}=0 \quad \text{then}\quad \frac{\partial }{\partial (\varphi_\beta^{(n-1)})_{I}} P_{(ijkl)}=0.
\end{equation}
Therefore, on the equations of the KP hierarchy,
\begin{equation}\label{allpszero}
\frac{\partial }{\partial (\varphi_\beta^{(n)})_{I}} P_{(ijkl)}=0
\end{equation}
for all $I$ and $n$, so $\delta P_{(ijkl)}=0$.  Since $P_{(ijkl)}$ is autonomous, \eqref{allpszero} tells us that

\begin{equation}
\D_{x_i} P_{(ijkl)}=0 \quad \forall i>0
\end{equation}
so $P_{(ijkl)}$ is constant, and since the KP hierarchy admits the zero solution, we conclude that this constant is zero, and $P_{(ijkl)}=0$ on the equations of the KP hierarchy.\\

Thus, the set of equations defined by $\delta\textsf{dM}=0$ is precisely the full set of equations of the KP hierarchy, and on these equations, $\textsf{dM}=0$, so $\textsf{M}$ is a Lagrangian multiform for the KP hierarchy.
\end{proof}

\subsection{An alternative  KP Lagrangian multiform}

In the KP Lagrangian multiform of Theorem \ref{bigtheorem}, we used Dickey's KP Lagrangian for the $\mathscr{L}_{(1ij)}$ , and the Lagrangian defined in \eqref{commLagrangian} for the $\mathscr{L}_{(ijk)}$ when $1< i,j,k$.  Here we present an alternative version of the KP Lagrangian multiform in which every Lagrangian is of the same type.

\begin{theorem}
The differential 3-form 

\begin{equation}\
\widetilde{\textsf{M}}=\sum_{1\leq i<j <k}\widetilde{\mathscr{L}}_{(ijk)}\ \textsf{d}x_{i}\wedge\textsf{d}x_j\wedge \textsf{d}x_{k}
\end{equation}
where
\begin{equation}
\widetilde{\mathscr{L}}_{(ijk)}=\int\res\big\{\Gamma_{ijk}+\Delta_{ij,k}+\Delta_{jk,i}+\Delta_{ki,j}+\Theta_{ij,k}+\Theta_{jk,i}+\Theta_{ki,j}+\Lambda_{ijk}\big\}\textsf{d}x
\end{equation}
(i.e., the Lagrangian defined in \eqref{commLagrangian}), is a Lagrangian multiform for the KP hierarchy.
\end{theorem}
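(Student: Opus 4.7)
The plan is to follow the same broad strategy as the proof of Theorem \ref{bigtheorem}: verify that (i) each $\widetilde{\mathscr{L}}_{(ijk)}$ is a local expression in $\mathcal{A}_\varphi$, (ii) the 4-form coefficients $\widetilde{P}_{(ijkl)}$ of $\textsf{d}\widetilde{\textsf{M}}$ have a double zero on the KP equations, so that $\delta\textsf{d}\widetilde{\textsf{M}}=0$ gives precisely the KP hierarchy, and (iii) $\textsf{d}\widetilde{\textsf{M}}=0$ on KP. Locality is immediate from Lemma \ref{rescomdiv}, since each of $\Gamma_{ijk}$, $\Delta_{ij,k}$, $\Theta_{ij,k}$ and $\Lambda_{ijk}$ is (an integral of) residues of commutators of pseudodifferential operators. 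Moreover, since $\widetilde{\mathscr{L}}_{(ijk)}$ coincides with $\mathscr{L}_{(ijk)}$ of Theorem \ref{bigtheorem} whenever $i>1$, the 4-form coefficients $\widetilde{P}_{(ijkl)}$ with $1<i<j<k<l$ equal the $P_{(ijkl)}$ already handled at the end of the proof of Theorem \ref{bigtheorem} and vanish on KP. All the new content therefore lies in the $(1,j,k)$ slots, where Dickey's Lagrangian is replaced by the commutator formula.

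The central task is to control $\widetilde{P}_{(1ijk)}=-\D_{x_k}\widetilde{\mathscr{L}}_{(1ij)}-\D_{x_i}\widetilde{\mathscr{L}}_{(1jk)}+\D_{x_j}\widetilde{\mathscr{L}}_{(1ik)}+\D_{x_1}\widetilde{\mathscr{L}}_{(ijk)}$. I would aim to establish an identity of the form
\[
\widetilde{\mathscr{L}}_{(1jk)} = \mathscr{L}_{(1jk)} + \D_{x_1}\mathcal{K}_{jk},
\]
where $\mathcal{K}_{jk}$ is local. Substituted into the expression for $\widetilde{P}_{(1ijk)}$, this reduces $\widetilde{P}_{(1ijk)}-P_{(1ijk)}$ to a sum of mixed second derivatives $\D_{x_1}\D_{x_m}\mathcal{K}$ that is easily checked to inherit the double-zero structure on KP from the right-hand side of \eqref{KPdl1st}. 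The crucial simplification used in proving this identity is that $\phi_{x_1}\phi^{-1}=-L_-$ holds identically rather than merely on KP (it is a consequence of $L=\phi\partial\phi^{-1}$ together with $[\partial,\phi]=\phi_{x_1}$), and likewise $(\phi\partial\phi^{-1})_+=\partial$. Substituting these into $\Delta_{jk,1}$ (where $T=(\phi_p)_{x_1}\phi_p^{-1}=\partial-L_p=-L_{p-}$), $\Delta_{1j,k}$ and $\Delta_{k1,j}$ (where one of $U,V$ equals $\partial$), and similarly into $\Theta_{jk,1}$, $\Theta_{1j,k}$, $\Theta_{k1,j}$, $\Gamma_{1jk}$ and $\Lambda_{1jk}$, collapses the bulk of the commutator terms and matches the residual expression with Dickey's kinetic and integral pieces, with any leftover commutator-of-residues giving the $\D_{x_1}$-derivative.

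The main obstacle is precisely this combinatorial reduction: the eight commutator expressions making up $\widetilde{\mathscr{L}}_{(1jk)}$ expand into a large collection of terms that must be matched, term by term, against Dickey's explicit formula \eqref{KPLag} plus the total $x_1$-derivative correction. The proof will need a systematic application of the identities of Lemmas \ref{Alemma}--\ref{Elemma} specialised to the case of one index being $1$, rather than the choice $1<i<j<k$ used in Theorem \ref{bigtheorem}. An alternative route would bypass the comparison with Dickey entirely and redo the whole computation of $\widetilde{P}_{(1ijk)}$ from scratch, exploiting the total symmetry of the commutator formula in all three indices; this would recover the identity \eqref{KPdl1st} for $\widetilde{P}_{(1ijk)}$ by the same chain of Lemmas, with the index $1$ treated on the same footing as $i,j,k$. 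Either way, once the double-zero structure of $\widetilde{P}_{(1ijk)}$ is in hand, the autonomy and zero-solution argument at the end of the proof of Theorem \ref{bigtheorem} transfers verbatim to conclude that all $\widetilde{P}_{(ijkl)}$ vanish on KP and that $\widetilde{\textsf{M}}$ is a Lagrangian multiform.
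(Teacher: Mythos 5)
Your proposal is a plan rather than a proof, and its central step is both unestablished and stated in a form that is likely false. You hinge everything on the identity $\widetilde{\mathscr{L}}_{(1jk)} = \mathscr{L}_{(1jk)} + \D_{x_1}\mathcal{K}_{jk}$, but the difference between the two Lagrangians is in general \emph{not} a total $x_1$-derivative: as noted in Appendix \ref{longLagrangians}, it is a sum of a total $x_j$-derivative and a total $x_k$-derivative. That distinction matters, because substituting corrections of the form $\D_{x_j}f_{jk}+\D_{x_k}g_{jk}$ into $\widetilde{P}_{(1ijk)}$ produces cross terms such as $\D_{x_i}\D_{x_j}f_{jk}$ which do not cancel among the four slots and have no reason to carry a double zero on the KP equations --- a generic total derivative of a local expression does not vanish to second order on shell. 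Your remark that these terms are ``easily checked to inherit the double-zero structure'' is precisely the point that needs proof and is the weakest link. The fallback you mention (recomputing $\widetilde{P}_{(1ijk)}$ from scratch ``by the same chain of Lemmas'') also does not go through as stated: Lemmas \ref{Alemma}--\ref{Elemma} compute $\D_{x_k}$ of \emph{Dickey's} Lagrangian $\mathscr{L}_{(1ij)}$, not of the commutator Lagrangian $\widetilde{\mathscr{L}}_{(1ij)}$, so treating the index $1$ ``on the same footing'' requires a new computation that you have not supplied.

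The paper's actual argument avoids this combinatorics entirely. It temporarily decouples $x$ from $x_1$ and forms a multiform $\textsf{M}_1$ on the enlarged base, with Dickey's Lagrangians in the $\textsf{d}x\wedge\textsf{d}x_i\wedge\textsf{d}x_j$ slots and the commutator Lagrangians elsewhere; Theorem \ref{bigtheorem} then applies verbatim. Restricting to $x=c$ and identifying $x_1$ with $x$ yields $\widetilde{\textsf{M}}$, and the double-zero property of $\textsf{d}\widetilde{\textsf{M}}$ is inherited rather than recomputed. To see that the full hierarchy is still produced, one observes that $P_{(x1jk)}$ vanishes \emph{identically} once $x_1$ is identified with $x$ (using $L_+=\partial$ and $L_-=-\phi_x\phi^{-1}$, the two facts you correctly identified), so the 2-form $\textsf{L}$ built from $\bar{\mathscr{L}}_{(x1j)}$, $\bar{\mathscr{L}}_{(x1k)}$ and $\bar{\mathscr{L}}_{(xjk)}-\bar{\mathscr{L}}_{(1jk)}$ is closed; Corollary \ref{AppCoroll} then gives $\frac{\delta}{\delta\phi}(\bar{\mathscr{L}}_{(xjk)}-\bar{\mathscr{L}}_{(1jk)})=0$, i.e.\ the commutator Lagrangian is a null-Lagrangian deformation of Dickey's and has the same Euler--Lagrange equations. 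If you want to salvage your direct route, the statement you should aim for is this null-Lagrangian property (vanishing variational derivative of the difference), not an explicit total-derivative decomposition in a prescribed direction, and you must separately handle the double zero of $\widetilde{P}_{(1ijk)}$ rather than inferring it from the decomposition.
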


\begin{proof}
We recall that in Section \ref{KPGD} we identified $x_1$ with $x$.  For now we choose not to do so and treat them as separate co-ordinates.  This allows us to consider a 3-form $\textsf{M}_1$ such that the coefficient of $\textsf{d}x\wedge\textsf{d}x_{i}\wedge\textsf{d}x_{j}$ with $1\leq i<j$ is Dickey's KP Lagrangian $\mathscr{L}_{(xij)}$, whilst the coefficient of $\textsf{d}x_i\wedge\textsf{d}x_{j}\wedge\textsf{d}x_{k}$ with $1\leq i<j<k$ is the Lagrangian $\mathscr{L}_{(ijk)}$ defined in \eqref{commLagrangian}.  It then follows from the proof of Theorem \ref{bigtheorem} that this is also a Lagrangian multiform for the KP hierarchy.  The multiform Euler-Lagrange equations for $\textsf{M}_1$ will be the multiform Euler-Lagrange equations of $\textsf{M}$ plus an additional set of equations that tell us to equate derivatives with respect to $x_1$ with derivatives with respect to $x$, arising from equations of the form

\begin{equation}
(L_+)_{x_j}-(L^j_+)_{ x_1}+[L_+,L^j_+]=0,
\end{equation}
and $\textsf{dM}_1$ will have a double zero on these equations.  We now define $\textsf{M}_2$ to be the restriction of $\textsf{M}_1$ to a submanifold with co-ordinates $x_1,x_2,x_3,\ldots$, obtained by fixing $x=c$, a constant.  
It follows that $\textsf{dM}_2$ still has a double zero on this same set of equations.  If we then equate $x_1$ with $x$ in $\textsf{M}_2$, we get $\widetilde{\textsf{M}}$ and it follows that $\textsf{d}\widetilde{\textsf{M}}$ has a double zero on the equations of the KP hierarchy.  Therefore, the equations defined by $\delta\textsf{d}\widetilde{\textsf{M}}=0$ are a subset of the equations of the KP hierarchy.\\

To complete the proof that $\widetilde{\textsf{M}}$ is a Lagrangian multiform for the KP hierarchy, we must show that the equations defined by $\delta\textsf{d}\widetilde{\textsf{M}}=0$ are precisely the full set of equations of the KP hierarchy.  We shall do this by showing that the Euler-Lagrange equations of the $\mathscr{L}_{(1jk)}$ Lagrangians give us these equations.\\

We first consider the coefficient $P_{(xijk)}$ from $\textsf{dM}_1$.

\begin{equation}
\begin{split}
P_{(xijk)}=&-\D_{x_k}\mathscr{L}_{(xij)}-\D_{x_i}\mathscr{L}_{(xjk)}+\D_{x_j}\mathscr{L}_{(xik)}+\D_{x}\mathscr{L}_{(ijk)}\\
=&-\res\big\{\frac{1}{2}((L^i_+)_{x_j}-(L^j_+)_{ x_i}+[L^i_+,L^j_+])(\phi_{x_k}\phi^{-1}+L_-^k)\\
&\quad\quad+\frac{1}{2}((L^j_+)_{x_k}-(L^k_+)_{ x_j}+[L^j_+,L^k_+])(\phi_{x_i}\phi^{-1}+L_-^i)\\
&\quad\quad+\frac{1}{2}((L^k_+)_{x_i}-(L^i_+)_{ x_k}+[L^k_+,L^i_+])(\phi_{x_j}\phi^{-1}+L_-^j)\big\},
\end{split}
\end{equation}
so in the case where $i=1$ this becomes

\begin{equation}
\begin{split}
P_{(x1jk)}=&-\D_{x_k}\mathscr{L}_{(x1j)}-\D_{x_1}\mathscr{L}_{(xjk)}+\D_{x_j}\mathscr{L}_{(x1k)}+\D_{x}\mathscr{L}_{(1jk)}\\
=&-\res\big\{\frac{1}{2}(-(L^j_+)_{ x_1}+(L^j_+)_x)(\phi_{x_k}\phi^{-1}+L_-^k)\\
&\quad\quad+\frac{1}{2}((L^j_+)_{x_k}-(L^k_+)_{ x_j}+[L^j_+,L^k_+])(\phi_{x_1}\phi^{-1}+L_-)\\
&\quad\quad+\frac{1}{2}((L^k_+)_{x_1}-(L^k_+)_x)(\phi_{x_j}\phi^{-1}+L_-^j)\big\}
\end{split}
\end{equation}
since $L_+=\partial$.  If we equate $x_1$ and $x$ in this expression then this becomes zero.  This is obvious in the first and third line; for the second line, we note that $L_-=(\phi\partial\phi^{-1})_-=(\partial-\phi_x\phi^{-1})_-=-\phi_x\phi^{-1}$. We now define

\begin{equation}
\bar{\mathscr{L}}_{(xij)}=\mathscr{L}_{(xij)}|_{x\to x_1}
\end{equation}
and consider the 2-form

\begin{equation}
\textsf{L}=\bar{\mathscr{L}}_{(x1j)}\textsf{d}x_{1}\wedge\textsf{d}x_j+\bar{\mathscr{L}}_{(x1k)}\textsf{d}x_{1}\wedge\textsf{d}x_k+(\bar{\mathscr{L}}_{(xjk)}-\bar{\mathscr{L}}_{(1jk)})\textsf{d}x_{j}\wedge\textsf{d}x_k.
\end{equation}
By construction, $\textsf{dL}=-P_{(x1jk)}|_{x\to x_1}=0$.  Then, by Corollary \ref{AppCoroll}, the variational derivative of each of the Lagrangian coefficients in \textsf{L} is zero.  Therefore,

\begin{equation}
\frac{\delta}{\delta \phi}(\bar{\mathscr{L}}_{(xjk)}-\bar{\mathscr{L}}_{(1jk)})=0
\end{equation}
so

\begin{equation}
\frac{\delta \bar{\mathscr{L}}_{(1jk)}}{\delta \phi}=\frac{\delta \bar{\mathscr{L}}_{(xjk)}}{\delta \phi}=\{\phi^{-1}\big((L^i_+)_{x_j}-(L^j_+)_{ x_i}+[L^i_+,L^j_+]\big)\}_+.
\end{equation}
Since $\bar{\mathscr{L}}_{(1jk)}=\widetilde{\mathscr{L}}_{(1jk)}$, all equations of the KP hierarchy are consequences of $\delta\textsf{d}\widetilde{\textsf{M}}=0$, so $\widetilde{\textsf{M}}$ is a Lagrangian multiform for the KP hierarchy.

\end{proof}

\section{Reduction to multiforms for the Gelfand-Dickey hierarchy}\label{GDmfred}

In order to reduce KP to the $n^{th}$ Gelfand-Dickey hierarchy, we imposed the constraint that $L^n_-=0$.  Since, by \eqref{phiequation}, $\phi_{x_n}=-L^n_-\phi$, we can achieve this in the Lagrangian multiform by setting $\phi_{x_n}=0$.  A simple way to obtain a Lagrangian multiform for the $n^{th}$ Gelfand-Dickey hierarchy is to leave the KP multiform obtained in Section \ref{KPMFSect} unchanged and impose this constraint on the Euler-Lagrange equations.  A more satisfactory approach involves setting $\phi_{x_n}=0$ in \eqref{KPdL} to obtain

\begin{equation}\label{GDdL}
\begin{split}
\D_{x_n}\hat{\mathscr{L}}_{(1ij)}&+\D_{x_i}\hat{\mathscr{L}}_{(1jn)}-\D_{x_j}\hat{\mathscr{L}}_{(1in)}-\D_{x_1}\hat{\mathscr{L}}_{(ijn)}\\
=\res\big\{&\frac{1}{2}((L^i_+)_{x_j}-(L^j_+)_{ x_i}+[L^i_+,L^j_+])L_-^k\\
+&\frac{1}{2}(-(L^n_+)_{x_j}+[L^j_+,L^n_+])(\phi_{x_i}\phi^{-1}+L_-^i)\\
+&\frac{1}{2}((L^n_+)_{ x_i}+[L^n_+,L^i_+])(\phi_{x_j}\phi^{-1}+L_-^j)\big\}.
\end{split}
\end{equation}
If we can find Lagrangians $\hat{\mathscr{L}}_{(ijk)}$ such that \eqref{GDdL} holds, then the constraint $L^n_-=0$ will be naturally incorporated into the multiform Euler-Lagrange equations, giving us the $n^{th}$ Gelfand-Dickey hierarchy.  The $\hat{\mathscr{L}}$ are not uniquely defined by this expression, but a natural choice would be

\begin{subequations}
\begin{equation}
\hat{\mathscr{L}}_{(1ij)}=0,
\end{equation}

\begin{equation}\label{DickeyGD}
\hat{\mathscr{L}}_{(1in)}=\res\bigg\{-\int_0^1p^{-1}[(\phi_p\partial^i\phi_p^{-1})_+,(\phi_p\partial^n\phi_p^{-1})_+]\phi_p^{-1}\textsf{d} p+\partial^n\phi^{-1}\phi_{x_i}\bigg\},
\end{equation}

\begin{equation}
\hat{\mathscr{L}}_{(1jn)}=\res\bigg\{-\int_0^1p^{-1}[(\phi_p\partial^j\phi_p^{-1})_+,(\phi_p\partial^n\phi_p^{-1})_+]\phi_p^{-1}\textsf{d} p+\partial^n\phi^{-1}\phi_{x_j}\bigg\},
\end{equation}
and
\begin{equation}\label{MyGD}
\hat{\mathscr{L}}_{(ijn)}=\int\big\{\hat \Gamma_{ijn}+\Delta_{jn,i}+\Delta_{ni,j}+\Theta_{jn,i}+\Theta_{ni,j}+\Lambda_{ijn}\big\}\textsf{d}x
\end{equation}
\end{subequations}
with the constant of integration set to zero, where

\begin{equation}
\begin{split}
\hat \Gamma_{ijn}=\frac{1}{2}\res\big\{&[\phi\partial^n\phi^{-1}\phi_{x_i}\phi^{-1}\phi_{x_j},\phi^{-1}]-[\phi\partial^n\phi^{-1}\phi_{x_j}\phi^{-1}\phi_{x_i},\phi^{-1}]\\
+&[\phi_{x_j},\partial^n\phi^{-1}\phi_{x_i}\phi^{-1}]-[\phi_{x_i},\partial^n\phi^{-1}\phi_{x_j}\phi^{-1}]\big\}
\end{split}
\end{equation}
is equal to $\Gamma_{ijn}$ with $\phi_{x_n}=0$.  The KP multiform \eqref{multiformM} reduces to

\begin{equation}
\textsf{M}_{(n)}=\sum_{1\leq i<j}\hat{\mathscr{L}}_{(ijn)}\textsf{d}x_i\wedge\textsf{d}x_j\wedge\textsf{d}x_n.
\end{equation}
This multiform does not contain any derivatives with respect to $x_n$, so does not allow any motion in the $x_n$ direction, and is equivalent (i.e., produces identical multiform Euler-Lagrange equations) to

\begin{equation}
\hat {\textsf{M}}_{(n)}=\sum_{1\leq i<j}\hat{\mathscr{L}}_{(ijn)}\textsf{d}x_i\wedge\textsf{d}x_j,
\end{equation}
a Lagrangian 2-form for the $n^{th}$ Gelfand-Dickey hierarchy.  As was the case for the KP Lagrangian multiform,  a Lagrangian multiform with all coefficients in the form of \eqref{MyGD} is also a Lagrangian multiform for the $n^{th}$ Gelfand-Dickey hierarchy.

\section{Conclusion}
The Lagrangian multiforms we have presented constitute, in our view, the first instance of establishing the integrability of the KP hierarchy at the Lagrangian level.  In contrast to the Lagrangian multiform for KP hierarchy (up to the $x_4$ flow) that was presented in \cite{ThisPaper},  we now have explicit formulae for the constituent Lagrangians of the Lagrangian multiform for the complete hierarchy, and the constituent Lagrangians are fully local.  In addition, whilst for the Lagrangian multiform in \cite{ThisPaper} the $x_1$ and $x_2$ co-ordinates held a special status (i.e., were treated differently to the other co-ordinates), for the Lagrangian multiform presented here, only $x_1$ holds a special status.  Aspirations for future work include obtaining a Lagrangian multiform for KP that treats every co-ordinate (including $x$) on an equal footing, and also to connect the continuous KP Lagrangian multiform from this paper with the discrete KP Lagrangian multiform given in \cite{Quispel2009}.

\appendix

\section{Multiform Euler-Lagrange equations in terms of variational derivatives}\label{MFEL}

It was first shown in \cite{Suris2016} that $\delta\textsf{dM}=0$ on critical points of a differential form

\begin{equation}\label{appM}
\textsf{M}=\sum_{1\leq i_1<\ldots <i_k\leq N}\mathscr{L}_{(i_1\ldots i_k)}\ \textsf{d}x_{i_1}\wedge\ldots\wedge \textsf{d}x_{i_k}.
\end{equation}
In \cite{Vermeeren2018} and \cite{ThisPaper}, different proofs are given of how the equations given by $\delta\textsf{dM}=0$ can be expressed in terms of variational derivatives of the coefficients $\mathscr{L}_{(i_1\ldots i_k)}$.  In this section, we shall present an alternative proof of this that also gives explicitly the link between the equations in terms of variational derivatives of the $\mathscr{L}_{(i_1\ldots i_k)}$ and the $P_{(i_1\ldots i_{k+1})}$ defined by

\begin{equation}
\textsf{dM}=\sum_{1\leq i_1<\ldots <i_{k+1}\leq N}P_{(i_1\ldots i_{k+1})} \textsf{d}x_{i_1}\wedge\ldots\wedge \textsf{d}x_{i_{k+1}}.
\end{equation}
In terms of the $\mathscr{L}_{(i_1\ldots i_k)}$, 

\begin{equation}\label{AppP}
P_{(i_1\ldots i_{k+1})}=\sum_{\alpha=1}^{k+1} (-1)^{\alpha+1}\D_{x_{i_\alpha}}\mathscr{L}_{(i_1\ldots i_{\alpha-1}i_{\alpha+1}\ldots i_{k+1})}.
\end{equation}
We recall that the multiform Euler-Lagrange equations are given by $\delta \textsf{dM}=0$.  We introduce the notation $I$ to represent the $N$ component multi-index $(i_1,\ldots,i_N)$ such that
\begin{equation}
u_I:=\bigg(\prod _{\alpha=1}^p(\D_{x_\alpha})^{i_\alpha}\bigg)u.
\end{equation}
We shall write $Ik^r$ to denote $(i_1,\ldots,i_k+r,\ldots,i_N)$, $I\backslash k^r$ to denote $(i_1,\ldots,i_k-r,\ldots,i_N)$ and $|I|$ to denote the sum $ i_1+\ldots +i_N$.  This allows us to express the multiform Euler-Lagrange equations are given by $\delta \textsf{dM}=0$ in the form

\begin{equation}\label{pcondition}
\frac{\partial}{\partial u_I}P_{(i_1\ldots i_{k+1})}=0
\end{equation}
for all $1\leq i_1<\ldots <i_{k+1}$ and all multi-indices $I$.  For a fixed choice of $i_1\ldots i_{k+1}$, we shall write $\mathscr{L}_{(\bar \alpha)}$ to denote $\mathscr{L}_{(i_1\ldots i_{\alpha-1}i_{\alpha+1}\ldots i_{k+1})}$.  We then define

\begin{equation}\label{vardivdef}
\frac{\delta\mathscr{L}_{(\bar \alpha)}}{\delta u_{I}}=\sum_{\substack{  J\\j_{i_\alpha}=0}}(-\D)_J\frac{\partial\mathscr{L}_{(\bar \alpha)}}{\partial u_{IJ}},
\end{equation}
where the multi-index $J$ is such that components $j_\alpha=0$ whenever $\alpha\neq i_1,\ldots, i_{k+1}$, i.e. $J$ represents derivatives with respect to $x_{i_1}, \ldots,x_{i_{k+1}}$ only.  We define that $\dfrac{\delta\mathscr{L}_{(\bar \alpha)}}{\delta u_{I}}=0$ in the case where any component of the multi-index $I$ is negative.  Note that by this definition, the variational derivative of the Lagrangian $\mathscr{L}_{(i_1\ldots i_{\alpha-1}i_{\alpha+1}\ldots i_{k+1})}$ with respect to $u_I$ only sees derivatives of $u_I$ with respect to the variables $x_{i_1},\ldots ,x_{ i_{\alpha-1}},x_{i_{\alpha+1}},\ldots, x_{i_{k+1}}$, even though derivatives with respect to other variables may appear in the Lagrangian.  This corresponds with only being able to perform integration by parts with respect to variables that are integrated over in the action.\\

Using the identity

\begin{equation}
\frac{\partial}{\partial u_I}\D_{x_i}=\frac{\partial}{\partial u_{I\backslash i}}+\D_{x_i}\frac{\partial}{\partial u_I}
\end{equation}
tells us that

\begin{equation}
\frac{\partial}{\partial u_I}P_{(i_1\ldots i_{k+1})}=\sum_{\alpha=1}^{k+1}(-1)^{\alpha+1}\bigg(\frac{\partial \mathscr{L}_{(\bar\alpha)}}{\partial u_{I\backslash i_\alpha}}+\D_{x_{i_\alpha}}\frac{\partial \mathscr{L}_{(\bar\alpha)}}{\partial u_{I}}\bigg)
\end{equation}
so

\begin{equation}\label{1stdeltaP}
\begin{split}
\frac{\delta}{\delta u_I}P_{(i_1\ldots i_{k+1})}&=\sum_J(-\D)_J\frac{\partial}{\partial u_{IJ}}P_{(i_1\ldots i_{k+1})}\\&=\sum_J(-\D)_J\sum_{\alpha=1}^{k+1}(-1)^{\alpha+1}\bigg(\frac{\partial \mathscr{L}_{(\bar\alpha)}}{\partial u_{IJ\backslash i_\alpha}}+\D_{x_{i_\alpha}}\frac{\partial \mathscr{L}_{(\bar\alpha)}}{\partial u_{IJ}}\bigg).
\end{split}
\end{equation}
Whenever $j_{i_\alpha}\neq 0$ in this sum, so $J$ is of the form $Ki_\alpha$ for some multi-index $K$, then

\begin{equation}
\pm(-\D)_J\frac{\partial \mathscr{L}_{(\bar\alpha)}}{\partial u_{IJ\backslash i_\alpha}}=\mp\D_{x_{i_\alpha}}(-\D)_K\frac{\partial \mathscr{L}_{(\bar\alpha)}}{\partial u_{IK}}
\end{equation}
will appear in this sum.  When $J=K$, the term

\begin{equation}
\pm(-\D)_K\D_{x_{i_\alpha}}\frac{\partial \mathscr{L}_{(\bar\alpha)}}{\partial u_{IK}}
\end{equation}
will appear.  These two terms cancel, so \eqref{1stdeltaP} simplifies to

\begin{equation}
\begin{split}
\frac{\delta}{\delta u_I}P_{(i_1\ldots i_{k+1})}&=\sum_{\alpha=1}^{k+1}\,\sum_{\substack{ J\\j_{i_\alpha}=0}}(-1)^{\alpha+1}(-\D)_J\frac{\partial \mathscr{L}_{(\bar\alpha)}}{\partial u_{IJ\backslash i_\alpha}}\\&=\sum_{\alpha=1}^{k+1}(-1)^{\alpha+1}\frac{\delta\mathscr{L}_{(\bar \alpha)}}{\delta u_{I\backslash i_{\alpha}}}.
\end{split}
\end{equation}
It follows that if \eqref{pcondition} holds, then

\begin{equation}
\frac{\delta}{\delta u_I}P_{(i_1\ldots i_{k+1})}=\sum_{\alpha=1}^{k+1}(-1)^{\alpha+1}\frac{\delta\mathscr{L}_{(\bar \alpha)}}{\delta u_{I\backslash i_{\alpha}}}=0.
\end{equation}
We have shown that 

\begin{equation}
\delta\textsf{dM}=0\implies \frac{\delta}{\delta u_I}P_{(i_1\ldots i_{k+1})}=\sum_{\alpha=1}^{k+1}(-1)^{\alpha+1}\frac{\delta\mathscr{L}_{(\bar \alpha)}}{\delta u_{I\backslash i_{\alpha}}}=0
\end{equation}
for all $1\leq i_1\leq\ldots\leq i_{k+1}\leq N$ and $I$.  Since

\begin{equation}
\frac{\partial P_{(i_1\ldots i_{k+1})}}{\partial u_I}=\sum_{\substack{  J\\j_i\leq 1}}\D_J\frac{\delta P_{(i_1\ldots i_{k+1})}}{\delta u_{IJ}}
\end{equation}
(a proof of this identity is given in \cite{ThisPaper}) it follows that the converse also holds.  We summarise this result in the following theorem:

\begin{theorem}\label{Apptheorem}
For a differential $k$-form \textsf{M} as given in \eqref{appM}, and $P_{(i_1\ldots i_{k+1})}$ as defined in \eqref{AppP},

\begin{equation}
\frac{\delta}{\delta u_I}P_{(i_1\ldots i_{k+1})}=\sum_{\alpha=1}^{k+1}(-1)^{\alpha+1}\frac{\delta\mathscr{L}_{(\bar \alpha)}}{\delta u_{I\backslash i_{\alpha}}}.
\end{equation}
The set of equations defined by

\begin{equation}
\frac{\delta}{\delta u_I}P_{(i_1\ldots i_{k+1})}=0
\end{equation}
for all $1\leq i_1\leq\ldots\leq i_{k+1}\leq N$ and $I$ is equivalent to the set of equations defined by $\delta\textsf{dM}=0$. 
\end{theorem}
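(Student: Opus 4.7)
The plan is to work directly with the definition \eqref{AppP} of $P_{(i_1\ldots i_{k+1})}$ in terms of the $\mathscr{L}_{(\bar\alpha)}$, compute $\tfrac{\partial}{\partial u_I} P_{(i_1\ldots i_{k+1})}$, then apply $\sum_J (-\D)_J$ to obtain $\tfrac{\delta}{\delta u_I}P_{(i_1\ldots i_{k+1})}$, and exhibit a telescoping cancellation that collapses the double sum to the claimed alternating sum of variational derivatives. The converse direction (that $\tfrac{\delta}{\delta u_I} P=0$ for all $I$ forces $\tfrac{\partial}{\partial u_I} P=0$) then follows from a standard inversion identity expressing ordinary partials in terms of variational ones.

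Concretely, the first step is to apply the Leibniz-type commutation rule
\begin{equation}
\frac{\partial}{\partial u_I}\D_{x_{i_\alpha}} = \frac{\partial}{\partial u_{I\backslash i_\alpha}} + \D_{x_{i_\alpha}}\frac{\partial}{\partial u_I}
\end{equation}
termwise to \eqref{AppP}, obtaining a clean expression for $\tfrac{\partial}{\partial u_I} P_{(i_1\ldots i_{k+1})}$ as an alternating sum of two kinds of contributions from each $\mathscr{L}_{(\bar\alpha)}$. Then I would apply $\sum_J (-\D)_J$, where $J$ ranges only over multi-indices supported on $\{x_{i_1},\ldots,x_{i_{k+1}}\}$, and reorganise: for each $\alpha$, split the sum into those $J$ with $j_{i_\alpha}=0$ and those with $j_{i_\alpha}\geq 1$. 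Writing $J=K i_\alpha$ in the latter case makes manifest that the contribution $(-\D)_{Ki_\alpha}\,\tfrac{\partial \mathscr{L}_{(\bar\alpha)}}{\partial u_{IK}}$ arising from the ``$\D_{x_{i_\alpha}}\tfrac{\partial}{\partial u_I}$'' piece cancels exactly against the matching ``$\tfrac{\partial}{\partial u_{I\backslash i_\alpha}}$'' piece with $J\to Ki_\alpha$. Only the $j_{i_\alpha}=0$ terms survive, and these reassemble precisely into $\sum_\alpha (-1)^{\alpha+1}\tfrac{\delta \mathscr{L}_{(\bar\alpha)}}{\delta u_{I\backslash i_\alpha}}$ via definition \eqref{vardivdef}, establishing the displayed identity.

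For the equivalence of the two systems of equations, one direction is immediate: if $\tfrac{\partial}{\partial u_I}P_{(i_1\ldots i_{k+1})}=0$ for every $I$, applying $\sum_J(-\D)_J$ yields $\tfrac{\delta}{\delta u_I}P_{(i_1\ldots i_{k+1})}=0$. For the converse, I would invoke the identity
\begin{equation}
\frac{\partial P_{(i_1\ldots i_{k+1})}}{\partial u_I} = \sum_{\substack{J\\ j_i\leq 1}}\D_J\frac{\delta P_{(i_1\ldots i_{k+1})}}{\delta u_{IJ}},
\end{equation}
quoted from \cite{ThisPaper}, which immediately shows that vanishing of all variational derivatives of $P$ forces vanishing of all partial derivatives, hence $\delta\textsf{dM}=0$.

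The main obstacle is bookkeeping in the telescoping step: one must carefully track the signs $(-1)^{\alpha+1}$ and $(-1)^{|J|}$ and check that when $J$ is replaced by $Ki_\alpha$ the sign flip from the extra $(-\D_{x_{i_\alpha}})$ precisely kills the contribution from the $\D_{x_{i_\alpha}}\tfrac{\partial}{\partial u_I}$ term. Provided this index arithmetic is handled consistently, everything else is formal manipulation using the definition of the variational derivative as a finite sum of total derivatives of partials (restricted to the directions transverse to the missing index $i_\alpha$), so no further analytic or algebraic input is required.
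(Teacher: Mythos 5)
Your proposal is correct and follows essentially the same route as the paper's own proof in Appendix \ref{MFEL}: the same commutation identity $\frac{\partial}{\partial u_I}\D_{x_{i_\alpha}}=\frac{\partial}{\partial u_{I\backslash i_\alpha}}+\D_{x_{i_\alpha}}\frac{\partial}{\partial u_I}$, the same pairing of $J=Ki_\alpha$ terms against $J=K$ terms to effect the cancellation, and the same inversion identity from \cite{ThisPaper} for the converse. No gaps.
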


\begin{corollary}\label{AppCoroll}
A corollary of Theorem \ref{Apptheorem} is that

\begin{equation}
\frac{\delta}{\delta u_{x_{i_\alpha}}}P_{(i_1\ldots i_{k+1})}=(-1)^{\alpha+1}\frac{\delta\mathscr{L}_{(i_1\ldots i_{\alpha-1}i_{\alpha+1}\ldots i_{k+1})}}{\delta u},
\end{equation}
so the usual Euler-Lagrange equations of each Lagrangian coefficient in $M$ can be expressed in terms of variational derivatives of the coefficients of $\textsf{dM}$.
\end{corollary}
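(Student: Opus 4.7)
The plan is to derive Corollary \ref{AppCoroll} directly from Theorem \ref{Apptheorem} by specializing the multi-index $I$ in the identity established there. Since the corollary concerns $\delta/\delta u_{x_{i_\alpha}}$, the natural choice is to take $I$ to be the multi-index with a $1$ in position $i_\alpha$ and $0$ in every other position, so that $u_I = u_{x_{i_\alpha}}$.

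With this choice I would substitute into
\begin{equation*}
\frac{\delta}{\delta u_I}P_{(i_1\ldots i_{k+1})}=\sum_{\beta=1}^{k+1}(-1)^{\beta+1}\frac{\delta\mathscr{L}_{(\bar \beta)}}{\delta u_{I\backslash i_{\beta}}}
\end{equation*}
and examine which terms survive. For $\beta = \alpha$ the multi-index $I\backslash i_\alpha$ is identically zero, so $u_{I\backslash i_\alpha} = u$ and this summand contributes $(-1)^{\alpha+1}\,\delta\mathscr{L}_{(\bar \alpha)}/\delta u$, which is exactly the right-hand side of the corollary. For $\beta\neq\alpha$ the multi-index $I\backslash i_\beta$ has a $-1$ in position $i_\beta$; by the convention fixed immediately after \eqref{vardivdef}, the variational derivative with respect to any multi-index with a negative entry is defined to be zero. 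All of those off-diagonal summands therefore drop out.

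Only the $\beta=\alpha$ term survives, producing the claimed identity. The second assertion of the corollary, namely that the standard Euler-Lagrange equations of each $\mathscr{L}_{(\bar\alpha)}$ can be read off from variational derivatives of the coefficients $P_{(i_1\ldots i_{k+1})}$ of $\textsf{dM}$, is then immediate: setting $\delta P_{(i_1\ldots i_{k+1})}/\delta u_{x_{i_\alpha}} = 0$ is, up to the sign $(-1)^{\alpha+1}$, exactly $\delta\mathscr{L}_{(\bar\alpha)}/\delta u = 0$.

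There is essentially no obstacle here: the corollary is a one-line specialization of Theorem \ref{Apptheorem}. The only subtlety worth highlighting in the write-up is the convention that negative multi-indices annihilate the variational derivative, since this is precisely the mechanism that collapses the $(k+1)$-term sum down to a single term and so makes the correspondence between the two types of Euler-Lagrange equations diagonal rather than mixed.
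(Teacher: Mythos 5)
Your proposal is correct and is precisely the intended argument: specializing the identity of Theorem \ref{Apptheorem} to the multi-index $I$ with a single $1$ in position $i_\alpha$, and using the stated convention that variational derivatives with respect to multi-indices having a negative entry vanish, which kills every summand except $\beta=\alpha$. The paper presents the corollary as an immediate consequence without spelling this out, so there is nothing to add.
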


\section{Explicit form of the KP Lagrangian multiform}\label{longLagrangians}

Here we present the first four Lagrangians of the KP Lagrangian multiform \textsf{M} and $\tilde{\textsf{M}}$, expressed in terms of the $\varphi_\beta$ that constitute $\phi$.  In order to avoid notational confusion over the use of subscripts, we let $U=\varphi_0$, $V=\varphi_1$, $W=\varphi_2$ and $X=\varphi_3$.  The following Lagrangians were found using Maple and PSEUDO \cite{Brunelli}.  In order to obtain $\mathscr{L}_{(234)}$, a Maple procedure based on \eqref{hproc} was used. 

\begin{dmath}
\mathscr{L}_{{(123)}} =-U_{{{ xxx_3}}}+X_{{x_2}}-VU_{{{ xx_2}}}-WU_{{x_2}}-VV_{{x_2}}-{U}^{2}U_{{x_3}}+VU_{{x_3}}+UU_{{{ xx_3}}}
+{U}^{2}U_{{{ xx_2}}}+UV_{{x_3}}+{U}^{2}V_{{x_2}}-UU_{{{ xxx_2}}}-{U}^{3}U_{{x_2}}-UW_{{x_2}}-2UV_{{{ xx_2}}}-3V_{{x}}U_{{x_2}}-3U_{{{ xx}}}U_{{x_2}}
+2U_{{x}}U_{{x_3}}-3U_{{x}}V_{{x_2}}-3U_{{x}}U_{{{ xx_2}}}-W_{{x_3}}+U_{{{ xxxx_2}}}-\frac{3}{2}UV_{{{ xxx}}}
-\frac{3}{2}U_{{{ xxx}}}V-3V_{{{ xx}}}V-\frac{3}{2}{U_{{x}}}^{2}{U}^{2}+2U_{{{ xxx}}}{U}^{2}+2V_{{{ xx}}}{U}^{2}+2{U_{{x}}}^{2}V-\frac{1}{2}UU_{{{ xxxx}}}
-\frac{3}{2}U_{{x}}U_{{{ xxx}}}-3U_{{x}}V_{{{ xx}}}-\frac{3}{2}U_{{{ xx}}}{U}^{3}+2{U_{{x}}}^{3}+3W_{{{ xx_2}}}-2V_{{{ xx_3}}}+3V_{{{ xxx_2}}}+5UU_{{x}}U_{{x_2}}
+2UVU_{{x_2}}+3U_{{{ xx}}}U_{{x}}U+2U_{{{ xx}}}VU,
\end{dmath}

\begin{dmath}
\widetilde{\mathscr{L}}_{{(123)}} =2U^2U_{x x x} + 3UU_{x}U_{x x} + 2U_{x}^3 + \frac{1}{2}U_{x_2}V_{x} - \frac{1}{2}U_{x}V_{x_2} - 2U^2U_{x x_2} + \frac{3}{2}VU_{x x_2} + \frac{3}{2}UU_{x x x_2} + 2U_{x}U_{x x_2} - \frac{3}{2}VU_{x x x} - \frac{3}{2}UV_{x x x} - \frac{3}{2}U^3U_{x x} - 3U_{x}V_{x x} - \frac{3}{2}U_{x}U_{x x x} - \frac{1}{2}U_{x_2}U_{x x} - \frac{1}{2}UU_{x x x x} - UU_{x}U_{x_2} - UU_{x x_3} + 2VU_{x}^2 + 2U^2V_{x x} - 3VV_{x x} - \frac{3}{2}U^2U_{x}^2 + \frac{3}{2}UV_{x x_2} + 2UU_{x x}V,
\end{dmath}

\begin{dmath}
\mathscr{L}_{{(134)}} = -6U_{{{ xx}}}V_{{{ xxx}}}-\frac{3}{2}U_{{x}}U_{{{ xxxxx}}}
-5U_{{x}}V_{{{ xxxx}}}-6U_{{x}}W_{{{ xxx}}}-4V_{{x}}U_{{{ xxxx}}}+U_{{{ xxxxx_2}}}
+40V_{{x}}U_{{x}}U_{{{ xx}}}-6WV_{{{ xxx}}}-12V_{{x}}V_{{{ xxx}}}-4U_{{x}}W_{{x_2}}+Y_{{x_2}}+UW_{{x_3}}-4V_{{x_2}}V_{{x}}
-6V_{{x_2}}U_{{{ xx}}}+8U_{{x_2}}{U_{{x}}}^{2}-4U_{{x_2}}W_{{x}}-6U_{{x_2}}V_{{{ xx}}}-4U_{{x_2}}U_{{{ xxx}}}+\frac{14}{3}{U}^{2}V_{{{ xxxx}}}
+2{U}^{2}U_{{{ xxxxx}}}-2{U}^{5}U_{{{ xx}}}+{\frac {96}{5}{U}^{2}{U_{{x}}}^{3}}+{\frac {12}{5}{U}^{4}V_{{{ xx}}}}+{\frac {24}{5}{U}^{4}U_{{{ xxx}}}}-4{U}^{4}{U_{{x}}}^{2}-\frac{21}{2}{U}^{2}{U_{{{ xx}}}}^{2}
-6{U}^{3}V_{{{ xxx}}}-U_{{{ xxxx_3}}}-3{U}^{3}W_{{{ xx}}}-6W_{{{ xx}}}W-6U_{{{ xx}}}W_{{{ xx}}}-2U_{{{ xx}}}U_{{{ xxxx}}}-\frac{3}{2}UV_{{{ xxxxx}}}
-\frac{9}{2}{U}^{3}U_{{{ xxxx}}}+UU_{{{ xxx_3}}}+2UV_{{{ xx_3}}}+V_{{x_3}}V+U_{{{ xx_3}}}V+WU_{{x_3}}-U_{{{ xx_3}}}{U}^{2}
-V_{{x_3}}{U}^{2}-3UW_{{{ xx_2}}}+U_{{x_3}}{U}^{3}-UX_{{x_2}}-VU_{{{ xxx_2}}}-U_{{{ xx_2}}}W+{U}^{2}U_{{{ xxx_2}}}
-8U_{{x}}V_{{{ xx_2}}}-4U_{{{ xx_2}}}V_{{x}}-6U_{{{ xx_2}}}U_{{{ xx}}}-4U_{{x}}U_{{{ xxx_2}}}+3U_{{x_3}}V_{{x}}+3U_{{x_3}}U_{{{ xx}}}+3V_{{x_3}}U_{{x}}
+3U_{{{ xx_3}}}U_{{x}}-3UV_{{{ xxx_2}}}-UU_{{{ xxxx_2}}}+{U}^{2}W_{{x_2}}-VW_{{x_2}}-V_{{x_2}}{U}^{3}-V_{{x_2}}W+U_{{x_2}}{U}^{4}
+U_{{x_2}}{V}^{2}-U_{{x_2}}X+2{U}^{2}V_{{{ xx_2}}}-2VV_{{{ xx_2}}}-U_{{{ xx_2}}}{U}^{3}+{\frac {24}{5}{U}^{3}U_{{x}}V_{{x}}}+24{U}^{3}U_{{x}}U_{{{ xx}}}
-5U_{{x_3}}U_{{x}}U-2U_{{x_3}}UV-12V_{{x}}W_{{{ xx}}}+20U_{{x}}{U_{{{ xx}}}}^{2}+16U_{{x}}{V_{{x}}}^{2}+{\frac {34}{3}{U_{{x}}}^{2}U_{{{ xxx}}}}
+8{U_{{x}}}^{2}V_{{{ xx}}}-2UW_{{{ xxxx}}}-3{U_{{x}}}^{4}+2U_{{{ xx_2}}}UV+7U_{{{ xx_2}}}U_{{x}}U+9U_{{x_2}}UU_{{{ xx}}}+7U_{{x_2}}UV_{{x}}
+2U_{{x_2}}WU+6U_{{x_2}}U_{{x}}V-9U_{{x_2}}U_{{x}}{U}^{2}-3U_{{x_2}}{U}^{2}V-X_{{x_3}}+16UU_{{x}}W_{{{ xx}}}+{\frac {46}{3}U_{{{ xxx}}}U_{{x}}V}
+7V_{{x_2}}U_{{x}}U+2V_{{x_2}}UV+{\frac {70}{3}UU_{{x}}V_{{{ xxx}}}}+8U_{{x}}VV_{{{ xx}}}+{\frac {41}{3}UU_{{x}}U_{{{ xxxx}}}}+4U_{{{ xx}}}WV
+12U_{{x}}U_{{{ xx}}}W-12UVU_{{x}}V_{{x}}-42UVU_{{x}}U_{{{ xx}}}-6VW_{{{ xxx}}}-2U_{{{ xxxx}}}W+12{U_{{{ xx}}}}^{2}V+6UU_{{{ xx}}}V_{{{ xx}}}
+12UU_{{{ xxx}}}U_{{{ xx}}}-60U{U_{{x}}}^{2}U_{{{ xx}}}+8U_{{x}}WV_{{x}}+16U_{{{ xx}}}VV_{{x}}+4UV_{{x}}V_{{{ xx}}}+{\frac {28}{3}UU_{{{ xxx}}}V_{{x}}}
-33U{U_{{x}}}^{2}V_{{x}}+12UVV_{{{ xxx}}}-\frac{1}{2}UU_{{{ xxxxxx}}}+4UV_{{{ xx}}}W+{\frac {22}{3}UVU_{{{ xxxx}}}}+8UVW_{{{ xx}}}
+4UU_{{{ xxx}}}W-6U{V}^{2}U_{{{ xx}}}-6U{U_{{x}}}^{2}W-27{U}^{2}U_{{{ xxx}}}U_{{x}}+{\frac {36}{5}{U}^{3}VU_{{{ xx}}}}+{\frac {48}{5}{U}^{2}V{U_{{x}}}^{2}}+6W_{{{ xxx_2}}}
+4V_{{{ xxxx_2}}}-3W_{{{ xx_3}}}-3V_{{{ xxx_3}}}+4X_{{{ xx_2}}}-9{U}^{2}V_{{x}}U_{{{ xx}}}-6{U}^{2}V_{{{ xx}}}V-3{U}^{2}WU_{{{ xx}}}-15{U}^{2}V_{{{ xx}}}U_{{x}}
-12{U}^{2}U_{{{ xxx}}}V+4{U}^{2}W_{{{ xxx}}}-3{U_{{x}}}^{3}V+4U_{{{ xxx}}}{V}^{2}-5VV_{{{ xxxx}}}-\frac{3}{2}VU_{{{ xxxxx}}},
\end{dmath}

\begin{dmath}
\widetilde{\mathscr{L}}_{{(134)}} = -3U_{x}^3V - 4U_{x}^2U^4 + 16U_{x x}V_{x}V - 5VV_{x x x x} + 2UU_{x x x x_3} + 8UVW_{x x} - 6VW_{x x x} - 6U_{x}W_{x x x} - 6U_{x}^2UW - \frac{9}{2}U^3U_{x x x x} - 6U^3V_{x x x} + 2U_{x x_3}W + 24U_{x x}U_{x}U^3 - 2V_{x_3}U_{x x} + \frac{24}{5}U_{x x x}U^4 + \frac{28}{3}UV_{x}U_{x x x} - 6U_{x x}V_{x x x} + 8U_{x}^2V_{x x} + 16U_{x}UW_{x x} - 3U^3W_{x x} - 2U_{x x}U_{x x x x} - 3U^2WU_{x x} - \frac{3}{2}VU_{x x x x x} + 20U_{x}U_{x x}^2 - 6U_{x x}W_{x x} - 2UW_{x x x x} - U_{x}W_{x_3} + 2U^2U_{x x x x x} + \frac{24}{5}U_{x}V_{x}U^3 - 42U_{x x}U_{x}UV - 2U_{x}U_{x x_4} + 3U^3U_{x x_3} + 3UV_{x x x_3} + 3VU_{x x x_3} + 4U_{x x x_3}U_{x} + 4U^2W_{x x x} - 4U^2U_{x x x_3} + 2V_{x x_3}U_{x} + 2VV_{x x_3} - \frac{3}{2}U_{x}U_{x x x x x} - U_{x x}UU_{x_3} - 2WU_{x x x x} + 2U^2U_{x x_4} - 12U_{x}UVV_{x} + 6U_{x x_3}V_{x} - 2U^5U_{x x} + 16V_{x}^2U_{x} + 2UW_{x x_3} + 2U_{x x_3}U_{x x} + 4U_{x x x}V^2 + 12U_{x x}^2V + U_{x_3}W_{x} - 12V_{x}W_{x x} - 4U_{x x x x}V_{x} - \frac{1}{2}V_{x}U_{x_4} + 8U_{x}VV_{x x} - 27U^2U_{x}U_{x x x} + 12V_{x x x}UV + \frac{14}{3}U^2V_{x x x x} + \frac{96}{5}U^2U_{x}^3 + 4UWV_{x x} + \frac{46}{3}U_{x x x}U_{x}V - U_{x_3}U_{x x x} + 12U_{x x}U_{x}W - 5U_{x}V_{x x x x} - 33UV_{x}U_{x}^2 + \frac{22}{3}U_{x x x x}UV - 6WV_{x x x} - \frac{21}{2}U^2U_{x x}^2 - 60U_{x}^2UU_{x x} + UU_{x}U_{x_4} + 3U^2U_{x}U_{x_3} + 8U_{x}V_{x}W + \frac{34}{3}U_{x}^2U_{x x x} - \frac{3}{2}UV_{x x_4} - \frac{3}{2}UU_{x x x_4} + 4U_{x x x}UW - \frac{7}{3}UU_{x_3}V_{x} - 12V_{x x x}V_{x} + 4UV_{x}V_{x x} - \frac{16}{3}UU_{x x_3}V - \frac{3}{2}VU_{x x_4} + \frac{70}{3}U_{x}V_{x x x}U - \frac{4}{3}VU_{x}U_{x_3} - 12U_{x x x}VU^2 - 3U_{x}^4 - 6U^2VV_{x x} - 15U^2U_{x}V_{x x} + \frac{48}{5}U_{x}^2VU^2 - \frac{35}{3}UU_{x}U_{x x_3} - \frac{1}{3}UU_{x}V_{x_3} - 6UV^2U_{x x} - \frac{4}{3}U_{x}^2U_{x_3} - 9U_{x x}V_{x}U^2 - \frac{8}{3}U^2V_{x x_3} + \frac{36}{5}U^3VU_{x x} + 40U_{x}U_{x x}V_{x} - 6WW_{x x} + 4VWU_{x x} + 12UU_{x x x}U_{x x} + 6U_{x x}UV_{x x} + \frac{41}{3}UU_{x}U_{x x x x} + \frac{12}{5}U^4V_{x x} + \frac{1}{2}U_{x x}U_{x_4} - \frac{1}{2}UU_{x x x x x x} + \frac{1}{2}U_{x}V_{x_4} - \frac{3}{2}UV_{x x x x x},
\end{dmath}

\begin{dmath}
\mathscr{L}_{{(142)}} = 6{U}^{3}U_{{{ xxx}}}+4{U}^{3}V_{{{ xx}}}-{\frac {24}{5}{U}^{3}{U_{{x}}}^{2}}-{\frac {16}{5}{U}^{4}U_{{{ xx}}}}+2U_{{{ xx}}}U_{{{ xxx}}}
-U_{{{ xxxxx_2}}}+4U_{{{ xx}}}V_{{{ xx}}}-16VU_{{{ xx}}}U_{{x}}-{\frac {20}{3}UU_{{{ xx}}}V_{{x}}}-\frac{16}{3}VV_{{x}}U_{{x}}
-16UU_{{{ xxx}}}U_{{x}}-{\frac {44}{3}UV_{{{ xx}}}U_{{x}}}+4U_{{x}}W_{{x_2}}+U_{{{ xxx_3}}}+{U}^{2}U_{{x_3}}-VU_{{x_3}}
-UU_{{{ xx_3}}}-UV_{{x_3}}-2U_{{x}}U_{{x_3}}+W_{{x_3}}-Y_{{x_2}}+4V_{{x_2}}V_{{x}}+6V_{{x_2}}U_{{{ xx}}}-8U_{{x_2}}{U_{{x}}}^{2}
+4U_{{x_2}}W_{{x}}+6U_{{x_2}}V_{{{ xx}}}+4U_{{x_2}}U_{{{ xxx}}}+2V_{{{ xx_3}}}+3VU_{{{ xxxx}}}+8VV_{{{ xxx}}}+4VW_{{{ xx}}}
-\frac{8}{3}W{U_{{x}}}^{2}+12U{U_{{x}}}^{3}-6U{U_{{{ xx}}}}^{2}+4V_{{{ xx}}}W-4{U}^{2}U_{{{ xxxx}}}-{\frac {20}{3}{U}^{2}V_{{{ xxx}}}}-\frac{8}{3}{U}^{2}W_{{{ xx}}}+2U_{{{ xxx}}}W
-\frac{8}{3}{V}^{2}U_{{{ xx}}}-\frac{8}{3}UU_{{{ xx}}}W-{\frac {28}{3}UU_{{{ xxx}}}V}-8UV_{{{ xx}}}V+8UV{U_{{x}}}^{2}+4{U}^{2}V_{{x}}U_{{x}}+3UW_{{{ xx_2}}}
+UX_{{x_2}}+VU_{{{ xxx_2}}}+U_{{{ xx_2}}}W-{U}^{2}U_{{{ xxx_2}}}+8U_{{x}}V_{{{ xx_2}}}+4U_{{{ xx_2}}}V_{{x}}+6U_{{{ xx_2}}}U_{{{ xx}}}
+4U_{{x}}U_{{{ xxx_2}}}+3UV_{{{ xxx_2}}}+UU_{{{ xxxx_2}}}-{U}^{2}W_{{x_2}}+VW_{{x_2}}+V_{{x_2}}{U}^{3}+V_{{x_2}}W-U_{{x_2}}{U}^{4}
-U_{{x_2}}{V}^{2}+U_{{x_2}}X-2{U}^{2}V_{{{ xx_2}}}+2VV_{{{ xx_2}}}+U_{{{ xx_2}}}{U}^{3}+3U_{{x}}U_{{{ xxxx}}}+4U_{{x}}W_{{{ xx}}}+8U_{{x}}V_{{{ xxx}}}
+4U_{{{ xxx}}}V_{{x}}+8V_{{{ xx}}}V_{{x}}-{\frac {32}{3}V_{{x}}{U_{{x}}}^{2}}-16{U_{{x}}}^{2}U_{{{ xx}}}-2U_{{{ xx_2}}}UV-7U_{{{ xx_2}}}U_{{x}}U
-9U_{{x_2}}UU_{{{ xx}}}-7U_{{x_2}}UV_{{x}}-2U_{{x_2}}WU-6U_{{x_2}}U_{{x}}V+9U_{{x_2}}U_{{x}}{U}^{2}+3U_{{x_2}}{U}^{2}V
-7V_{{x_2}}U_{{x}}U-2V_{{x_2}}UV+UU_{{{ xxxxx}}}+3UV_{{{ xxxx}}}-6W_{{{ xxx_2}}}-4V_{{{ xxxx_2}}}-4X_{{{ xx_2}}}+2UW_{{{ xxx}}}
+22{U}^{2}U_{{{ xx}}}U_{{x}}+8{U}^{2}VU_{{{ xx}}},
\end{dmath}

\begin{dmath}
\widetilde{\mathscr{L}}_{{(142)}} = 6U^3U_{x x x} + \frac{1}{3}UU_{x}V_{x_2} + \frac{7}{3}UU_{x_2}V_{x} - \frac{16}{3}U_{x}V_{x}V - \frac{20}{3}U_{x x}UV_{x} - 2UU_{x x x x_2} - \frac{8}{3}V^2U_{x x} + 8U^2VU_{x x} - 2U_{x x_2}W + 2V_{x_2}U_{x x} - 16U_{x}^2U_{x x} + \frac{8}{3}U^2V_{x x_2} + 3V_{x x x x}U + 2W_{x x x}U + \frac{4}{3}VU_{x}U_{x_2} - \frac{8}{3}U_{x}^2W + 4U^3V_{x x} + 12UU_{x}^3 + U_{x}W_{x_2} + \frac{4}{3}U_{x}^2U_{x_2} + 4WV_{x x} + 3VU_{x x x x} + 8VV_{x x x} + 4VW_{x x} - 3U^3U_{x x_2} - 3UV_{x x x_2} - 3VU_{x x x_2} - 4U_{x x x_2}U_{x} + 4U^2U_{x x x_2} - 2V_{x x_2}U_{x} - 2VV_{x x_2} + U_{x x}UU_{x_2} - 6U_{x x_2}V_{x} - 2UW_{x x_2} - 2U_{x x_2}U_{x x} + 2U_{x x}U_{x x x} - U_{x_2}W_{x} + 8U_{x}V_{x x x} - 4U^2U_{x x x x} + 2WU_{x x x} + 4U_{x x}V_{x x} + 4U_{x x x}V_{x} + U_{x_2}U_{x x x} + 3U_{x}U_{x x x x} + 8V_{x}V_{x x} - 8UVV_{x x} - 16U_{x x}U_{x}V - 3U^2U_{x}U_{x_2} - \frac{32}{3}U_{x}^2V_{x} - \frac{8}{3}UWU_{x x} + \frac{35}{3}UU_{x}U_{x x_2} - \frac{28}{3}U_{x x x}UV - 16U_{x x x}UU_{x} + UU_{x x_4} + 4U_{x}W_{x x} + \frac{16}{3}UU_{x x_2}V - \frac{44}{3}UU_{x}V_{x x} - \frac{24}{5}U^3U_{x}^2 + 22U^2U_{x}U_{x x} + 4U_{x}V_{x}U^2 - \frac{16}{5}U^4U_{x x} + UU_{x x x x x} - \frac{8}{3}U^2W_{x x} - 6U_{x x}^2U - \frac{20}{3}U^2V_{x x x} + 8UU_{x}^2V,
\end{dmath}

\begin{dmath*}
\mathscr{L}_{(234)} = -12U_{x_2}UVU_{xx} - 9U^2U_{xx_2}V_{x} + 14U_{x}V_{x}V_{xx} + 4UU_{x}W_{xxx} + U_{x}VV_{xxx} + 2UU_{xx}^2V + 3UU_{x}V_{xxxx} + UU_{x}U_{xxxxx} + 12U_{x}V_{x}W_{x} + 4U_{x}U_{xx}U_{xxx} + 6U_{x}VW_{xx} + UU_{xx_2}U_{x_3} + U_{x_2}V_{x_3}U - U_{x_2}U_{xx_3}U - UV_{x_2}U_{x_3} - U_{xxxx_2}U_{xx} + \frac{14}{3}U^2V_{xxxx_2} - 3U_{xx_2}X_{x} + 8U_{x}W_{xx_2}U + \frac{18}{5}U_{x_2}U^3V_{x} + 6V_{x}V_{xx_3} - 5UV_{xx}U_{xxx} - \frac{8}{3}UU_{xx_3}W + \frac{2}{3}U_{x}^2V_{x_3} + 2U_{xxx_3}U_{xx} - 6U_{x}^2U_{xx_3} - 3UV_{x_2}U_{x}^2 + 8UU_{x}^2U_{x_3} + 6U^2U_{x_3}U_{xx} - 8U_{xx_3}UU_{xx} + 4U_{x_3}V_{x}U^2 - 2U_{xxx}U_{x_3}U + 6U_{x_2}V_{x}U_{xx} + \frac{23}{3}U_{xxx_2}U_{x}^2 - 2W_{x_2}UU_{xx} + 3W_{x_2}U^2U_{x} - \frac{1}{2}U_{xx}U_{xxxxx} - VU_{xx}U_{xxx} - \frac{1}{2}U_{xx_4}U_{xx} + \frac{6}{5}U^3V_{x_2}U_{x} - \frac{8}{3}U_{x}WU_{x_3} - \frac{1}{2}U_{x_2}U_{xx_4} - \frac{3}{2}VU_{xxx_4} + 11U_{xx_2}VV_{x} + U_{x_2}U_{xxx_3} - 2V_{x_3}U_{xx_2} + 2U_{xx_3}V_{x_2} + W_{x_2}U_{x_3} - U_{xxx_2}U_{x_3} - U_{x_2}W_{x_3} + 8UVW_{xx_2} - \frac{7}{2}U_{x}V_{xxxx_2} - 2UU_{xx}W_{xx} - 6WW_{xx_2} - \frac{4}{3}U_{x_3}W_{x}U + 6UU_{xx}U_{x}W + UU_{xxx}U_{x}V + 6UV_{xx}U_{x}V - 29UU_{x}V_{x}U_{xx} + 16U^2U_{x}U_{xx_3} - 12UU_{x}U_{xxx_3} - 8UV_{xx_3}V + 8U^2U_{xx_3}V - 6U_{x}V^2U_{xx} + \frac{25}{3}U_{xxx_2}UU_{xx} - 6UV_{x_2}U_{x}V - 33UU_{xx_2}U_{x}V - 6U_{x_2}VUV_{x} - 4U_{x_2}U^4U_{x} - \frac{3}{2}V_{xx_4}U_{x} + 6U_{x}^3W + 11U_{xx_2}U_{x}W + 2V_{x_2}VU_{xx} - \frac{3}{2}UV_{xxx_4} - 6V_{xx_2}U^2U_{x} + 5V_{xx_2}U_{x}V + V_{xx_2}UV_{x} + 13U_{xx_2}VU_{xx} + V_{xx_2}UU_{xx} + 8UU_{xx_2}W_{x} - \frac{8}{3}U_{xx_3}V^2 + 2U_{x_2}VW_{x} - \frac{9}{2}U^3U_{xxxx_2} - 3VV_{xx_4} + 4V_{xx_3}W + 4W_{xx_3}V + 8V_{xxx_3}V + 2U^2V_{xx_4} + 3U_{xxxx_3}V - U_{xxx_4}U_{x} + 2U^2U_{xxx_4} + U_{x_4}U_{x}^2 - \frac{22}{3}U_{x}U_{x_3}V_{x} - \frac{1}{2}V_{xx_2}U_{xxx} - \frac{5}{2}U_{xx}V_{xxx_2} - 4U_{xxx_2}W_{x} - 3U_{x_2}U^2W_{x} - \frac{8}{3}U^2W_{xx_3} - 2V_{xx_2}V_{xx} + 3V_{xx_2}U_{x}^2 + UU_{xx}U_{xxxx} - 3U_{x}^3U_{x_2} - 4U_{x}U_{xx}U_{x_3} - \frac{3}{2}V_{x}U_{xx_4} - 2U^5U_{xx_2} - 6U^3V_{xxx_2} + 3V_{x}^2U_{xx} - 7U_{x}^2V_{xxx} - 2U_{x}^2U_{xxxx} + 8U_{x}^3U_{xx} - V_{x_4}U_{x}U - 27UU_{xx_2}U_{x}^2 + 2U_{x_4}VU_{x} + U_{x_4}V_{x}U + 2U_{xx_4}U_{x}U + 2U_{xx_4}UV - \frac{1}{2}U_{xxxx_4}U + 10UU_{xx_2}V_{xx} + U_{x_4}U_{xx}U + 14U_{xxx_2}U_{x}V + 8U_{xxx_2}UV_{x} + 2U_{x_2}UW_{xx} + 6U_{x_2}VV_{xx} - 2W_{x_2}U_{x}V - 2W_{x_2}UV_{x} - 6U_{x_2}U_{x}^2V + U_{x_2}UU_{xxxx} + 12U_{x}^2U^2V_{x} - 6UU_{x}V_{x}^2 - 2W_{xxxx_2}U - 6U_{x}^2V_{x}V + 6VU_{xx}W_{x} - 18UU_{x}U_{xx}^2 + V_{x}VU_{xxx} - 2U_{xxx}U^2V_{x} + \frac{2}{3}U_{xx}^2U_{x_2} + \frac{20}{3}U_{xxx}UU_{xx_2} - 6U^2U_{x_2}V_{xx} + 4U_{xx}U^2V_{xx} - 6V_{x}VV_{xx} + 3UV_{x}U_{xxxx} - 4U_{x}W_{xxx_2} + 4U_{xx}^3 - 2V_{x}^3 + 4U^3V_{xx_3} + 2U_{xxx_3}W - 3V_{xx_2}W_{x} - \frac{28}{3}U_{xx_3}UV_{x} + \frac{2}{3}V_{x_3}UU_{xx} - \frac{24}{5}U^3U_{x_3}U_{x} - 2WU_{xxxx_2} + 2U^2U_{xxxxx_2} - 4UU_{x}^2U_{xxx} - 5V_{xxx}U^2U_{x} - \frac{16}{5}U^4U_{xx_3} + 5U_{x_2}U_{x}W_{x} - \frac{3}{2}U^3U_{xx_4} + 6U_{x_2}U_{x}V_{xx} + 8U_{x}U_{xx}V_{xx} + 11U_{x}V_{x}U_{xxx} + 9U_{x}U_{xx}W_{x} + 4U_{x_2}V_{x}W - \frac{3}{2}UV_{xxxxx_2} - 2W_{xx_2}U_{xx} - \frac{9}{2}U_{xxx}U^2U_{x_2} + 8U_{x_3}UU_{x}V + \frac{36}{5}U^3U_{xx_2}V + 3UV_{xxxx_3} + 2UW_{xxx_3} - 12U_{xxx_2}U^2V + 4U^2W_{xxx_2} + \frac{24}{5}U_{xxx_2}U^4 - \frac{1}{2}U_{xxxxxx_2}U - \frac{4}{3}U_{x}VV_{x_3} + 2UV_{x_2}W_{x} + 14UU_{x}V_{xxx_2} - 6W_{xx}U^2U_{x} + 5UV_{x}V_{xxx} + 6UV_{x}W_{xx} + 4UV_{xx_2}W + 5U_{x_2}V_{x}^2 - U_{xxx_2}U_{xxx} - \frac{3}{2}U_{x_4}U_{x}U^2 + \frac{29}{3}UU_{x}U_{xxxx_2} + 4U_{x}V_{x_2}W - U_{x}^2VU_{xx} + 4U_{xx_2}VW + 4U_{xx}U_{x_2}W + UU_{xxxxx_3} - 4U^2U_{xxxx_3} + \frac{31}{3}U_{xx_2}U_{x}U_{xx} - 2W_{x_2}U_{x}^2 - 7V_{x}V_{xxx_2} + 4U_{xxx_2}UW - 2VU_{xx}V_{xx} - 6WV_{xxx_2} - 5U_{xx_2}W_{xx} + 6U^3U_{xxx_3} + 5U_{xxx_3}V_{x} + 2V_{x_2}U_{x}V_{x} + \frac{84}{5}U^3U_{xx_2}U_{x} + 4W_{x}U_{xx_3} + 2U_{x}U_{xxxx_3} + 2U_{x}W_{xx_3} + 4V_{xx}U_{xx_3} + U_{xx_3}U_{xxx} + 2V_{xx_3}U_{xx} + 5V_{xxx_3}U_{x} - 6UU_{xx_2}V^2 - \frac{5}{2}U_{xx_2}V_{xxx}- \frac{7}{2}U_{xxx_2}V_{xx} + \frac{1}{3}UV_{x_2}U_{xxx} - 3U_{xxxx}U^2U_{x} + 15U_{xx_2}U_{x}V_{x} - 6UU_{x}^2V_{xx} + \frac{48}{5}U_{x_2}U^2U_{x}V - 3U_{x}WU_{xxx} - 6U_{x}WV_{xx} - 15U_{xx}UU_{x}U_{x_2} - 15U_{x_2}UU_{x}V_{x} - 5VV_{xxxx_2} - 6VW_{xxx_2} - 8U_{x}^2W_{xx} - W_{x}V_{xxx} + 2V_{xx}W_{xx} + 2V_{xx}V_{xxx} + 3W_{xx}U_{xxx} - \frac{5}{2}U_{xxxx_2}V_{x} - \frac{16}{3}U_{x_3}VU_{xx} - \frac{20}{3}U^2V_{xxx_3} - 3U^3W_{xx_2} + 4U_{xxx_2}V^2 + \frac{1}{2}U_{x_2}V_{x_4} + 12V_{xxx_2}UV - 2UW_{x}U_{xxx}
\end{dmath*}

\begin{dmath}
 - 6UU_{xx}X_{x} + 2U_{xxx}U_{x}U^3 + 3U_{xxx}U^2U_{xx} + \frac{12}{5}U^4V_{xx_2} - \frac{39}{2}U_{xxx_2}U^2U_{x} - 6W_{xx_2}V_{x} - U_{xxxxx_2}U_{x} - \frac{38}{3}U_{x}U_{xx_3}V + \frac{66}{5}U^2U_{x}^2U_{x_2} + \frac{36}{5}U_{xx}U^3U_{x_2} - 6U^2V_{xx_2}V - 2U_{xx}W_{xxx} + 10V_{x}U_{xx}^2 + 6X_{x}V_{xx} + 3X_{x}U_{xxx} - 4V_{x}W_{xxx} - 3V_{x}V_{xxxx} - V_{x}U_{xxxxx} - 6W_{x}W_{xx} - 6U_{x}^2X_{x} + 2V_{xxx}U_{xxx} - 2UU_{xxx}^2 - 2UV_{xx}^2 + 18U_{xx}U_{x}^2U^2 - 4U_{x}^3U^3 - 2U_{xx}^2U^3 - 3U^2U_{xx_2}W - 6UU_{x}^2W_{x} + \frac{4}{3}U_{x}W_{x_3}U + \frac{7}{3}U_{x_2}UV_{xxx} + \frac{1}{2}U_{xxx}U_{xxxx} - \frac{28}{3}U_{xxx_3}UV - 4U_{x_3}VV_{x} - \frac{3}{2}VU_{xxxxx_2} - \frac{1}{2}V_{x_2}U_{x_4} - 6U_{x_2}U_{x}UW - \frac{10}{3}U_{x_3}UV_{xx} + \frac{22}{3}U_{xxxx_2}UV - \frac{22}{3}UU_{x}V_{xx_3} - \frac{3}{2}U_{xx}V_{xxxx} - \frac{27}{2}U^2U_{xx_2}U_{xx} + \frac{5}{3}U_{x_2}U_{x}U_{xxx} - \frac{1}{2}U_{xx_2}U_{xxxx} + \frac{10}{3}U_{x_2}VU_{xxx} + \frac{1}{2}V_{xx}U_{xxxx} + \frac{1}{2}U_{xx_2}U_{x_4}.
\end{dmath}
The Lagrangian $\widetilde{\mathscr{L}}_{{(234)}}$  is identical to ${\mathscr{L}}_{{(234)}}$.  From the Lagrangians given here for $1<i,j\leq 4$, we see that $\widetilde{\mathscr{L}}_{{(1ij)}}$ gives a shorter Lagrangian than $\mathscr{L}_{(1ij)}$.  In general, the difference between $\widetilde{\mathscr{L}}_{{(1ij)}}$ and ${\mathscr{L}}_{{(1ij)}}$ can be expressed as the sum of a total $x_i$ derivative and a total $x_j$ derivative.

\bibliographystyle{unsrt}

\bibliography{KP0720}

\end{document}